\documentclass[11pt]{article}
\usepackage[margin=1in]{geometry}
\usepackage{amsmath,amssymb,amsthm}
\usepackage{mathtools}
\usepackage{graphicx}
\usepackage{enumitem}
\usepackage{microtype}
\usepackage{float}
\usepackage{tikz}
\usepackage{pgfplots}
\pgfplotsset{compat=1.17}

\title{Lund–Regge Geometry and Integrability of a Generalized Konno–Oono System}
\author{Jos\'e Luis D\'iaz Palencia \\
Department of Mathematics and Education\\ Universidad a Distancia de Madrid (UDIMA)\\ Madrid, Spain.\\
joseluis.diaz.p@udima.es \\
and\\
Enrique G. Reyes\\
Departamento de Matem\'atica y Ciencia de la Computaci\'on\\
Universidad de Santiago de Chile\\
Santiago, Chile. \\
enrique.reyes@usach.cl; e\_g\_reyes@yahoo.ca}

\newtheorem{theorem}{Theorem}
\newtheorem{definition}{Definition}
\newtheorem{remark}{Remark}

\newtheorem{proposition}{Proposition}
\newtheorem{corollary}{Corollary}

\begin{document}
\maketitle

\begin{abstract}
We extend recent work on the relation between classical surface theory and partial differential equations, focusing on equations of
pseudo-spherical type in the sense of Chern--Tenenblat and on a non-trivial generalization motivated by the Lund--Regge system describing
surfaces immersed in $S^3$. As our main application, we study a generalized Konno--Oono system with three dependent
variables introduced in a previous paper by one of the authors. We construct an associated parameter-dependent overdetermined 
linear problem and {\em we establish the existence of infinitely many non-trivial local conservation laws}, hence, integrability. The latter result is the most technically demanding result of this paper: 
it requires a refined analysis of a Riccati pseudo-potential expansion, the use of ``stereographic coordinates'' at the full equation manifold level, the construction of special 
representatives, and an explicit cohomological argument  that excludes horizontal exactness. We also analyse an illustrative class of travelling wave solutions and show 
that they can be used to generate surfaces immersed in $S^3$ whose Gaussian curvature changes sign periodically, while their mean curvature are non-vanishing periodic functions.
In a limit case, we obtain surfaces that are locally congruent to generalized Clifford tori.   
\end{abstract}

\section{Introduction}

A classical source of interaction between geometry and integrable partial differential equations is the 
construction of interesting nonlinear evolution equations from surface theory ({\em e.g.}, the sine-Gordon equation)
 as well as the ``inverse problem" of realizing that a given equation of importance for Mathematical Physics ({\em 
 e.g.}, the Korteweg-de Vries or Camassa-Holm equations) can be understood as a necessary and 
 sufficient condition of geometric origin. We begin this introduction by clarifying these two directions of research.

Let us consider three papers from the early days of modern research on interactions between 
differential geometry and integrability: the paper \cite{L} on the Lund-Regge equation, Sasaki's paper \cite{S}, 
and G\"urses and Nutku's paper \cite{GN}:

\begin{enumerate}
\item In \cite{L}, the author shows that the Lund-Regge equation (which we present explicitly later on) can 
be understood as a Gauss-Codazzi equation determining the (local) immersion of a surface into the three-sphere
$S^3$. As a consequence, the Lund-Regge equation is the integrability condition of an $so(3)$-valued 
overdetermined linear problem. This is an example of a ``necessary and sufficient condition" of geometric origin. 
\item Similarly, in \cite{GN} the authors recall a classical differential geometric construction by Guichard that
produces generalizations of pseudo-spherical surfaces, and they construct an $sl(2)$-valued connection 
one-form that is flat on solutions to an interesting system that contains the Korteweg-de Vries equation as a 
special case. Some further comments on \cite{GN} appear in \cite{Reyes2000}. 
\item Finally, in \cite{S}, the author shows that $sl(2)$-valued connection one-forms which are flat on
solutions to an equation solvable by AKNS scattering/inverse scattering, can be used to construct Riemannian metrics of
Gaussian curvature $K=-1$. This is another example of a ``necessary and sufficient condition" of geometric origin.
\end{enumerate}

In this paper we consider two geometric structures that in a sense reverse the perspective of the examples above. Roughly (formal definitions appear in Section 2), we 
say that an equation (or system of equations) is of pseudo-spherical type if it is the necessary and sufficient 
condition for the existence
of a Riemannian metric of Gaussian curvature $K=-1$ (an {\em intrinsic} condition), and that it is of Lund-Regge 
type if it is the necessary and sufficient condition for the existence of surfaces locally immersed in $S^3$ (an
{\em extrinsic} condition). For the first structure, a short summary of its main characteristics and 
applications will suffice: it is already well-known, as it was introduced by S.S. 
Chern and K. Tenenblat in 1986, see \cite{CT}. For the second structure we go into more detail since it has 
been introduced only recently, see \cite{BHR}, and much remains to be done: in addition to presenting the main features 
of this new structure and showing how it is motivated by the Chern and Tenenblat theory, we study the 
integrability and travelling wave dynamics of one of the
examples appearing in \cite{BHR}, a generalized form of the Konno--Oono equation given by 
\begin{align*}
  q_t + 2 r r_x &= 0  \\
  r_{xt} - 2qr - 4r\delta_x\delta_t &= 0 \\ \tag{KO}
  \delta_{xt} r + r_t\delta_x + r_x\delta_t &= 0\; . 
\end{align*}

The system (KO) is a natural object of study since it is, to the best of our knowledge, a new example of an integrable system of partial differential equations with an interesting 
geometric interpretation. This system is not presented merely as another equation 
admitting a zero-curvature representation: one of our main technical contributions in this paper is the proof that infinitely many conservation laws 
produced by a Riccati pseudopotential expansion associated to (KO) are genuinely non-trivial (and linearly independent) in horizontal cohomology. This proof explicitly uses the
cohomological interpretation of conservations laws (see for instance \cite[Section 4.3]{O}). It requires a detailed 
analysis of a quadratic pseudopotential formally deduced from its zero curvature representation, an understanding of the geometry of the full equation manifold of (KO), 
the selection of a suitable subsequence of horizontally closed one-forms, the construction of appropriate representatives, and an explicit cohomological non-triviality argument.
Thus, our arguments are essentially different to other checks of non-triviality appearing in the literature, see for instance \cite{R-ch}.
Our constructions also show that the original Konno--Oono system describes minimal spherical surfaces immersed in $S^3$, and that some 
solutions to our system determine immersed surfaces in $S^3$ whose Gaussian curvatures change sign periodically and whose mean curvature is a periodic nonzero function. 

\medskip

We include an analysis of travelling wave solutions for several reasons. First, they provide a natural finite-dimensional reduction of this system to 
autonomous second-order ODEs, making it possible to study dynamics by means of phase-plane methods and the 
classical energy integral of one-dimensional conservative mechanics \cite{Arnold}. 
Second, in the integrable setting, travelling waves are coherent structures: they provide an explicit class of solutions 
through which we can test how the integrability of the generalized Konno--Oono system is reflected in 
concrete wave profiles \cite{KO,DrazinJohnson}. Third, and most importantly in a geometric context, these
solutions allow us to track directly how the geometry of the associated surfaces in $S^3$ evolves along a distinguished symmetry reduction.
Indeed, periodic oscillations of $R(\xi)=r(x- vt)$, see Section 5, determine the family of immersed surfaces in $S^3$ mentioned in the previous paragraph; 
in addition, in the small-amplitude limit, the periodic orbit collapses to the constant-amplitude travelling wave
$R(\xi)\equiv R_*$, whose associated immersed surface has identically vanishing Gaussian curvature and nonzero constant mean curvature.

\medskip

The first part of this paper (essentially Section 2) is partially based on the contribution of the second named author 
to the G\"urses-Fest Conference (Bilkent University, December 2025) honouring Prof. Metin G\"urses, see \cite{Reyes-gurses}. In this section we  
review the classical theory of equations of pseudo-spherical type, emphasizing the geometric mechanisms that produce zero-curvature representations, conservation laws, pseudo-potentials, and 
generalized B\"acklund transformations. This motivating review allows us to turn, in Section~3, to systems of Lund--Regge type. 
We recall their main characteristics ---zero-curvature representations and formal production of conservation laws--- and relate this class of equations to the Fokas--Gelfand viewpoint
\cite{FokasGelfand1996}. In Section~4, we introduce the generalized Konno--Oono system, which will serve as the main example of our
Lund--Regge framework. We derive an associated $\mathfrak{so}(3,\mathbb{R})$-valued linear problem 
for (KO), and prove that the generalized Konno--Oono system admits infinitely many non-trivial local conservation laws. 
Sections~5 and~6 are devoted to the travelling wave reduction and its consequences. We consider a distinguished subclass of travelling waves and show that the reduced dynamics is governed by a 
one-dimensional conservative system; we also identify a family of periodic orbits through phase-plane analysis, and relate the corresponding solutions to the immersed surfaces in $S^3$
that they describe. Finally, in Section 7, we sketch a numerical scheme that permits us to compute travelling wave solutions of the generalized Konno--Oono system and to illustrate our theoretical results.

\section{Equations of pseudo-spherical type}

\subsection{The main definitions}

\begin{definition}
A two-dimensional manifold $M$ is called a \emph{pseudo-spherical surface} if
there exist one-forms $\omega^1,\omega^2,\omega^3$ on $M$ such that
$\omega^1\wedge\omega^2\neq 0$ and

\begin{equation}\label{eq:pss-structure}
  d\omega^1=\omega^3\wedge\omega^2,\qquad
  d\omega^2=\omega^1\wedge\omega^3,\qquad
  d\omega^3=\omega^1\wedge\omega^2.
\end{equation}
\end{definition}

If $M$ is pseudo-spherical, the metric $ds^2=(\omega^1)^2+(\omega^2)^2$ 
has constant Gaussian curvature $K=-1$, and $\omega_{12} = \omega^3$ is the torsion-free
metric connection one-form.  We also note that if the last equation in $(\ref{eq:pss-structure})$ is replaced by $d\omega^3=-\omega^1\wedge\omega^2$, then
$M$ is a \emph{spherical surface}. In this case, $K=+1$.

Let us consider a partial differential equation of arbitrary order in two independent variables, 
\begin{equation}\label{eq:PDE-general}
  \Xi(x,t,u,u_x,\dots,u_{x\cdots t\cdots})=0.
\end{equation}

\begin{definition} \label{psseq}
Equation \eqref{eq:PDE-general} is of \emph{pseudo-spherical type} (or, it describes pseudo-spherical surfaces) if 
there exist one-forms $\omega^\alpha$ $(\alpha=1,2,3)$ of the form
\begin{equation}\label{eq:omega-alpha}
  \omega^\alpha = f_{\alpha 1}(x,t,u,\dots)\,dx + f_{\alpha 2} (x,t,u,\dots)\,dt,
\end{equation}
in which each coefficient depends on $x,t,u$ and finitely many derivatives of $u$,
such that for every solution $u=u(x,t)$ of \eqref{eq:PDE-general}, the pulled-back
forms $\omega^\alpha(u(x,t))$ satisfy \eqref{eq:pss-structure}.
\end{definition}

When $u(x,t)$ is {\em generic}, this is, $(\omega^1\wedge\omega^2)(u(x,t))\neq 0$, the domain of $u(x,t)$ is 
endowed with a Riemannian metric of constant curvature $K=-1$.

\smallskip

The above definition is intrinsic:  it does not automatically produce extrinsic connection forms so that a full Gauss--Codazzi system holds when 
pulled back to solutions. A systematic relation between the Chern--Tenenblat viewpoint and the
Gauss--Codazzi viewpoint has been investigated only recently, as we explain below.

Definition 2 is due to Chern and Tenenblat \cite{CT}, with essential input coming from Sasaki,
see \cite{S}, as we anticipated in Section 1. Many
integrable equations fit into this class, including sine-Gordon, mKdV and (higher order)
KdV (\cite{CT} and the later paper \cite{R-hierarchy}), Camassa--Holm \cite{R-ch}, Hunter--Saxton \cite{R-hs}, 
Kaup--Kupershmidt and Sawada--Kotera (\cite{R-selecta} and the later paper \cite{neto1}),
and all integrable cases of the Rosenau--Hyman compacton equations \cite{BDDR}. In addition, many interesting 
classification results have been obtained\footnote{Instead of Definition \ref{psseq}, classification results use the stronger condition that an equation is of pseudo-spherical type 
if it is necessary and sufficient for Equations \eqref{eq:pss-structure} to hold, see \cite{CT,T}.}, see for instance the papers \cite{CCT,CT,DT,KT,KeT,KeT2,neto1,Ra} by 
Tenenblat and her collaborators and
students, and it seems noteworthy to observe that some equations have appeared in these classifications long 
before proving themselves of importance for Mathematical Physics. An example of this remark is provided by 
the short pulse equation, derived in 2004 by Sch\"afer and Wayne, see \cite{SW},
\begin{equation} \label{eq:ra1}
u_{xt} = u + \frac{1}{6} (u^3)_{xx}\; .
\end{equation}
This equation is a particular case of a class of equations of pseudo-spherical type studied by Rabelo in 1989 (!), see \cite[Equation (2)]{Ra}, and \cite{T} for explicit computations. 
It seems to us that truly interesting phenomena in nonlinear 
science remain to be discovered among the equations classified in \cite{CCT,CT,DT,KT,KeT,KeT2,neto1,Ra} and other papers 
by Tenenblat and her research group. See, for instance, the travelling wave solutions to 
(\ref{eq:ra1}) appearing in \cite{SS}.

\begin{remark}
An important point is the following: assume that $\Xi=0$ is of 
pseudo-spherical type with associated one-forms $\omega^\alpha = f_{\alpha 1} dx + f_{\alpha 2} dt$. The functions
$f_{\alpha \beta}$ are smooth functions depending on arbitrary (but finite) numbers of derivatives of a solution
$u(x,t)$. How can we be sure that these derivatives exist? This question 
is intimately related to regularity properties of the solutions to $\Xi=0$, and to answer it requires  
careful analytic handling. Interesting advances along these lines have been made by I.L. Freire, see his  
recent survey  {\rm \cite{IF}}. 
\end{remark}

\subsection{Why pseudo-spherical type is useful}
It is, of course, interesting from a geometric point of view to find out that solutions to a given equation 
can be used to construct pseudo-spherical metrics.  On the other hand, from a more applied 
perspective, this framework allows us to (we assume that $\Xi=0$ is an equation of pseudo-spherical type with associated one-forms $\omega^\alpha$):

\begin{itemize}
\item construct integrable overdetermined $\mathfrak{sl}(2,\mathbb{R})$-valued linear problems (see \cite{CT,T}). 
Indeed, the linear problem
$$
\left( \begin{array}{c} d \psi_1 \\ d \psi_2 \end{array} \right) = \frac{1}{2} \left( \begin{array}{cc}
\omega^2 & \omega^1 - \omega^3 \\ \omega^1 + \omega^3 & - \omega^2 \end{array} \right) \left( \begin{array}{c}  
\psi_1 \\ \psi_2 \end{array} \right)
$$
is integrable whenever $\Xi=0$ holds. 

\item derive infinite families of conservation laws (see \cite{CT, T, Reyes2000, R-ch}). 
It is proven in \cite{CT,T} (see also a 1988 paper by Cavalcante and Tenenblat cited in \cite{T,TT}) that the Pfaffian system
\begin{equation} \label{t2}
\omega^3 - d \phi + \sin\phi\, \omega^1 + \cos\phi \omega^2 = 0
\end{equation}
is completely integrable for $\phi$ whenever $\Xi=0$ holds, and in this case, the one-form 
\begin{equation}  \label{t3}
\widehat{\omega} = \cos\phi \omega^1 - \sin\phi \omega^2
\end{equation}
is closed on solutions to $\Xi=0$. Thus, if the one-forms $\omega^\alpha$ depend on a parameter $\eta$, say, then $\phi$ (and therefore $\omega$) can in principle be expanded as power series in
$\eta$, and we would obtain, at least formally, an infinite sequence of conservation laws. This observation has 
been recently revisited by Tenenblat and Tumpach \cite{TT}.  In the variational bicomplex language, if such conservation laws 
are local (something not automatically guaranteed, see \cite{CT,T}) they determine horizontal cohomology classes 
of the equation manifold; proving that these classes are non-zero is a separate and often delicate issue.

\item build quadratic pseudo-potentials and linearizing transformations (see \cite{Reyes2000, R-ch} and examples appearing in \cite{Reyes-gurses}). 
Indeed, the Pfaffian system (\ref{t2}) transforms into either
\begin{equation}
-2 \, d \, \Gamma = \sigma^{3} + \sigma^{2} - 2 \, \Gamma \, \sigma^{1} + \Gamma^{2} \, 
(\sigma^{3} - \sigma^{2})          \; ,                                            \label{pseudo1}
\end{equation}
or
\begin{equation}
2 \, d \, {\gamma} = \sigma^{3} - \sigma^{2} - 2 \, {\gamma} \, \sigma^{1} + {\gamma}^{2} \,
(\sigma^{3} + \sigma^{2})                                             	\label{pseudo2}
\end{equation}
under trigonometric changes of variables, and consequently, the closed one-form $\widehat{\omega}$ becomes either
\begin{equation}
{\sigma}^{1} - \Gamma \, ({\sigma}^{3} - {\sigma}^{2})  \; ,  \qquad
\mbox{ or } \qquad 
 - {\sigma}^{1} + {\gamma} \, ({\sigma}^{3} +{\sigma}^{2}) \;    \label{alla2}
\end{equation}
respectively. The functions $\Gamma$ and $\gamma$ are quadratic pseudo-potentials for $\Xi=0$. The forms in
(\ref{alla2}) were used in \cite{R-ch} to prove for the first time that the Camassa-Holm equation has an infinite number 
of non-trivial local conservation laws. 

\item obtain nonlocal symmetries (see \cite{R-ch,R-selecta}). 
Nonlocal symmetries are usually related to integrals of dependent variables.  
In the case of equations of pseudo-spherical type, we can seek nonlocal
symmetries depending, for instance, on the pseudo-potential $\Gamma$ appearing in (\ref{pseudo1}) . If $\Gamma$ depends on
a parameter, we can find an infinite number of (nonlocal) symmetries, or even a recursion operator. The
reader is referred to \cite{R-ch,R-selecta} for details.

\item provide a geometric understanding of hierarchies of integrable equations (see \cite{R-hierarchy}). 
It is a classical observation that scalar integrable equations ``come in hierarchies", that is, an integrable equation
such as the Korteweg-de Vries equation determines an infinite number of higher order equations that are, in a 
rigorous sense, commuting flows. This observation has a geometric counterpart that is investigated in 
\cite{R-hierarchy}, partially motivated by \cite{CP}.

\item construct generalized B\"acklund transformations (see \cite{KT, R-hierarchy}).
Pseudo-spherical surfaces are all locally isometric (a 
discussion of this fact within the framework of equations of pseudo-spherical type is in \cite{KT}). Assume that 
we have two equations of pseudo-spherical type, $\Xi_1=0$ and $\Xi_2=0$, and generic solutions $u_1$ and $u_2$ to 
$\Xi_1=0$ and $\Xi_2=0$ respectively. Then, $u_1$ and $u_2$ determine locally isometric 
metrics with Gaussian curvature $K=-1$. It is possible to ``unravel" this isometry and
{\em determine} $u_2$ starting from $u_1$! This is proven in \cite{KT} (and later in \cite{R-hierarchy} in the context of {\em hierarchies} of equations of pseudo-spherical type). This 
geometrically induced ``unravelling" is what we call a generalized B\"acklund transform. 
\end{itemize}

\subsection{The immersion problem}
Assume that $\Xi=0$ is an equation of pseudo-spherical type with 
associated one-forms $\omega^\alpha$. Can we find extrinsic connection one-forms
$\omega_{13}, \omega_{23}$ depending on $x,t,u$ and finitely many derivatives of $u$ so that
the Gauss--Codazzi equations are satisfied upon pullback to (generic) solutions?

For $k$th-order evolution equations $u_t=F$ describing pseudo-spherical surfaces
with $\omega^\alpha=f^\alpha_1 dx + f^\alpha_2 dt$ and $f^2_1=\eta$, a difficult result
obtained by Kahouadji, Kamran, and Tenenblat, see \cite{KKT2019}, gives severe obstructions to the
existence of such extrinsic data:
\begin{theorem} 
Except for $k$-th order evolution equations of the form
\begin{equation} \label{(10)}
\frac{\partial u}{\partial t}
=
\frac{1}{f_{11,u}}
\left(
\sum_{i=0}^{k-1}
f_{12,\partial_i u/\partial x^i}
\cdot
\frac{\partial^{i+1} u}{\partial x^{i+1}}
+
(\beta f_{11} - \eta f_{12})
\right),
\qquad k \ge 2,
\end{equation}
where $f_{11,u} \neq 0$, $\eta \in \mathbb{R}$, and 
$f_{12,\partial^{k-1} u/\partial x^{k-1}} \neq 0$, 
there exists no $k$-th order evolution equation of order $k \ge 2$ 
describing $\eta$ pseudo-spherical surfaces, with associated one-forms $\omega^\alpha = f_{\alpha 1} dx +
f_{\alpha 2} dt$  satisfying $f_{21} = \eta$, 
with the property that the coefficients of the second fundamental forms of the 
local isometric immersions of the surfaces associated to the solutions $u$ 
of the equation depend on a jet of finite order of $u$. 

Moreover, the coefficients of the second fundamental forms of the local 
isometric immersions of the surfaces determined by the solutions $u$ of {\rm (\ref{(10)})}
are universal, i.e., they are universal functions of $\eta x + \beta t$, 
independent of $u$.
\end{theorem}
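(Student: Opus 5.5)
Since $\omega^1\wedge\omega^2\neq 0$ on generic solutions, I would write the second fundamental form in the coframe as $\omega_{13}=a\,\omega^1+b\,\omega^2$ and $\omega_{23}=b\,\omega^1+c\,\omega^2$, with $a,b,c$ unknown functions of $x,t,u$ and finitely many derivatives of $u$. The Gauss equation is $ac-b^2=-1$. The Codazzi equations are $d\omega_{13}=\omega_{12}\wedge\omega_{23}$ and $d\omega_{23}=\omega_{21}\wedge\omega_{13}$ with $\omega_{12}=\omega^3$. Together with the structure equations \eqref{eq:pss-structure}, these are to hold identically on the jet space once $u_t$ is replaced by the right-hand side of the evolution equation, together with its total $x$-derivatives.

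The approach is to expand the Codazzi equations in $dx\wedge dt$. Writing $D_x$ and $D_t$ for the total derivatives restricted to the equation, this gives two identities of the form
$$D_t(af_{11}+bf_{21})-D_x(af_{12}+bf_{22})=\dots,$$
and similarly for the second form. Their left-hand sides contain $D_t a=\sum_i a_{u_i}\,D_x^i F$, with $u_i=\partial^i u/\partial x^i$.

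The first step is to normalize the Chern--Tenenblat data. Using $f_{21}=\eta$, I would invoke the known structure theorems for $k$-th order evolution equations describing pseudo-spherical surfaces (the classification in \cite{CT,KT}). These show that $f_{11}$ and $f_{31}$ depend only on $u$ and low-order derivatives, while $f_{12}$ and $f_{32}$ depend on $u,\dots,u_{k-1}$. Substituting, the Codazzi identities become polynomial in the highest jet variables $u_{k},u_{k+1},\dots$.

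Next, I would let $\ell$ be the highest jet order on which $a,b,c$ depend and differentiate the Codazzi and Gauss equations with respect to $u_{\ell+k}$. The coefficients of $u_{\ell+k}$ come only from $D_t a$, $D_t b$, $D_t c$ through $F_{u_k}\neq 0$. Vanishing of these coefficients forces $a_{u_\ell}f_{11}+b_{u_\ell}\eta=0$ and its analogue. Combined with the $u_\ell$-derivative of $ac-b^2=-1$, this should give a descending induction: $a,b,c$ do not depend on $u_\ell$ for $\ell\geq 1$, and eventually they do not depend on $u$ unless special relations among the $f_{\alpha\beta}$ hold.

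The reduced system then expresses $a,b,c$ as functions of $(x,t)$ alone, satisfying linear equations whose coefficients involve $f_{11}$ and $f_{12}$. Separating the terms proportional to $f_{11}$ from the rest should yield two alternatives. In the first, the system is inconsistent, which proves the nonexistence assertion. In the second, the compatibility condition is exactly $f_{11,u}\,u_t=\sum_i f_{12,u_i}u_{i+1}+\beta f_{11}-\eta f_{12}$, which is \eqref{(10)}. In that case $a,b,c$ satisfy ODEs along the characteristic variable $\eta x+\beta t$, which proves the universality statement.

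I expect the main obstacle to be the lowest-order stage of the induction. There the dependence on $u$ and $u_1$ mixes with $f_{31}$ and $f_{32}$ through $\omega_{12}$, and the case split ($f_{11,u}\neq 0$, whether $b=0$, and the sign of $a-c$) branches heavily. I would first treat the generic branch $b^2+(a-c)^2\neq 0$, and handle the umbilic-type branch separately using Gauss, which forces $a=-c$ and $b$ constant.
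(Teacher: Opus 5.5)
The paper does not prove this theorem. It quotes it from Kahouadji, Kamran and Tenenblat, so the comparison is with their argument. Your architecture matches theirs: frame the second fundamental form, impose Gauss--Codazzi on the equation manifold with the Chern--Tenenblat normal form, remove the jet dependence of $a,b,c$ from the top down, then analyse what remains. However, both substantive steps are only asserted, and the inference in the first one is invalid. For $\ell\ge1$ the coefficients of $u_{\ell+k}$ do give $f_{11}a_{u_\ell}+\eta b_{u_\ell}=0$ and $f_{11}b_{u_\ell}+\eta c_{u_\ell}=0$. Combined with $c\,a_{u_\ell}-2b\,b_{u_\ell}+a\,c_{u_\ell}=0$, they yield only $a_{u_\ell}\bigl(af_{11}^2+2\eta bf_{11}+c\eta^2\bigr)=0$ when $\eta\neq0$. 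When $\eta=0$ (so $f_{11}\neq0$) they yield only $a_{u_\ell}=b_{u_\ell}=0$ and $a\,c_{u_\ell}=0$. The bracket, respectively $af_{11}^2$, is $\mathrm{II}(\partial_x,\partial_x)$, so your induction breaks down exactly where the $x$-lines are asymptotic. Since $K=-1<0$, asymptotic directions always exist, so the Gauss equation cannot exclude this branch. It is also consistent at top order: the $u_\ell$-derivative of $af_{11}^2+2\eta bf_{11}+c\eta^2=0$ vanishes identically by the two top-order relations. This is precisely the configuration of the classical sine--Gordon immersion ($f_{11}=0$, $c=0$, $b=\pm1$, $a=\mp2\cot u$ in the Chern--Tenenblat coframe), whose coefficients do depend on the jet. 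Excluding it therefore requires the evolutionary structure ($F_{u_k}\neq0$, $k\ge2$, $f_{11,u}\neq0$) in the subleading coefficients. The same is needed at $\ell=0$, where you leave an alternative (``unless special relations hold'') open even though the theorem asserts that $a,b,c$ never depend on $u$. Separately, the umbilic branch $b=0$, $a=c$ is empty, since Gauss gives $a^2=-1$; it does not force $a=-c$.

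The final step is where the theorem's content lies, and ``separating the terms proportional to $f_{11}$'' does not produce it. With $a,b,c$ depending only on $(x,t)$, the first Codazzi equation reads
\[
f_{12}(a_x-2bf_{31})+f_{22}\bigl(b_x+(a-c)f_{31}\bigr)+f_{32}\bigl(2bf_{11}-\eta(a-c)\bigr)=a_tf_{11}+\eta b_t ,
\]
and the second is similar, so $f_{22},f_{31},f_{32}$ enter essentially. You must differentiate these identities in $u_{k-1},\dots,u$ and use the Chern--Tenenblat relations $f_{22,u_{k-1}}=0$ and $f_{11,u}f_{32,u_{k-1}}=f_{31,u}f_{12,u_{k-1}}$. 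Already the $u_{k-1}$-derivative, together with $(a-c)^2+4b^2=(a+c)^2+4\neq0$, forces $f_{31}=\lambda f_{11}+\nu$ with constants $\lambda,\nu$ when $\eta\neq0$. The case analysis has to end at the degenerate coframe $\omega^2=\eta\,dx+\beta\,dt$, $\omega^3=-\omega^1$ (up to orientation), and that is where the constant $\beta$ comes from; your sketch never explains it. For this coframe the first equation in \eqref{eq:pss-structure} is exactly \eqref{(10)}. The Codazzi equations then collapse to $a_x+\eta(a-c)=0$, $a_t+\beta(a-c)=0$, $b_x+2\eta b=0$, $b_t+2\beta b=0$ and $\eta c_t-\beta c_x=0$. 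This is what gives dependence on $\theta=\eta x+\beta t$ alone, for example $b=\gamma e^{-2\theta}$ and $a^2=le^{-2\theta}-\gamma^2e^{-4\theta}-1$. Without this analysis, neither the nonexistence claim nor the universality claim is proved.
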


Thus, for {\em many} equations describing pseudo-spherical surfaces 
with associated one-forms $\omega^\alpha$, we cannot expect to find extrinsic 
connection one-forms $\omega_{13}, \omega_{23}$ depending on finite order jets of $u$ such that they, together 
with the forms $\omega^\alpha$, satisfy the structure equations of a surface locally immersed in $\mathbb{R}^3$. 
The $K = -1$ condition is an important constraint: as we will see in Section 3, {\em it is} possible to find 
local immersion data if we do not fix the Gaussian curvature (or if $K = +1$) or, if we consider surfaces immersed in a curved ambient space.

\begin{remark}
There exists a complementary approach to the immersion problem, going back
to Sym, {\rm \cite{Sym}}: starting from a parameter-dependent
zero-curvature representation for $\Xi=0$, solve this PDE together with the associated
linear problem, and then build an immersed surface from these solutions.
This philosophy is explained in the review paper {\rm \cite{GT}} by G\"urses and Tek; an important earlier reference is 
Fokas--Gelfand {\rm \cite{FokasGelfand1996}}. This last paper will be used in Theorem $\ref{fg}$ below.
\end{remark}

\section{Lund--Regge type systems and surfaces immersed in $S^3$}

We can ask whether the
Chern--Tenenblat approach (with or without immersions) admits generalizations. There are several roads we can
take:
\begin{itemize}
\item  We could try an $n$-dimensional version of Definition 2. This looks quite difficult, actually; Tenenblat 
and her coworkers have found explicit examples of $n$-dimensional integrable equations that are analogous to some
equations of pseudo-spherical type ({\em e.g.} they have considered an $n$ dimensional sine-Gordon equation, see 
\cite{T,TT} and references therein) but no general theory has been developed, to the best of our knowledge. 
\item We could try to change the $K=-1$ condition: this approach has been explored in the papers \cite{DT,KeT,KeT2}, 
in which the authors consider (systems of) equations that describe {\em spherical} surfaces. 
\item We could revisit the immersion approach used in \cite{GN,L}. This approach seems
promising and is currently under development, see \cite{BHR}. The main idea here, also partially motivated by 
the immersion theory of equations of pseudo-spherical type, is to introduce a class of equations describing 
surfaces which are immersed in a {\em curved} ambient space.
\end{itemize}

Our starting point is the Lund--Regge system in independent variables $\sigma,\tau$, 
\begin{align}
  \theta_{\tau\tau}-\theta_{\sigma\sigma} - \cos\theta\sin\theta
  +\frac{\cos\theta}{\sin^3\theta}\Big(\lambda_\tau^2-\lambda_\sigma^2\Big) &= 0,
  \label{eq:LR1}\\
  \partial_\tau\Big(\cot^2\theta\,\lambda_\tau\Big)
  -\partial_\sigma\Big(\cot^2\theta\,\lambda_\sigma\Big) &= 0 \; ,
  \label{eq:LR2}
\end{align}
introduced in \cite{L}. We define
\[
  \omega^1 = \cos\theta\, d\sigma,\qquad \omega^2 = \sin\theta\, d\tau,
\]
and connection one-forms
\begin{align*}
  \omega_{12} &= \theta_\tau\,d\sigma + \theta_\sigma\,d\tau,\\
  \omega_{13} &= \frac{\lambda_\sigma}{\sin\theta}\,d\sigma
              + \frac{\lambda_\tau}{\sin\theta}\,d\tau,\\
  \omega_{23} &= \frac{\lambda_\tau\cos\theta}{\sin^2\theta}\,d\sigma
              + \frac{\lambda_\sigma\cos\theta}{\sin^2\theta}\,d\tau.
\end{align*}
Then, a solution $(\theta,\lambda)$ of \eqref{eq:LR1}--\eqref{eq:LR2} with
$(\omega^1\wedge\omega^2)(\theta(\sigma,\tau),\lambda(\sigma,\tau))\neq 0$ determines a surface immersed in $S^3$ 
equipped with moving coframe $\omega^1(\theta(\sigma,\tau),\lambda(\sigma,\tau))$ and 
$\omega^2(\theta(\sigma,\tau),\lambda(\sigma,\tau))$ and connection one-forms as above; in fact, the five
one-forms $\omega^1, \cdots, \omega_{23}$ satisfy the structure equations of a surface locally immersed in $S^3$
if and only if $(\theta,\lambda)$ solves \eqref{eq:LR1}--\eqref{eq:LR2}. 

\smallskip

It is therefore very natural to make the following definition, after \cite{BHR}:

\begin{definition} \label{elrt}
A (system of) differential equation(s)
$\Xi(\sigma,\tau,u,v,\dots)=0$ in dependent variables $u,v$ is of
\emph{Lund--Regge type} if there exist functions $f_{ij}$ and $g_{ikj}$
(with $g_{ikj}=-g_{kij}$, $i,k\in\{1,2,3\}$, $i\neq k$, $j\in\{1,2\}$),
depending on $\sigma,\tau,u,v$ and finitely many derivatives of $u, v$, such that the
one-forms
\[
  \omega^i = f_{i1}\,d\sigma + f_{i2}\,d\tau,\qquad
  \omega_{ik} = g_{ik1}\,d\sigma + g_{ik2}\,d\tau,\qquad (i\neq k)
\]
satisfy the structure equations
\begin{align*}
  d\omega^1 &= \omega_{12}\wedge\omega^2,\qquad
  d\omega^2 = \omega^1\wedge\omega_{12},\qquad
  0 = \omega^1\wedge\omega_{13} + \omega^2\wedge\omega_{23},\\
  d\omega_{12} &= \omega_{23}\wedge\omega_{13} - \omega^1\wedge\omega^2,\qquad
  d\omega_{13} = \omega_{12}\wedge\omega_{23},\qquad
  d\omega_{23} = \omega_{13}\wedge\omega_{12},
\end{align*}
of a surface immersed in $S^3$, 
whenever $(u(\sigma,\tau),v(\sigma,\tau))$ is a solution of $\Xi=0$.
\end{definition}

Of course, we can also consider systems of equations with more than two dependent variables.  If $\Xi=0$
is of Lund--Regge type with associated one-forms $\omega^1, \omega^2, \omega_{12}, \omega_{13}, \omega_{23}$, and 
we consider a solution $s = (u(\sigma,\tau), v(\sigma,\tau) )$ such that the pull-back of 
$\omega^1 \wedge \omega^2$ by $s$ is different from zero, then the pull-back by $s$ of these five one-forms 
determine a surface that is locally immersed in $S^3$. 

\medskip

The first important property of an equation describing Lund--Regge surfaces, is that it is the 
integrability condition of an $\mathfrak{so}(3,\mathbb{R})$-valued linear system. We quote from \cite{BHR}:
\begin{theorem}
Let $\Xi=0$ be a system of differential equations in two independent variables $\sigma,
\tau$, and suppose that there exist one-forms
$\omega^1, \omega^2, \omega_{12}, \omega_{13}, \omega_{23}$ depending on $\sigma,\tau$ and on
a finite number of derivatives of the dependent variables appearing in $\Xi=0$, such that the equations
\begin{equation} \label{zcceq}
d \omega^1 = \omega_{12} \wedge \omega^2\; , \quad \quad d \omega^2 = \omega^{1} \wedge \omega_{12}\; ,
\quad \quad \omega^1 \wedge \omega_{13} + \omega^2 \wedge \omega_{23} = 0
\end{equation}
hold on solutions. Then, this equation is of Lund-Regge type if and only if the matrix
\begin{equation} \label{matrix_omega}
\Omega = \left( 
\begin{array}{ccc} 0 & \omega_{12} & \omega_{13} + \omega^2 \\
	    -\omega_{12} & 0 & \omega_{23} - \omega^1 \\
	    -\omega_{13} - \omega^2 & - \omega_{23} + \omega^1 & 0
\end{array}
        \right)
\end{equation}
satisfies the zero curvature equation 
\begin{equation} \label{zccc}
d\, \Omega = \Omega \wedge \Omega
\end{equation}
on solutions to $\Xi=0$.
\end{theorem}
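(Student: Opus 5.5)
The plan is a direct entrywise comparison. I would write $\Omega=(\Omega_{ij})$ with $\Omega_{12}=\omega_{12}$, $\Omega_{13}=\omega_{13}+\omega^2$ and $\Omega_{23}=\omega_{23}-\omega^1$, so that $\Omega$ is skew-symmetric. Then I would show that, modulo the three equations (\ref{zcceq}), the matrix equation (\ref{zccc}) is equivalent to exactly the three remaining structure equations of Definition \ref{elrt}, namely those for $d\omega_{12}$, $d\omega_{13}$ and $d\omega_{23}$. Since the equations of Definition \ref{elrt} are precisely (\ref{zcceq}) together with these three, this gives both directions of the equivalence at once.

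\emph{Step 1: reduce to three entries.} For a matrix of one-forms with $\Omega_{ji}=-\Omega_{ij}$, the product $\Omega\wedge\Omega$ is also skew-symmetric. Indeed, $(\Omega\wedge\Omega)_{ji}=\sum_k\Omega_{jk}\wedge\Omega_{ki}=\sum_k\Omega_{kj}\wedge\Omega_{ik}=-\sum_k\Omega_{ik}\wedge\Omega_{kj}$. In particular its diagonal entries vanish. Since $d\Omega$ is skew-symmetric as well, (\ref{zccc}) reduces to its $(1,2)$, $(1,3)$ and $(2,3)$ entries.

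\emph{Step 2: compute the three entries.} Each upper entry of $\Omega\wedge\Omega$ has only one nonzero summand, because $\Omega$ has zero diagonal. For the $(1,2)$ entry,
\[
(\Omega\wedge\Omega)_{12}=-(\omega_{13}+\omega^2)\wedge(\omega_{23}-\omega^1)
=\omega_{23}\wedge\omega_{13}-\omega^1\wedge\omega^2+\big(\omega_{13}\wedge\omega^1-\omega^2\wedge\omega_{23}\big).
\]
The bracketed term vanishes by the third equation in (\ref{zcceq}), since $\omega_{13}\wedge\omega^1=-\omega^1\wedge\omega_{13}=\omega^2\wedge\omega_{23}$. Hence the $(1,2)$ entry of (\ref{zccc}) is exactly $d\omega_{12}=\omega_{23}\wedge\omega_{13}-\omega^1\wedge\omega^2$.

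For the $(1,3)$ entry,
\[
(\Omega\wedge\Omega)_{13}=\omega_{12}\wedge\omega_{23}+\omega^1\wedge\omega_{12},
\]
while $d\Omega_{13}=d\omega_{13}+d\omega^2=d\omega_{13}+\omega^1\wedge\omega_{12}$ by (\ref{zcceq}). So this entry of (\ref{zccc}) is equivalent to $d\omega_{13}=\omega_{12}\wedge\omega_{23}$.

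For the $(2,3)$ entry,
\[
(\Omega\wedge\Omega)_{23}=-\omega_{12}\wedge\omega_{13}-\omega_{12}\wedge\omega^2,
\]
while $d\Omega_{23}=d\omega_{23}-d\omega^1=d\omega_{23}-\omega_{12}\wedge\omega^2$. So this entry is equivalent to $d\omega_{23}=\omega_{13}\wedge\omega_{12}$.

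\emph{Step 3: conclude.} All identities above are taken on solutions of $\Xi=0$, where (\ref{zcceq}) holds by hypothesis. By Steps 1 and 2, (\ref{zccc}) holds on solutions if and only if the three remaining equations of Definition \ref{elrt} hold there. Together with (\ref{zcceq}), these are exactly the full set of structure equations of a surface immersed in $S^3$, which is the Lund--Regge condition. The only delicate point is the $(1,2)$ entry, where the algebraic constraint $\omega^1\wedge\omega_{13}+\omega^2\wedge\omega_{23}=0$ must be used to cancel the cross terms. Apart from that, the argument is routine bookkeeping of signs in wedge products.
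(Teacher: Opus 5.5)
Your proof is correct. The skew-symmetry reduction and all three upper entries check out with the sign conventions of Definition~\ref{elrt}. The $(1,2)$ entry needs the algebraic constraint $\omega^1\wedge\omega_{13}+\omega^2\wedge\omega_{23}=0$ to cancel the cross terms. The $(1,3)$ and $(2,3)$ entries use $d\omega^2=\omega^1\wedge\omega_{12}$ and $d\omega^1=\omega_{12}\wedge\omega^2$ respectively. Since these are exactly the hypotheses \eqref{zcceq}, both directions follow.

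The paper gives no proof of this theorem; it only quotes it from \cite{BHR}. Your direct entrywise computation therefore supplies the routine verification that the citation leaves to the reader, and there is no alternative argument in the paper to compare it with.
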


The standard identification between $\mathfrak{su}(2)$ and $\mathfrak{so}(3)$ allows us to connect this theory with the theory of
equations describing spherical surfaces. We find that an equation of Lund-Regge 
type with associated one-forms as in Definition $\ref{elrt}$ describes spherical surfaces with associated 
one-forms
$$
\overline{\omega}_1 = - \omega_{13} - \omega^2\; , \quad \quad 
\overline{\omega}_2 = \omega_{12}\; , \quad \quad 
\overline{\omega}_3 = \omega_{23} - \omega^1 \; .
$$
We have not stated rigorously the notion of an equation of spherical type, but the correct definition can be
easily obtained from Definition \ref{psseq} and Remark 1. It appears explicitly in \cite{DT}.  
Thus, if $(u,v)$ is a solution to an equation of Lund-Regge type, and the forms $\overline{\omega}_i$ are defined 
as above, we obtain a Riemannian metric of Gaussian curvature $K=+1$ on an open set $V \subset \mathbb{R}^2$ 
contained in the domain of $(u,v)$, as long as
$\overline{\omega}_1(u,v) \wedge \overline{\omega}_2(u,v) = (- \omega_{13}(u,v) - \omega^2(u,v) )\wedge 
\omega_{12}(u,v) \neq 0$ on $V$.

\smallskip

The importance of this rather elementary remark is that it implies, essentially
after \cite{Reyes2000}:

\begin{proposition} \label{l1}
Let $\Xi =0$ be an equation of Lund-Regge type with associated one-forms as in Definition $\ref{elrt}$.
The Pfaffian system 
\begin{eqnarray}
- 2 d \Gamma & = & 
(\omega_{13} + \omega^2 + i \omega_{12}) - 2 i \Gamma 
(\omega_{23} - \omega^1) + \Gamma^{2}(\omega_{13} + \omega^2 - i \omega_{12})           \label{sgammalr} 
\end{eqnarray}
is completely integrable on solutions to $\Xi =0$, and the one-form
\begin{eqnarray}
\Theta & = & i (\omega_{23} - \omega^1) - \Gamma (\omega_{13} + \omega^2 - i \omega_{12})	\label{alla1lr}
\end{eqnarray}
is closed on solutions to $\Xi =0$. Analogous claims hold for the Pfaffian system
\begin{eqnarray}
2 d \hat{\Gamma} & = & 
(\omega_{13} + \omega^2 - i \omega_{12}) - 2 i \hat{\Gamma} 
(\omega_{23} - \omega^1) + \hat{\Gamma}^{2}(\omega_{13} + \omega^2 + i \omega_{12})       \label{sgamma2lr}           
\end{eqnarray}
and the one-form
\begin{eqnarray}
\hat{\Theta} & = & -i (\omega_{23}-\omega^1)+\hat{\Gamma} (\omega_{13}+\omega^2 + i \omega_{12}) \;\label{alla2lr}
\end{eqnarray} 
\end{proposition}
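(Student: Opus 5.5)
The plan is to reduce everything to the zero-curvature condition of the preceding theorem. Write $\overline{\omega}_1 = -\omega_{13}-\omega^2$, $\overline{\omega}_2=\omega_{12}$, $\overline{\omega}_3=\omega_{23}-\omega^1$. Reading off the entries of $d\Omega=\Omega\wedge\Omega$ for the matrix \eqref{matrix_omega}, the structure equations in terms of these forms become
\begin{equation*}
d\overline{\omega}_1 = \overline{\omega}_3\wedge\overline{\omega}_2,\qquad d\overline{\omega}_2=\overline{\omega}_1\wedge\overline{\omega}_3,\qquad d\overline{\omega}_3=-\overline{\omega}_1\wedge\overline{\omega}_2 .
\end{equation*}
The precise signs will be fixed by direct comparison with the entries of $\Omega\wedge\Omega$. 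This is the spherical ($K=+1$) structure equation remarked on just before the proposition. All subsequent computations use only these three identities, valid on solutions of $\Xi=0$.

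Next, write \eqref{sgammalr} in the form $d\Gamma = A + B\Gamma + C\Gamma^2$ with
\begin{equation*}
A=-\tfrac12(\omega_{13}+\omega^2+i\omega_{12}),\qquad B=i(\omega_{23}-\omega^1),\qquad C=-\tfrac12(\omega_{13}+\omega^2-i\omega_{12}).
\end{equation*}
By Frobenius, complete integrability of a Riccati-type Pfaffian equation $d\Gamma=\Phi(\Gamma)$ on the space with coordinates $(x,t,\text{jets},\Gamma)$ holds if and only if $d\Phi\equiv 0$ modulo the equation itself. Concretely, this means
\begin{equation*}
dA+dB\,\Gamma+dC\,\Gamma^2 - (B+2C\Gamma)\wedge(A+B\Gamma+C\Gamma^2)=0
\end{equation*}
identically in $\Gamma$. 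The plan is to expand this in powers of $\Gamma$ and check that the coefficients of $1$, $\Gamma$ and $\Gamma^2$ vanish; the $\Gamma^3$ term is $C\wedge C=0$ automatically. This gives three conditions: $dA=B\wedge A$, $dB=2C\wedge A$, and $dC=C\wedge B$. Substituting $\overline{\omega}_j$ and using the structure equations above, each condition splits into real and imaginary parts, and these are exactly the three $K=+1$ equations.

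For closedness of $\Theta = B - 2C\Gamma$ (up to the factor already present, since $\Theta=i(\omega_{23}-\omega^1)-\Gamma(\omega_{13}+\omega^2-i\omega_{12})$), compute
\begin{equation*}
d\Theta = dB - 2\,dC\,\Gamma + 2C\wedge d\Gamma .
\end{equation*}
Substituting $d\Gamma$ and using $C\wedge C=0$, the $\Gamma$-independent part is $dB-2C\wedge A$, which vanishes, and the linear part is $-2\,dC+2C\wedge B$, which also vanishes by the integrability conditions. The hatted system \eqref{sgamma2lr}–\eqref{alla2lr} is handled identically: it is obtained by the substitution $i\mapsto -i$ together with an overall sign change of $d\hat\Gamma$, an operation under which the three conditions are preserved, since they are real equations.

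The only real obstacle is sign bookkeeping: getting the correct orientation of the $K=+1$ structure equations from $\Omega$, and keeping the factors of $\tfrac12$ and $i$ consistent. I would verify the three scalar identities (the $1$, $\Gamma$, $\Gamma^2$ coefficients) explicitly once, since a single sign error would make the Riccati system the one for $K=-1$ instead.
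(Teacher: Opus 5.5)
Your route is the right one and is essentially the paper's. The paper gives no computation: it notes that $\overline{\omega}_1,\overline{\omega}_2,\overline{\omega}_3$ satisfy the $K=+1$ structure equations and invokes the quadratic pseudo-potential construction of \cite{Reyes2000}, which is what your Frobenius check makes explicit. Your structure equations, your $A,B,C$, and the three conditions $dA=B\wedge A$, $dB=2C\wedge A$, $dC=C\wedge B$ are correct, and they do hold.

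However, the sign bookkeeping you flagged goes wrong in two places, and both steps fail as written.

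\textbf{The form $\Theta$.} Since $C=-\tfrac12(\omega_{13}+\omega^2-i\omega_{12})$, the form \eqref{alla1lr} is $\Theta=B+2\Gamma C$, not $B-2C\Gamma$. With your $\Theta$, your own formula $d\Theta=dB-2\Gamma\,dC+2C\wedge d\Gamma$ has $\Gamma$-independent part
$$dB+2C\wedge A=2\,dB=-2i\,\overline{\omega}_1\wedge\overline{\omega}_2\neq 0 .$$
So the form you wrote is not closed, and the vanishing of ``$dB-2C\wedge A$'' does not follow from your display. The correct computation is
\[
d\Theta=dB-2C\wedge d\Gamma+2\Gamma\,dC=(dB-2C\wedge A)+2\Gamma\,(dC-C\wedge B)=0 .
\]

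\textbf{The hatted system.} \eqref{sgamma2lr} is not obtained from \eqref{sgammalr} by $i\mapsto -i$ plus a sign change of $d\hat\Gamma$. That operation produces the linear term $+2i\hat\Gamma(\omega_{23}-\omega^1)$, i.e.\ coefficients $(-\overline{A},-\overline{B},-\overline{C})$. The three conditions are quadratic in the coefficients, so they are not invariant under an overall sign: for instance $d(-\overline{A})=-\overline{B}\wedge\overline{A}$, whereas $(-\overline{B})\wedge(-\overline{A})=\overline{B}\wedge\overline{A}$, and these differ generically. That system is therefore not integrable. Reality of the $\overline{\omega}_j$ only buys you invariance under conjugation, not under a sign flip.

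The correct relation is $\hat\Gamma=-\overline{\Gamma}$. The coefficients of \eqref{sgamma2lr} are $(\hat A,\hat B,\hat C)=(-\overline{A},\overline{B},-\overline{C})$. The substitution $(A,B,C)\mapsto(-A,B,-C)$, which comes from $\Gamma\mapsto-\Gamma$, does preserve the three conditions, and $\hat\Theta=\hat B+2\hat\Gamma\hat C=\overline{\Theta}$. Alternatively, simply rerun your direct check with $(\hat A,\hat B,\hat C)$.
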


Thus, equations of Lund-Regge type admit quadratic pseudo-potentials 
and in principle, if the associated one-forms depend on a parameter, infinitely many conservation laws.

\smallskip

We now present a relation between the foregoing theory and 
\cite{Sym,FokasGelfand1996}. We assume familiarity with \cite{O}. Let us suppose that $\Xi=0$ is of Lund-Regge 
type, and that $v_Q$ is a generalized symmetry of $\Xi=0$ in evolutionary form. We write the matrix $\Omega$
given by (\ref{matrix_omega}) as an $\mathfrak{su}(2)$-valued matrix of differential one-forms, and we recall that (the 
infinite prolongation of) $v_Q$ acts on differential forms via Lie derivatives. Thus, we can define
$$
\Lambda = L_{v_Q} \Omega = A d\sigma + B d \tau
$$
for $\mathfrak{su}(2)$-valued matrices $A,B$ whose entries are differential functions. Since $\Omega$ satisfies (\ref{zccc}), 
we find, by computing Lie derivatives, that on solutions to $\Xi =0$ the equation 
\begin{equation} \label{eq:4}
A_\tau - B_\sigma + [A , T] + [\Sigma , B] = 0\; ,
\end{equation}
holds, in which we have set $\Omega = \Sigma \, d \sigma + T \, d \tau$. We see that 
Equation (\ref{eq:4}) is precisely 
Equation (2.4) in \cite{FokasGelfand1996}, identifying $(u , v)$ of that reference with our $(\sigma , \tau)$.
Equations (\ref{zccc}) and (\ref{eq:4}) imply that we can apply \cite[Theorem 2.1]{FokasGelfand1996}. We obtain:

\begin{theorem} \label{fg}
Let $\Xi=0$ be of Lund-Regge type, and let $v_Q$ be a generalized symmetry of $\Xi=0$ in evolutionary form.
We consider matrices $\Omega$ and $\Lambda$ as above, and we assume the non-degeneracy conditions $[A,B] \neq 0$ and $A \neq 0$. Then, $\Xi=0$ describes surfaces immersed in a 
three-dimensional flat space with first fundamental form
\begin{equation} \label{iff}
\left< A , A \right> d\sigma^2 + 2 \left< A , B \right> d \sigma \, d \tau + \left< B , B \right> d \tau^2\; ,
\end{equation}
and second fundamental form 
\begin{equation} \label{iiff}
\left< A_\sigma + [A,\Sigma], C \right> d\sigma^2 + 2 \left< A_\tau+[A,T] , C \right> d \sigma \, d \tau + 
\left< B_\tau + [B,T] , C \right> d \tau^2\; ,
\end{equation}
in which $\left< X , Y \right> = -(1/2)\, {\rm trace}(X Y)$ for $X,Y \in su(2)$, and 
$C =[A,B]/\sqrt{\left<[A,B],[A,B]\right>}\,$.
\end{theorem}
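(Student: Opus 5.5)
The strategy is to reduce Theorem \ref{fg} to \cite[Theorem 2.1]{FokasGelfand1996}, via the Sym--Tafel type construction. Most of the work is in checking that its two hypotheses, the zero curvature equation and the linearized equation (\ref{eq:4}), hold for our $\Omega$ and $\Lambda$.

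First, since $\Xi=0$ is of Lund--Regge type, the theorem quoted from \cite{BHR} gives (\ref{zccc}) for $\Omega=\Sigma\,d\sigma+T\,d\tau$ on solutions. In components this reads $\Sigma_\tau-T_\sigma+[\Sigma,T]=0$, up to the sign convention used for $\Omega\wedge\Omega$. We transport this to $\mathfrak{su}(2)$ using the standard isomorphism with $\mathfrak{so}(3)$. Because that isomorphism is a Lie algebra map, the equation survives unchanged, and $\langle\cdot,\cdot\rangle=-\tfrac12\,{\rm trace}$ becomes the Euclidean inner product on $\mathbb{R}^3\cong\mathfrak{su}(2)$.

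Next we apply the prolonged evolutionary vector field ${\rm pr}\,v_Q$ to (\ref{zccc}). The operator ${\rm pr}\,v_Q$ commutes with total derivatives $D_\sigma,D_\tau$ and is a derivation, so it commutes with $d$ and acts on $\Omega\wedge\Omega$ by the Leibniz rule. Because $v_Q$ is a symmetry, it preserves the equation manifold. The left side of the component equation vanishes on that manifold, so its image under ${\rm pr}\,v_Q$ also vanishes there. This yields $A_\tau-B_\sigma+[A,T]+[\Sigma,B]=0$, that is, (\ref{eq:4}).

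Once both equations hold, we solve the linear problem $\Psi_\sigma=\Psi\Sigma$, $\Psi_\tau=\Psi T$ (or the left-multiplication version matching the paper's convention) with $\Psi\in SU(2)$ on a simply connected neighbourhood. Solvability is exactly (\ref{zccc}). We then set $F=\Psi^{-1}\,\Lambda\text{-integral}$, more precisely $dF=\Psi A\Psi^{-1}\,d\sigma+\Psi B\Psi^{-1}\,d\tau$, where the closedness of this form follows from (\ref{eq:4}) by direct differentiation. This is the content of \cite[Theorem 2.1]{FokasGelfand1996}.

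For the fundamental forms, conjugation by $\Psi$ is an isometry of $\langle\cdot,\cdot\rangle$. This gives (\ref{iff}) at once, since $F_\sigma=\Psi A\Psi^{-1}$ and $F_\tau=\Psi B\Psi^{-1}$. The unit normal is $N=\Psi C\Psi^{-1}$, and it is well defined because $[A,B]\neq0$, which also guarantees $F_\sigma\times F_\tau\ne0$ so that $F$ is an immersion. The second derivatives are $F_{\sigma\sigma}=\Psi(A_\sigma+[\Sigma,A])\Psi^{-1}$ and similarly for the others, with the sign of the commutator fixed by the convention. Pairing these with $N$ gives (\ref{iiff}), and using (\ref{eq:4}) shows the mixed term is symmetric. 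The hypothesis $A\neq0$ enters through the nondegeneracy that \cite{FokasGelfand1996} require.

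I expect the main obstacle to be bookkeeping of sign and ordering conventions. These include left versus right multiplication in the linear problem, the sign in $d\Omega=\Omega\wedge\Omega$, and the $\mathfrak{so}(3)\to\mathfrak{su}(2)$ identification. They must be aligned with those of \cite{FokasGelfand1996} so that the commutators in (\ref{iiff}) come out with the stated signs. A further subtlety is that the commutation of ${\rm pr}\,v_Q$ with restriction to solutions must be justified through the jet-space formalism of \cite{O}.
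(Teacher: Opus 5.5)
Your proposal is correct and follows the paper's argument. The theorem quoted from \cite{BHR} gives $d\Omega=\Omega\wedge\Omega$, differentiating that equation along the prolonged evolutionary symmetry ${\rm pr}\,v_Q$ gives (\ref{eq:4}), and \cite[Theorem 2.1]{FokasGelfand1996} is then applied; your Sym-type sketch simply unpacks what the paper cites.

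One convention in that sketch needs fixing. Since $\Sigma_\tau-T_\sigma+[\Sigma,T]=0$, the compatible linear problem is $\Psi_\sigma=\Sigma\Psi$, $\Psi_\tau=T\Psi$, and the surface is $dF=\Psi^{-1}A\Psi\,d\sigma+\Psi^{-1}B\Psi\,d\tau$. The closedness of this form is exactly (\ref{eq:4}), and its second derivatives $\Psi^{-1}(A_\sigma+[A,\Sigma])\Psi$, $\Psi^{-1}(A_\tau+[A,T])\Psi$, $\Psi^{-1}(B_\tau+[B,T])\Psi$ give (\ref{iiff}) with the stated signs. Your right-multiplication formulas would instead require $d\Omega=-\Omega\wedge\Omega$.
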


It may be of interest to present explicit formulas for the coframe and connection one-forms implicit in
(\ref{iff}) and (\ref{iiff}). We write the $\mathfrak{su}(2)$-matrix 
$$
\Omega=\frac{1}{2}
\left(
\begin{array}{cc}
i\, \overline{\omega}_2 & \overline{\omega}_1+i\, \overline{\omega}_3  \\
-\overline{\omega}_1+i\, \overline{\omega}_3 & -i\, \overline{\omega}_2  
\end{array} 
\right)
$$
as $\Omega = \Sigma d \sigma + T d \tau$, with 
\begin{equation} \label{omatrix}
\Sigma = \frac{1}{2} \left( \begin{array}{cc}
i\, \overline{a}_2 & \overline{a}_1+i\, \overline{a}_3  \\
-\overline{a}_1+i\, \overline{a}_3 & -i\, \overline{a}_2  
\end{array} \right) \quad \mbox{ and } \quad 
T = \frac{1}{2} \left( \begin{array}{cc}
i\, \overline{b}_2 & \overline{b}_1+i\, \overline{b}_3  \\
-\overline{b}_1+i\, \overline{b}_3 & -i\, \overline{b}_2  
\end{array} \right) \; ,
\end{equation}
and we set $\Lambda = A d \sigma + B d \tau$ with
\begin{equation} \label{lmatrix}
A = \frac{1}{2} \left( \begin{array}{cc}
i\, \overline{\alpha}_2 & \overline{\alpha}_1+i\, \overline{\alpha}_3  \\
-\overline{\alpha}_1+i\, \overline{\alpha}_3 & -i\, \overline{\alpha}_2  
\end{array} \right)  \quad \mbox{ and } \quad 
B = \frac{1}{2} \left( \begin{array}{cc}
i\, \overline{\beta}_2 & \overline{\beta}_1+i\, \overline{\beta}_3  \\
-\overline{\beta}_1+i\, \overline{\beta}_3 & -i\, \overline{\beta}_2  
\end{array} \right) \; .
\end{equation}
We also identify $\mathfrak{su}(2)$ with $\mathbb{R}^3$ via the map $\Phi : \mathbb{R}^3 \rightarrow \mathfrak{su}(2)$ given by 
\[
\Phi(x_1,x_2,x_3)=
\frac12
\begin{pmatrix}
i x_2 & x_1+i x_3\\
-x_1+i x_3 & -i x_2
\end{pmatrix} \; ,
\]
so that the product $\left< \cdot , \cdot \right>$ corresponds to the Euclidean dot product, while the Lie bracket corresponds to the opposite of the standard cross product: 
\[
\langle \Phi(x),\Phi(y)\rangle=\frac14 x\cdot y,
\qquad \mbox{ and } \qquad
[\Phi(x),\Phi(y)]= - \Phi(x\times y)\; .
\]
We write $a=(\overline{a}_1,\overline{a}_2,\overline{a}_3)$ and define $b, \alpha, \beta$ similarly. 
In this notation, (\ref{iff}) and (\ref{iiff}) become
\[
I=
\frac14|\alpha|^2\,d\sigma^2
+\frac12(\alpha\cdot\beta)\,d\sigma\,d\tau
+\frac14|\beta|^2\,d\tau^2 
\]
and
\[
II=
\frac12\big((\alpha_\sigma-\alpha\times a)\cdot n\big)d\sigma^2
+\big((\alpha_\tau-\alpha\times b)\cdot n\big)d\sigma d\tau
+\frac12\big((\beta_\tau-\beta\times b)\cdot n\big)d\tau^2 \; ,
\]
in which $n=-\alpha\times\beta/|\alpha\times\beta|$. We also introduce the covariant derivatives\footnote{We note 
that using these definitions, the compatibility equations $d\Omega = \Omega \wedge \Omega$ and (\ref{eq:4}) 
become, simply,   $a_{\tau}-b_{\sigma}=a\times b$ and $D_{\tau}\alpha = D_{\sigma}\beta$ respectively.}
\[
D_\sigma X=X_\sigma-X\times a,
\qquad \mbox{ and } \qquad
D_\tau X=X_\tau-X\times b\; 
\]
and we set $\triangle =|\alpha\times\beta|^2 = |\alpha|^2|\beta|^2-(\alpha\cdot\beta)^2$. 
The moving coframe and connection one-forms associated with the fundamental forms (\ref{iff}) and (\ref{iiff}) are 
\[
\omega^1=\frac12|\alpha|\,d\sigma
+\frac12\frac{\alpha\cdot\beta}{|\alpha|}\,d\tau 
\qquad \mbox{ and } \qquad
\omega^2=\frac12\frac{\sqrt{\triangle}}{|\alpha|}\,d\tau\; ,
\]
and 
\begin{eqnarray*}
\omega_{12}
& = &
-\frac{1}{|\alpha|^2\sqrt{\triangle}}
\Big(
|\alpha|^2(D_\sigma\alpha\cdot\beta)
-(\alpha\cdot\beta)(\alpha_\sigma\cdot\alpha)
\Big)\,d\sigma 
-\frac{1}{|\alpha|^2\sqrt{\triangle}}
\Big(
|\alpha|^2(D_\tau\alpha\cdot\beta)
-(\alpha\cdot\beta)(\alpha_\tau\cdot\alpha)
\Big)\,d\tau\; , \\\\[1mm]
\omega_{13}
& = &
\frac{1}{|\alpha|}\,(D_\sigma\alpha\cdot n)\,d\sigma
+
\frac{1}{|\alpha|}\,(D_\tau\alpha\cdot n)\,d\tau \; , \\\\[1mm]
\omega_{23}
& = &
\frac{1}{|\alpha|\sqrt{\triangle}}
\Big(
|\alpha|^2(D_\sigma\beta\cdot n)
-(\alpha\cdot\beta)(D_\sigma\alpha\cdot n)
\Big)\,d\sigma
+
\frac{1}{|\alpha|\sqrt{\triangle}}
\Big(
|\alpha|^2(D_\tau\beta\cdot n)
-(\alpha\cdot\beta)(D_\tau\alpha\cdot n)
\Big)\,d\tau \; .
\end{eqnarray*}

A quick corollary of these computations, that shows how different from the pseudo-spherical case discussed in 
Subsection 2.3 the $K=+1$ case is, is the following:

\begin{corollary}
If $\Xi=0$ describes spherical surfaces with associated one-forms $\omega^\alpha$, it is an equation of
Lund-Regge type that describes totally geodesic (hence minimal) spherical surfaces immersed in 
$S^3$. In addition, if $\Xi=0$ admits a (generalized) symmetry, we can find one-forms 
$\omega_{13}$ and $\omega_{23}$ depending on finite order jets of $u$, such that the five one-forms
$\omega^\alpha$ ($\alpha=1,2,3$), $\omega_{13}$, $\omega_{23}$ satisfy the structure 
equations of a spherical surface immersed in Euclidean 3-space on solutions to $\Xi=0$. 
\end{corollary}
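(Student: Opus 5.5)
The plan is to prove the two assertions separately. Both follow from reading the spherical structure equations as a special case of the Lund--Regge structure equations, then invoking Theorem \ref{fg}.

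\emph{First assertion.} Let $\omega^1,\omega^2,\omega^3$ satisfy $d\omega^1=\omega^3\wedge\omega^2$, $d\omega^2=\omega^1\wedge\omega^3$ and $d\omega^3=-\omega^1\wedge\omega^2$ on solutions of $\Xi=0$. I will set
\[
\omega_{12}=\omega^3,\qquad \omega_{13}=0,\qquad \omega_{23}=0,
\]
and check the six structure equations of Definition \ref{elrt} one at a time.
\begin{itemize}
\item The first two, $d\omega^1=\omega_{12}\wedge\omega^2$ and $d\omega^2=\omega^1\wedge\omega_{12}$, are literally the first two spherical equations.
\item The symmetry condition $\omega^1\wedge\omega_{13}+\omega^2\wedge\omega_{23}=0$ is trivial.
\item The Gauss equation $d\omega_{12}=\omega_{23}\wedge\omega_{13}-\omega^1\wedge\omega^2$ reduces to $d\omega^3=-\omega^1\wedge\omega^2$, which is the $K=+1$ equation.
\item The Codazzi equations hold trivially, since $0=0$.
\end{itemize}
So $\Xi=0$ is of Lund--Regge type. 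Since $\omega_{13}=\omega_{23}=0$, the second fundamental form vanishes, and the associated surfaces are totally geodesic (hence minimal) in $S^3$, that is, pieces of great spheres. This is consistent with the Gauss equation $K=1+\det II=1$.

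\emph{Second assertion.} I will apply Theorem \ref{fg} to the Lund--Regge structure just built. With the above choice, the matrix (\ref{matrix_omega}) is built from
\[
\overline{\omega}_1=-\omega^2,\qquad \overline{\omega}_2=\omega^3,\qquad \overline{\omega}_3=-\omega^1,
\]
so it depends only on finite jets of $u$. Given a generalized symmetry $v_Q$ (finite order characteristic $Q$), the Lie derivative $\Lambda=L_{v_Q}\Omega=A\,d\sigma+B\,d\tau$ again has entries that are differential functions of finite order. Equation (\ref{eq:4}) holds on solutions, as explained before Theorem \ref{fg}, so the Fokas--Gelfand construction yields a surface in flat three-space with fundamental forms (\ref{iff}) and (\ref{iiff}). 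The explicit formulas displayed after Theorem \ref{fg} give the coframe $\omega^1,\omega^2$ and the connection forms $\omega_{12},\omega_{13},\omega_{23}$ for this surface, all depending on finite order jets, and they satisfy the Euclidean structure equations (those of Definition \ref{elrt} without the $-\omega^1\wedge\omega^2$ term). I then need to check that this immersed surface is spherical. Here I will use the Fokas--Gelfand fact that when $\Lambda$ comes from a symmetry of the same flat connection, the metric $\langle A,A\rangle d\sigma^2+\dots$ has curvature determined by $\Omega$.

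\emph{Main obstacle.} The delicate step is identifying the Gaussian curvature of (\ref{iff}) as $+1$. In general a generalized symmetry produces an arbitrary immersed surface, not a spherical one. My first attempt is to choose the symmetry adapted to the structure: for the spectral-parameter (or rotation/gauge) symmetry, $\Lambda$ is $\Omega$-covariantly constant up to the Sym formula, and the resulting surface is the standard sphere (Sym--Bobenko). If this cannot be achieved for arbitrary $v_Q$, I would fall back on a more direct route that avoids the Fokas--Gelfand machinery: take the intrinsic spherical forms $\omega^1,\omega^2,\omega_{12}=\omega^3$, set $\omega_{13}=\omega^1$ and $\omega_{23}=\omega^2$ (the umbilic unit sphere), and verify the flat Gauss equation $d\omega_{12}=\omega_{13}\wedge\omega_{23}$ and the Codazzi equations directly from the spherical structure equations. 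These forms are manifestly of finite order. In that reading the symmetry hypothesis only ensures non-degeneracy, $[A,B]\neq0$, of the Fokas--Gelfand realization.
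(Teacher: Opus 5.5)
Your first assertion is the intended argument: set $\omega_{12}=\omega^3$ and $\omega_{13}=\omega_{23}=0$, and the Gauss equation of Definition~\ref{elrt} collapses to $d\omega^3=-\omega^1\wedge\omega^2$.

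For the second assertion your route differs from the paper's. The paper's one-line proof appeals to the computations following Theorem~\ref{fg} together with \cite[Example 3]{BHR}. There the symmetry $v_Q$ generates the Euclidean data through $\Lambda=L_{v_Q}\Omega$ and the Fokas--Gelfand construction.

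Your own Fokas--Gelfand attempt is not a proof as written. As you note, a generic symmetry gives a surface whose first fundamental form $\langle A,A\rangle d\sigma^2+\cdots$ is unrelated to $(\omega^1)^2+(\omega^2)^2$. Its coframe is not the given $\omega^1,\omega^2$, and no general ``Fokas--Gelfand fact'' forces $K=+1$. That sentence should go.

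Your fallback, however, is a complete and more elementary proof. Take $\omega_{13}=\omega^1$ and $\omega_{23}=\omega^2$, the umbilic unit-sphere data with second fundamental form equal to the first.
\begin{itemize}
\item The symmetry condition $\omega^1\wedge\omega_{13}+\omega^2\wedge\omega_{23}=0$ is trivial.
\item The Codazzi equations $d\omega_{13}=\omega_{12}\wedge\omega_{23}$ and $d\omega_{23}=\omega_{13}\wedge\omega_{12}$ are literally $d\omega^1=\omega^3\wedge\omega^2$ and $d\omega^2=\omega^1\wedge\omega^3$.
\item The Euclidean Gauss equation holds, but mind the sign. In the paper's convention it is $d\omega_{12}=\omega_{23}\wedge\omega_{13}$, i.e.\ Definition~\ref{elrt} without the $-\omega^1\wedge\omega^2$ term, as you said earlier. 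Its right-hand side is $-\omega^1\wedge\omega^2$, matching $d\omega^3=-\omega^1\wedge\omega^2$. The form $d\omega_{12}=\omega_{13}\wedge\omega_{23}$ you wrote at the end would make the check fail.
\end{itemize}

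Your route proves the statement for the given $\omega^\alpha$ without using any symmetry. This makes the contrast with the $K=-1$ obstruction of Subsection~2.3 transparent. Accordingly, drop the closing remark that the symmetry ``ensures'' $[A,B]\neq0$: that is a hypothesis of Theorem~\ref{fg}, not a consequence of having a symmetry.

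The paper's route, in exchange, ties Euclidean immersions to the symmetries of $\Xi=0$. It produces surfaces whose fundamental forms (\ref{iff})--(\ref{iiff}) are built from $\Lambda$, rather than the single umbilic realization.
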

The proof follows from the above computations and \cite[Example 3]{BHR}.

\section{The generalized Konno--Oono system}

In this section we investigate in detail an example of an equation of Lund-Regge type.  
Konno and Oono introduced the well-known coupled system
\begin{equation} \label{sko}
  q_t + 2 r r_x = 0\; ,\qquad r_{xt}-2qr=0 \; ,
\end{equation}
in \cite{KO,KO2}. It is not difficult to prove that it describes spherical surfaces with associated one-forms
\begin{equation} \label{ko00}
  \omega^1 = 2\lambda q\,dx - \frac{1}{\lambda}\,dt,\qquad
  \omega^2 = -2r\,dt,\qquad
  \omega_{12} = -2\lambda r_x\,dx,
\end{equation}
where $\lambda\neq 0$ is a real parameter. Following \cite{BHR}, we introduce an auxiliary field $\delta(x,t)$ 
and we set
\begin{equation} \label{ko01}
  \omega_{13} = 4\lambda\,\delta_x r\,dx,\qquad \mbox{ and } \qquad 
  \omega_{23} = -2\delta_x\,dx - 2\delta_t\,dt\; .
\end{equation}
Then, the one-forms $\{\omega^1,\omega^2,\omega_{12},\omega_{13},\omega_{23}\}$
satisfy the structure equations of a surface immersed in $S^3$ if and only if
$r(x,t),q(x,t),\delta(x,t)$ solve
\begin{align}
  q_t + 2 r r_x &= 0, \label{eq:1}\\
  r_{xt} - 2qr - 4r\delta_x\delta_t &= 0, \label{eq:2}\\
  \delta_{xt} r + r_t\delta_x + r_x\delta_t &= 0. \label{eq:3}
\end{align}
We call (\ref{eq:1})--(\ref{eq:3}) the {\em generalized Konno--Oono system}\footnote{We are aware that the term ‘generalized Konno–Oono equation/system’ has also been used in the coupled-dispersionless 
literature for a different integrable system, see \cite{kk2}. We have decided to keep the name, however, in order to be consistent with \cite{BHR}.}. This system of nonlinear partial 
differential equations appeared for the first time in \cite{BHR}, to the best of our knowledge. 

The Gaussian and mean curvatures of the surfaces described by (\ref{eq:1})--(\ref{eq:3}) are
\begin{equation} \label{curvatures}
  K = \frac12\frac{r_{xt}}{rq} = 1 + \frac{2}{q}\,\delta_x\delta_t
  \qquad \mbox{ and } \qquad 
  H = \frac{1}{rq}\Big(r^2\delta_x + \frac{1}{4\lambda^2}\delta_x + \frac12\delta_t q\Big)
\end{equation}
respectively. In particular, if $\delta$ is constant, the system reduces to (\ref{sko}) and, since in this case $K=1$ and $H=0$, we see that 
the original Konno--Oono system describes minimal spherical surfaces locally immersed in $S^3$. On the other hand, in the travelling wave regime, 
with $q(x,t)=Q(\xi)$, $r(x,t)=R(\xi)$, $\delta(x,t)=\Delta(\xi)$ and $\xi=x-vt$, the geometry of the immersed surface is 
transported rigidly with the wave profile.  Let us make a short advance of the kind of phenomena we will discuss in Sections 5 and 6.
For $v\neq 0$ and on any interval on which $R(\xi)\neq 0$, \eqref{eq:3} and \eqref{eq:1} imply that the Gaussian curvature can be rewritten as
\begin{equation} \label{twgc}
K(\xi)=1-\frac{2v^{2}C^{2}}{R^{4}(\xi)(R^2(\xi)-C_0)}\; 
\end{equation}
for constant numbers $C_0$ and $C$. This shows that surfaces described by the generalized Konno--Oono system may have singularities (if $R^2(\xi) = C_0$) or, if 
$R^2(\xi) \neq C_0$, that the sign of $K$ is completely determined by comparison between $R^4(\xi)(R^2(\xi)-C_0)$ and the 
threshold $2v^{2}C^{2}$. More precisely,
\[
K(\xi)>0 \iff R^{4}(\xi)(R^2(\xi)-C_0) >2v^{2}C^{2} \; . 
\]
In particular, we anticipate that if $R(\xi)$ is oscillatory and $R^{4}(\xi)(R^2(\xi)-C_0)$  crosses the critical level $2v^{2}C^{2}$, the Gaussian curvature will change sign along the travelling wave profile.

\smallskip

Now we focus on structural characteristics  of (\ref{eq:1})--(\ref{eq:3}). 

First, we observe that, since the one-forms (\ref{ko00}) and (\ref{ko01}) depend on the parameter 
$\lambda$, Theorem 3 suggests that the system  (\ref{eq:1})--(\ref{eq:3})  may be
solvable via scattering/inverse scattering. Explicitly, the associated linear system is 
$\psi_x = X\, \psi$, $\psi_t = T\, \psi$, 
in which  
\begin{equation*} 
X = \left( \begin{array}{ccc}
0 & -2 \lambda r_x & 4 \lambda \delta_x r \\
2 \lambda r_x & 0 & -2 \delta_x - 2 \lambda q \\
-4 \lambda \delta_x r  & 2 \delta_x + 2 \lambda q & 0
\end{array}
        \right)\; ,
\quad
T = \left( \begin{array}{ccc}
0 & 0 & - 2 r  \\
0 & 0 & \displaystyle -2 \delta_t + \frac{1}{\lambda}\\
 2 r &\displaystyle 2 \delta_t - \frac{1}{\lambda} & 0
\end{array}
        \right)\; .
\end{equation*}
Matrices of this type are closely related to the geometry of moving frames, curve motions, and spin systems, see for instance \cite{My}. Motivated by
this observation, we record in Proposition~\ref{spin} an explicit spin-type reformulation of a distinguished conserved quantity of the generalized Konno--Oono system: we show that the vector $X=(q,r_x,2r\delta_x)$ satisfies a precession equation
\[
X_t=\Omega_X\times X\; ,
\]
and that the squared length
\[
|X|^2=q^2+r_x^2+4r^2\delta_x^2
\]
is independent of $t$ on solutions. A more complete study of the associated moving-frame,
curve-motion, and spin geometry will be pursued elsewhere.

\smallskip

Second, we consider symmetries. We state the following straightforward proposition.

\begin{proposition}
The algebra of local point symmetries for the generalized Konno--Oono system $(\ref{eq:1})$--$(\ref{eq:3})$ is generated by 
\[
\frac{\partial}{\partial t}\; ,
\qquad
t \frac{\partial}{\partial t} - q \frac{\partial}{\partial q} - r \frac{\partial}{\partial r}\; ,
\qquad
\frac{\partial}{\partial \delta}\; ,  
\]
and the vector fields 
\[
\mathbf X_f = f(x)\frac{\partial}{\partial x} - f_x(x)q\frac{\partial}{\partial q}\; .
\]
Moreover, 
\[
[ \mathbf X_f,\mathbf X_g] = \mathbf X_{fg_x-gf_x}\; .
\]
Thus, the system  $(\ref{eq:1})$--$(\ref{eq:3})$ admits an infinite-dimensional Lie algebra
of local symmetries, isomorphic to the Lie algebra of vector fields in the $x$-variable.
\end{proposition}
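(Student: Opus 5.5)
The plan is to compute the point symmetries with Lie's infinitesimal invariance criterion (see \cite{O}). I will then check the bracket relation directly.

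\textbf{Step 1: setting up the determining equations.} Take a general vector field
\[
\mathbf v=\xi\,\partial_x+\tau\,\partial_t+\phi\,\partial_q+\psi\,\partial_r+\chi\,\partial_\delta ,
\]
with coefficients depending on $(x,t,q,r,\delta)$. Compute its second prolongation and apply it to the three equations. Work on the open subset of the equation manifold where $r\neq 0$, and use the system to eliminate
\[
q_t=-2rr_x,\qquad r_{xt}=2qr+4r\delta_x\delta_t,\qquad \delta_{xt}=-(r_t\delta_x+r_x\delta_t)/r .
\]
The remaining jet coordinates $q_x,r_x,r_t,\delta_x,\delta_t,r_{xx},r_{tt},\delta_{xx},\delta_{tt},\dots$ are then free. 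I split each invariance condition into the coefficients of monomials in these free coordinates.

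\textbf{Step 2: solving, in the expected order.}
\begin{itemize}
\item The second-order conditions, in particular the coefficients of $r_{xx}$, $r_{tt}$, $\delta_{xx}$, $\delta_{tt}$ in the prolongations of the second and third equations, should force $\xi=\xi(x)$, $\tau=\tau(t)$. They should also force $\psi$ and $\chi$ to be linear in $r$ and $\delta$ respectively, with no mixing.
\item The quadratic term $4r\delta_x\delta_t$ and the homogeneous structure of the third equation in $(\delta_x,\delta_t)$ should force $\chi_\delta=0$. Hence $\chi$ is a constant, which accounts for $\partial_\delta$.
\item The first and second equations then give $\psi=-\tau'(t)\,r$ and $\phi=-(\tau'+\xi')\,q$. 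The remaining constraints should yield $\tau''=0$.
\end{itemize}
Collecting the free data $\xi(x)$ and $\tau=c_1+c_2t$ produces exactly the listed generators: $\mathbf X_f$ with $f=\xi$, $\partial_t$, and the scaling field. Translation in $x$ is included as $\mathbf X_1$.

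\textbf{Step 3: direct confirmation.} As a cross-check, verify that each listed field is a symmetry. Under the scaling $t\mapsto e^{\varepsilon}t$, $q,r\mapsto e^{-\varepsilon}q,\,e^{-\varepsilon}r$, every term of each equation scales homogeneously. Under $\mathbf X_f$, every term contains exactly one $x$-derivative or one factor of $q$, so every term picks up the same factor $-f_x$.

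\textbf{Step 4: the bracket.} Compute
\[
[\mathbf X_f,\mathbf X_g]=(fg_x-gf_x)\partial_x+\big(-f(g_{xx}q)+f_xg_xq+gf_{xx}q-g_xf_xq\big)\partial_q ,
\]
which equals $(fg_x-gf_x)\partial_x-(fg_x-gf_x)_x\,q\,\partial_q=\mathbf X_{fg_x-gf_x}$.

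The main obstacle is the bookkeeping in Step 2. The system is not in Kovalevskaya form, and the elimination involves division by $r$, so one must be careful to identify the genuinely free jet coordinates. In particular, ruling out rescalings of $\delta$ and any $t$-dependence of $\xi$ is where the splitting must be done carefully.
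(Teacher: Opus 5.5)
\paragraph{A step in Step 2 would fail.} The paper gives no argument for this proposition; it calls the computation routine symbolic work. So Lie's algorithm is the expected route, and your Steps 3 and 4 are correct. Every term of every equation contains exactly one $x$-derivative or one factor $q$, so $\mathrm{pr}\,\mathbf X_f$ multiplies each equation by $-f_x$, and your bracket does equal $\mathbf X_{fg_x-gf_x}$.

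The gap is in Step 2, which carries the whole completeness claim: the intermediate structure you predict there is false. The determining equations coming from \eqref{eq:2}--\eqref{eq:3} do \emph{not} force $\psi,\chi$ to be linear in $r,\delta$ without mixing, nor do they force $\chi_\delta=0$. Write $\Delta_2,\Delta_3$ for the left-hand sides of \eqref{eq:2}, \eqref{eq:3}, and take the field (with $\xi=\tau=\phi=0$)
\[
\mathbf v=r\cos(4\delta)\,\partial_r-\tfrac12\sin(4\delta)\,\partial_\delta .
\]
A direct computation gives $\mathrm{pr}\,\mathbf v(\Delta_2)=\cos(4\delta)\,\Delta_2-4\sin(4\delta)\,\Delta_3$ and $\mathrm{pr}\,\mathbf v(\Delta_3)=-\cos(4\delta)\,\Delta_3$. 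So $\mathbf v$ satisfies every invariance condition coming from \eqref{eq:2}--\eqref{eq:3}, yet $\chi_\delta=-2\cos(4\delta)\neq0$.

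This is structural. Put $w=re^{2i\delta}=a+ib$. Then $w_{xt}-2qw=(\Delta_2+2i\Delta_3)e^{2i\delta}$, so \eqref{eq:2}--\eqref{eq:3} are just $a_{xt}=2qa$, $b_{xt}=2qb$. These are invariant under every constant real linear map of $(a,b)$ with $q$ fixed; the field $\mathbf v$ above is $a\partial_a-b\partial_b$ written in polar form. Neither the term $4r\delta_x\delta_t$ nor the homogeneity of \eqref{eq:3} can exclude such fields.

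\paragraph{How to close the argument.} What removes these extra fields is the first-order equation \eqref{eq:1}, i.e.\ $q_t+(a^2+b^2)_x=0$. The computation closes cleanly in the variables $(q,a,b)$:
\begin{enumerate}[label=(\roman*)]
\item The coefficients of $a_{xx},b_{xx}$ and of $a_{tt},b_{tt}$ give $\xi=\xi(x)$, $\tau=\tau(t)$ and $A_q=B_q=0$. Note that $q_{xt}=-2(a_x^2+aa_{xx}+b_x^2+bb_{xx})$ enters through $A_q\,q_{xt}$. The analogous contamination in your variables is why the $r_{xx}$-coefficient of \eqref{eq:2} alone only gives $\xi_t=-2r\psi_q$; use the $\delta_{xx}$-coefficient of \eqref{eq:3} there.
\item The terms quadratic and linear in first derivatives give $(A,B)^T=M(a,b)^T+(\alpha_0,\beta_0)^T$, with $M$ a constant real $2\times 2$ matrix.
\item The zero-order terms give $\alpha_0=\beta_0=0$ and $\phi=-(\xi'+\tau')q$.
\item Only the prolongation of \eqref{eq:1} then yields $\tau''=0$ and $M=-\tau' I+\theta J$ with $J$ antisymmetric.
\end{enumerate}
Using $a\partial_a+b\partial_b=r\partial_r$ and $b\partial_a-a\partial_b=-\tfrac12\partial_\delta$, the last step is exactly $\psi=-\tau' r$ and $\chi=\mathrm{const}$.

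So the proposition is true. But your Step 2 assigns the reduction of the $(r,\delta)$-action to the wrong equations, and carried out as you describe it would not produce the structure you claim.
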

\noindent 
We omit the proof since computation of point symmetries has become essentially a matter of symbolic computation. We note that if the $x$-variable belongs to $S^1$, this proposition says that 
the centerless  Virasoro algebra $Vect(S^1)$ is a subalgebra of point symmetries of our generalized Konno--Oono system. 
The presence of this infinite-dimensional local symmetry algebra allows us to give an interesting interpretation to the dependent variables $r, q, \delta$: the variables $r$ and
$\delta$ ``transform as scalar fields'', while $q$ ``transforms as a weight-one density'' under the action of $Diff(S^1)$. Indeed, changing to evolutionary representatives, their variations 
are
\[
\delta_f\, r=-f\, r_x,\qquad
\delta_f\, \delta=-f\, \delta_x,\qquad
\delta_f\, q=-(fq)_x\; .
\]

Third, we turn to the study of conservation laws. Our goal is to prove that our generalized Konno--Oono system
admits infinitely many local conservation laws that are pairwise distinct and non-trivial in horizontal cohomology. We begin with some formal arguments.

\smallskip

Equation (\ref{sgammalr}) becomes the Riccati system 
\begin{equation} \label{sgamma6}
-\Gamma_x = (2 \delta_x r - i r_x ) \lambda + (2 i \delta_x + 2 i q \lambda) \Gamma + 
(2 \delta_x r + i r_x) \lambda \Gamma^2\; 
\end{equation}
and
\begin{equation} \label{sgamma66}
-\Gamma_t = \frac{1}{\lambda}(-  r \lambda + (2 i \lambda \delta_t - i) \Gamma -r  
\lambda \Gamma^2 ) \;  .
\end{equation}
Therefore, if $\displaystyle \Gamma = \sum_{n=0}^\infty \Gamma_n\,\lambda^{-n}$ and we expand (\ref{sgamma6}) in 
powers of $\lambda$, we find
\begin{eqnarray*}
-2 \sum_{n=0}^\infty \Gamma_{n,x}\,\lambda^{-n} & = & (4 \delta_x r - 2 i r_x )\lambda + 
4 i \delta_x \sum_{n=0}^\infty \Gamma_n\,\lambda^{-n} + 4 i q \sum_{n=0}^\infty \Gamma_n\,\lambda^{-n+1} \\
 &  & + 2 (2 \delta_x r + i r_x) \sum_{n, m=0}^\infty \Gamma_n \Gamma_m \,\lambda^{-n-m+1}.
\end{eqnarray*}
Thus, the following equations hold:
\begin{eqnarray}
0 & = & 4 \delta_x r - 2 i r_x + 4 i q \Gamma_0 + 2 (2 \delta_x r + i r_x) \Gamma_0^2 \label{g0} \\
- 2 \Gamma_{0,x} & = &  4 i \delta_x \Gamma_0 + 4 (2 \delta_x r + i r_x) \Gamma_0 \Gamma_1 + 4 i q \Gamma_1  \label{ggg} \\
- 2 \Gamma_{n,x} & = & 4 i \delta_x \Gamma_n + 4 i q \Gamma_{n+1} + 2 (2 \delta_x r + i r_x)\sum_{j=0}^{n+1}
\Gamma_{n-j+1}\Gamma_j \; , \quad n \geq 1\; . \label{gg}
\end{eqnarray}
These equations determine {\em local} expressions for the functions $\Gamma_n$, and therefore Equation 
(\ref{alla1lr}) allows us to conclude that (\ref{eq:1})--(\ref{eq:3}) admits {\em local}  conservation laws. 

\smallskip

We compute $\Gamma_0$ using (\ref{g0}). We set $S = q^2 + 4\delta_x^2 r^2 + r_x^2\,$ and we find
$$
\Gamma_0 = \frac{-q \pm \sqrt{S}}{2 \delta_x r + i r_x}\,i =
\frac{-q \pm \sqrt{S}}{4\delta_x^2 r^2 + r_x^2} (r_x + 2\delta_x r\ i)\; .
$$
The first conservation law arising from (\ref{alla1lr}) is (see also \cite[p. 14]{BHR})
$$
\Theta^{(-1)} = (- 2 i q - (4 \delta_x r + 2 i r_x)\Gamma_0 ) dx \; .
$$
Replacing $\Gamma_0$ into $\Theta^{(-1)}$ and separating real and imaginary parts, we obtain 
$Re(\Theta^{(-1)}) =0$ and
$$
Im(\Theta^{(-1)}) = \mp 2 \sqrt{S} dx\; .
$$
Thus, we find that on solutions to the generalized Konno--Oono system,
\begin{equation}   \label{im}
q^2 + 4\delta_x^2 r^2 + r_x^2 = \kappa(x) \; .
\end{equation}
This conservation law allows us to prove, as anticipated, that there is a connection between the generalized Konno--Oono system and spin structures. 
We make the following observation.

\begin{proposition} \label{spin}
Let us consider the generalized Konno--Oono system $(\ref{eq:1})$--$(\ref{eq:3})$, and define the vector field
\[
{X}(x,t)=(q,\; r_x,\; 2r\delta_x)
\]
and the angular velocity vector
\[
{\Omega}_X(x,t)=(-2\delta_t,\;0,\;2r)\; .
\]
Then, on solutions to the generalized Konno--Oono system, ${X}$ satisfies the precession equation
\[
{X}_t={\Omega}_X \times {X}
\]
on a sphere of radius $\sqrt{\kappa(x)} = \sqrt{q^2+r_x^2+4r^2\delta_x^2}\,$. 
\end{proposition}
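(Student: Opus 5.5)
The plan is a direct componentwise verification of $X_t=\Omega_X\times X$, using each equation of the system $(\ref{eq:1})$--$(\ref{eq:3})$ exactly once. Conservation of the length of $X$ then follows from the elementary identity $X\cdot(\Omega_X\times X)=0$, which gives an independent re-derivation of $(\ref{im})$.

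First I compute the cross product. With $\Omega_X=(-2\delta_t,0,2r)$ and $X=(q,r_x,2r\delta_x)$,
\[
\Omega_X\times X=\big(0\cdot 2r\delta_x-2r\,r_x,\; 2r\,q-(-2\delta_t)(2r\delta_x),\; (-2\delta_t)r_x-0\cdot q\big)
=\big(-2rr_x,\; 2qr+4r\delta_x\delta_t,\; -2\delta_t r_x\big).
\]

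Next I compare this with $X_t=(q_t,\; r_{xt},\; 2r_t\delta_x+2r\delta_{xt})$ one component at a time.
\begin{itemize}
\item The first components agree by $(\ref{eq:1})$, since $q_t=-2rr_x$.
\item The second components agree by $(\ref{eq:2})$, since $r_{xt}=2qr+4r\delta_x\delta_t$.
\item For the third component, rewrite $(\ref{eq:3})$ as $r\delta_{xt}+r_t\delta_x=-r_x\delta_t$. Multiplying by $2$ gives $2(r\delta_x)_t=-2r_x\delta_t$, which is exactly the third component of $\Omega_X\times X$.
\end{itemize}
Hence $X_t=\Omega_X\times X$ holds on solutions.

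Finally, for the sphere statement I differentiate $|X|^2$ in $t$:
\[
\partial_t|X|^2=2X\cdot X_t=2X\cdot(\Omega_X\times X)=0 .
\]
So $|X|^2=q^2+r_x^2+4r^2\delta_x^2$ depends only on $x$, and equals the function $\kappa(x)$ of $(\ref{im})$. This matches the conclusion already obtained from $Im(\Theta^{(-1)})$, but is reached here without the Riccati expansion. For each fixed $x$, the curve $t\mapsto X(x,t)$ therefore lies on the sphere of radius $\sqrt{\kappa(x)}$, and it evolves by rigid precession with angular velocity $\Omega_X$.

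I do not expect a real obstacle. The only point requiring care is the third component, where $(\ref{eq:3})$ has to be recognized as the statement $(r\delta_x)_t=-r_x\delta_t$. The rest is sign bookkeeping in the cross product.
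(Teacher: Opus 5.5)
Your proof is correct, and its core is the same componentwise check the paper performs: \eqref{eq:1}, \eqref{eq:2} and \eqref{eq:3} are each used once, one for each component of $X_t=\Omega_X\times X$. The only difference is in the sphere claim. The paper invokes (\ref{im}), obtained earlier from the closedness of $Im(\Theta^{(-1)})$ via the Riccati expansion, whereas you derive $\partial_t|X|^2=0$ directly from $X\cdot(\Omega_X\times X)=0$, which makes the proposition self-contained and independently confirms (\ref{im}).
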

\begin{proof}
Define
\[
{X}=(X_1,X_2,X_3)=(q,r_x,2r\delta_x).
\]
From (\ref{eq:1})--(\ref{eq:3}) we obtain
\begin{eqnarray*}
X_{1,t} & = & q_t \; = \; -2rr_x=-2rX_2 \; ;\\
X_{2,t} & = & r_{xt}\; = \; 2qr+4r \delta_x \delta_t =2rX_1+2\delta_tX_3 \; ; \\
X_{3,t} & = & 2 r_t \delta_x + 2r \delta_{xt} =-2\delta_tX_2 .
\end{eqnarray*}
Thus, 
\[
{X}_t=(-2rX_2,\;2rX_1+2\delta_tX_3,\;-2\delta_tX_2) = (-2\delta_t,0,2r) \times X \; ,
\]
that is, ${X}_t={\Omega}_X \times {X}$.  Since (\ref{im}) holds on solutions, we conclude that $|X|^2 = \kappa(x)$ is preserved along solutions to this precession equation.
\end{proof}

\smallskip

Now we go back to the study of conservation laws. Substituting $\Gamma_0$ into (\ref{gg}) and collecting terms, we obtain the explicit formula
\begin{equation} \label{gs}
\Gamma_{n+1}
=
\frac{-\Gamma_{n,x}-2i\delta_x\Gamma_n
-(2r\delta_x+ir_x)\sum_{j=1}^{n}\Gamma_j\Gamma_{n+1-j}}
{\pm 2\,i\sqrt{S}}\; ,
\end{equation}
and therefore if we consider the closed one-form $\Theta$ given by (\ref{alla1lr})  and expand it in powers of $\lambda$, we obtain the conservation laws
\begin{eqnarray*}
\Theta^{(-1)} & = & (-2 i q - 4 \Gamma_0 \delta_x r - 2 i \Gamma_0 r_x  ) dx\; , \\
\Theta^{(0)} & = & (-2 i \delta_x - 4\Gamma_1 \delta_x r - 2 i \Gamma_1 r_x ) dx + (2\Gamma_0 r -2 \delta_t i) dt
\; , \\
\Theta^{(1)} & = & -(4 \delta_x r + 2 i r_x )\Gamma_2 dx + (i + 2 r \Gamma_1 )dt \; ,  \\
\Theta^{(n)} & = & -(4 \delta_x r + 2 i r_x )\Gamma_{n+1} dx + 2 r \Gamma_n dt  \; , \qquad n \geq 2\; . 
\end{eqnarray*}
Our goal is to prove that {\em the subsequence}
\[
Im(\Theta^{(0)}),\ Im(\Theta^{(2)}),\ Im(\Theta^{(4)}),\dots
\]
{\em consists of non-trivial conservation laws that are pairwise different}. The proof is not obvious. First of all, we need to work ``on shell", on appropriate open subsets
of the full equation manifold $S^\infty$ of \eqref{eq:1}--\eqref{eq:3}. This system implies the equations 
\begin{equation}
q_t=-2rr_x,
\qquad
r_{xt}=2qr+4r\delta_x\delta_t,
\qquad
\delta_{xt}=-\frac{r_t\delta_x+r_x\delta_t}{r}\; ,
\label{eq:solved-system}
\end{equation}
and therefore a natural system of local coordinates on $\{ r \neq 0 \} \subset S^\infty$ is given by 
\begin{equation} \label{cs}
x,\ t,\qquad q_{x^k}, \ k\ge 0; \qquad r_{x^k}, \ k\ge 0; \qquad r_{t^\ell}, \ \ell\ge 1;\qquad  \delta_{x^k}, \ k\ge 0; \qquad  \delta_{t^\ell}, \ \ell\ge 1\; .
\end{equation}
Note that we can see by induction using the explicit formula for $\Gamma_0$ and Equation (\ref{gs}), that the density $\varrho_{n}$ of the conservation law 
$Im(\Theta^{(n)})=\varrho_{n}\,dx+J_{n}\,dt$, $n \geq 0$, depends only on $x$-derivatives on $S^\infty$.  Thus, of particular interest for us is the subalgebra of functions on $S^\infty$ depending only on
\[
x,\ q_{x^k},\ r_{x^k},\ \delta_{x^k},\qquad k\ge 0 \; .
\]
On this subalgebra, the total derivative operator $D_x$ becomes
\begin{equation}  \label{dx}
D_x
=
\frac{\partial}{\partial x}
+\sum_{k\ge 0} q_{x^{k+1}}\frac{\partial}{\partial q_{x^k}}
+\sum_{k\ge 0} r_{x^{k+1}}\frac{\partial}{\partial r_{x^k}}
+\sum_{k\ge 0} \delta_{x^{k+1}}\frac{\partial}{\partial \delta_{x^k}}\; ,
\end{equation}
which is the standard one-dimensional total derivative in the $x$-direction. The following result, our main theorem on conservation laws, is motivated by Proposition \ref{spin}.

\begin{theorem}
We denote by $S^\infty_\kappa \subset S^\infty$ the submanifold of $S^\infty$ determined by the generalized Konno--Oono system \eqref{eq:1}--\eqref{eq:3} together with
the constraint
\begin{equation} \label{co}
q^2+r_x^2+4r^2\delta_x^2=\kappa(x)\; .
\end{equation}
We work on an open subset $\mathcal U_\kappa \subset S^\infty_\kappa$ in which
\[
r\neq0,\qquad q\neq0,\qquad Q=q+s\sqrt{\kappa(x)} \neq 0\; ,
\]
with $s=\pm1$. The pull-backs of the one-forms $Im(\Theta^{(2m)})$, $m=1,2,\cdots$, to $\mathcal U_\kappa \subset S^\infty_\kappa$ define non-trivial, linearly independent horizontal cohomology classes.
In particular, they are pairwise distinct non-trivial conservation laws of the augmented system  \eqref{eq:1},  \eqref{eq:2}, \eqref{eq:3},  \eqref{co}.
\end{theorem}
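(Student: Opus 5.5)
The plan is to reduce the question to one about $x$-densities alone, and then detect non-triviality with the variational derivative. The reduction uses the fact, noted before the statement, that each density $\varrho_n$ of $Im(\Theta^{(n)})=\varrho_n\,dx+J_n\,dt$ depends only on $x$ and on the $x$-derivatives of $q,r,\delta$.

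\emph{Step 1: change of coordinates.} On $S^\infty_\kappa$ the constraint \eqref{co} lets us parametrize $(q,r_x,2r\delta_x)$ by a point on the sphere of radius $\sqrt{\kappa(x)}$, in the spirit of Proposition \ref{spin}. I would use stereographic coordinates from the pole $q=-s\sqrt\kappa$. This is exactly where the hypothesis $Q=q+s\sqrt\kappa\neq0$ is needed. Write
\[
\zeta=\frac{r_x+2i r\delta_x}{Q}.
\]
Then $\Gamma_0=s\,\zeta$ up to a unit factor, because $-q+s\sqrt S$ combined with $S=\kappa$ gives $\Gamma_0=(\sqrt\kappa^2-q^2)(\dots)/((r_x^2+4r^2\delta_x^2)Q)$. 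Consequently $q$ is eliminated as a function of $x,r_x,r,\delta_x$. A convenient system of fibre coordinates for the $x$-jet subalgebra of $\mathcal U_\kappa$ is then
\[
x,\quad r_{x^k}\ (k\ge0),\quad \delta_{x^k}\ (k\ge1),
\]
together with the $t$-coordinates, which play no role in the densities.

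\emph{Step 2: triviality criterion.} A horizontal one-form $\varrho\,dx+J\,dt$ that is closed on $\mathcal U_\kappa$ is exact iff $\varrho=D_x F$ for some $F$. The time component is then automatically matched, up to a function of $t$, by the acyclicity of the horizontal complex in degree one for the $x$-only subalgebra. On this subalgebra $D_x$ is the standard total derivative \eqref{dx}. Hence, by the classical theorem (see \cite[Section 4.3]{O}), triviality is equivalent to the vanishing of the Euler operator $\mathcal E_r(\varrho)=\mathcal E_\delta(\varrho)=0$, taken in the coordinates of Step 1. Linear independence of the classes $[Im\,\Theta^{(2m)}]$ is likewise reduced to linear independence of the Euler–Lagrange expressions $\mathcal E(\varrho_{2m})$.

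\emph{Step 3: leading-order analysis.} From the recursion \eqref{gs} we have
\[
\Gamma_{n+1}=\frac{-D_x\Gamma_n}{2is\sqrt\kappa}+\text{lower-order terms}.
\]
By induction, $\Gamma_n$ has differential order $n$ in $\zeta$. Its top-order part is $(2is\sqrt\kappa)^{-n}D_x^n\Gamma_0$, so it is linear in the highest derivative $r_{x^{n+1}}$, $\delta_{x^{n+1}}$. The density of $\Theta^{(n)}$ is $-(4\delta_xr+2ir_x)\Gamma_{n+1}$, and its top part is a total derivative. I would therefore construct special representatives: subtract explicit $D_x$-exact terms (integration by parts) until the density is brought to a normal form whose highest-order dependence is quadratic, of the form $c_m(x,\dots)\,|D_x^{m}\zeta|^2$ or $(r_{x^{m+1}})^2\cdot(\text{non-vanishing factor})$. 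This is the standard Kruskal-type argument. Taking imaginary parts is what kills the odd terms: the factors $i^{-n}$ make the quadratic leading term survive only for even $n$, which explains why the statement uses the subsequence $2m$.

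\emph{Step 4: conclusion.} For a density whose highest-order part is $a\,(u_{x^{k}})^2$ with $a\neq0$ and lower-order remainder, $\mathcal E_u$ has order exactly $2k$ with leading coefficient $2(-1)^k a$, hence is nonzero. For $m=1,2,\dots$ the orders $2k_m$ are strictly increasing, so a triangularity argument gives linear independence of the Euler–Lagrange expressions and hence of the cohomology classes.

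The main obstacle is Step 3. One must track the recursion \eqref{gs} carefully enough to show that after integrations by parts the quadratic top coefficient is genuinely nonzero on $\mathcal U_\kappa$; this is where $r\neq0$ and $q\neq0$ should enter, as a non-vanishing Jacobian of the stereographic map. One must also verify that the cross terms from $\sum\Gamma_j\Gamma_{n+1-j}$ do not cancel it. I would first compute $Im\,\varrho_2$ explicitly and read off the pattern, then prove it by induction on $m$ using the top-order formula.
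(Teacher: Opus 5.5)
Your skeleton matches the paper's: a stereographic chart, integration by parts to a normal form, a top-order obstruction, and triangularity for independence. However, the normal form you predict is wrong, and the step that carries the proof is left open.

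\textbf{The normal form.} Write $\zeta=\alpha+i\beta$. Then $\varrho_n=-\frac{4K^2}{K^2+|\zeta|^2}\,\mathrm{Re}(\bar\zeta\,\Gamma_{n+1})$, where $\Gamma_{n+1}$ has leading term $\lambda_{n+1}(x)\,\zeta_{x^{n+1}}$ with $\lambda_{n+1}\in i^{n+1}\mathbb{R}$. For $n=2m$ the leading term is therefore $c_m(\alpha\beta_{x^{2m+1}}-\beta\alpha_{x^{2m+1}})$, which is skew. Integrating by parts gives $\widetilde d_m(\alpha_{x^m}\beta_{x^{m+1}}-\beta_{x^m}\alpha_{x^{m+1}})$ plus terms of order at most $m$. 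This is \emph{linear} in the top jets, not a square. The square $|\zeta_{x^k}|^2$ you expect comes from $\mathrm{Re}(\bar\zeta\,\zeta_{x^{2k}})$, that is, from odd $n$. So your parity heuristic is reversed, and the test in your Step 4 (an Euler operator of order $2k$ with leading coefficient $2(-1)^k a$) never applies to $Im(\Theta^{(2m)})$. The correct obstruction is different: $(F_{\alpha_{x^m}},F_{\beta_{x^m}})=(-\widetilde d_m\beta_{x^m},\widetilde d_m\alpha_{x^m})$ is impossible because the right-hand side has ``curl'' $2\widetilde d_m\neq0$. Your Euler-operator route can be salvaged, but only once this skew normal form is identified.

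\textbf{The missing step.} You never show that the coefficient of the skew top term is nonzero. Two facts are needed.

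First, nothing else in $\varrho_{2m}$ may contribute to that coefficient. One must prove, by induction along the recursion, that the part of $\Gamma_n$ with exactly two derivative factors and total order $n$ contains only products $\zeta_{x^p}\zeta_{x^q}$, never mixed products $\zeta_{x^p}\bar\zeta_{x^q}$. Mixed terms would in general feed $\mathrm{Im}(\zeta_{x^m}\bar\zeta_{x^{m+1}})$ into the top coefficient, whereas unmixed ones integrate by parts to order $\le m$. One also needs a separate integration-by-parts lemma showing that every monomial with at least three factors, or with total order $\le 2m$, reduces to order $\le m$.

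Second, $c_m$ depends on $\alpha,\beta$, so moving derivatives off $\Omega_{0,2m+1}$ feeds extra $\Omega_{m,m+1}$ terms back in. The resulting coefficient is $(-1)^m\bigl[c_m+\tfrac12(\alpha\,\partial_\alpha c_m+\beta\,\partial_\beta c_m)\bigr]=1/\bigl(2^{2m-1}K^{2m-2}(K^2+|\zeta|^2)^2\bigr)$, and this has to be computed to see it is nonzero.

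\textbf{Coordinates.} Reverting to $r_{x^k},\delta_{x^k}$ as coordinates, as you propose in Step 1, throws away what makes this tractable. Because $\Gamma_0=\zeta/(sK)$ is itself a jet coordinate, the recursion becomes triangular with a leading coefficient depending only on $x$.

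\textbf{The $t$-jets in Step 2.} A primitive $F$ may depend on $t$-jets, and $D_x r_t$, $D_x\delta_t$ are not free of $t$-jets. The Euler-operator criterion on the $x$-subalgebra therefore needs an extra argument. A top-order comparison in the $x$-jets of $\alpha,\beta$ avoids this, because $D_xD_t^Nr$ and $D_xD_t^N\delta$ involve only zero-order $x$-data.
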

\begin{proof}

The proof is long. We have divided it in several steps so that its main structure is clear. 

\paragraph{Step I. A geometrically motivated change of coordinates in $\mathcal U_\kappa \subset S^\infty_\kappa$.}

We begin by noticing that on $S^\infty_\kappa$, the formula for $\Gamma_0$ and Equation (\ref{gs}) become
\begin{equation} \label{g00}
\Gamma_0 = \frac{-q + s \sqrt{\kappa(x)}}{\kappa(x)-q^2} (r_x + 2 i \delta_x r)  = \frac{1}{Q} (r_x + 2 i \delta_x r) \; , 
\end{equation}
and
\begin{equation} \label{rf}
\Gamma_{n+1}
= \frac{1}{2 s\,i\sqrt{\kappa(x)}}\,\left( -\Gamma_{n,x}-2i\delta_x\Gamma_n - (2r\delta_x+ir_x)\sum_{j=1}^{n}\Gamma_j\Gamma_{n+1-j} \right) 
\end{equation}
respectively. Since $q\neq0$, the constraint (\ref{co}) implies $\kappa(x)>0$ on $\mathcal U_\kappa$. Equation (\ref{co}) also implies that $q$ and its $x$-derivatives are no longer 
independent variables on $S^\infty_\kappa$. This fact makes direct computations with $\Gamma_n$ and $\Theta^n$ difficult: on the one hand, the overall coefficient $1/(2 s i \sqrt{S})$ appearing in
(\ref{gs}) becomes very simple, as (\ref{rf}) shows; on the other hand, the coefficient $Q$ appearing in (\ref{g00}) is no longer a zero-order function on $S^\infty_\kappa$, and this makes the recursive formula
(\ref{rf}) very hard to handle. It no longer separates the highest $x$-derivative jets in a transparent way.

Our solution is to make explicit a geometric fact which is hidden in formula (\ref{g00}). We set $K(x):=\sqrt{\kappa(x)}$  and we consider the spin vector
\[
X=(X_1,X_2,X_3):=(q,r_x,2r\delta_x) 
\]
that we introduced in Proposition \ref{spin}. As we showed therein,  we have
\[
X_1^2+X_2^2+X_3^2=K(x)^2\; 
\]
on $S^\infty_\kappa$, that is, for each fixed $x$, the vector $X$ lies on the sphere of radius $K(x)$. 
{\em We define $\zeta$ as the complex stereographic coordinate on this sphere  projected from the pole $(-sK,0,0)$ onto the 
equatorial plane $X_1=0$}, namely, 
\[
\zeta = \alpha + i \beta = sK \frac{r_x+2ir\delta_x}{q+sK} = sK \frac{r_x}{q+sK}  + i sK \frac{2r\delta_x}{q+sK} \; .
\]
Then, (\ref{g00}) becomes
$$\Gamma_0 = \frac{\zeta}{sK}\; .$$
This means that {\em we can consider $\Gamma_0$ itself as essentially a new coordinate on the open subset $\mathcal U_\kappa\,$}. The inverse parametrization is
\begin{equation} \label{nc}
q=sK\,\frac{K^2-|\zeta|^2}{K^2+|\zeta|^2}\; , \qquad r_x=\frac{2K^2\,\alpha}{K^2+|\zeta|^2}\; , \qquad 2r\delta_x=\frac{2K^2\,\beta}{K^2+|\zeta|^2}\; ,
\end{equation}
and in particular, we find
\[
Q=q+sK=\frac{2sK^3}{K^2+|\zeta|^2}\; .
\]

We use this stereographic coordinate as part of a new local coordinate chart on $\mathcal U_\kappa$. More precisely, comparing with (\ref{cs}), 
now we have coordinates  $x,t,r,\delta, \alpha, \beta$ and higher  $x$-derivatives of $\alpha, \beta$; we do not need $x$-derivatives of $r$ and $\delta$ as coordinates since they are computed via 
\[
D_x r=\frac{2K^2\alpha}{K^2+\alpha^2+\beta^2}\; ,
\qquad \mbox{ and } \qquad 
D_x\delta=\frac{K^2\beta}{r(K^2+\alpha^2+\beta^2)}\; ,
\]
and neither do we use variables $q,q_x,q_{xx},\ldots$, as we already pointed out. Also, we keep the $t$-derivatives of $r$ and $\delta$, but we do not
consider as coordinates the $t$-derivatives of $\alpha$ and $\beta$, since
straightforward computations imply that 
\begin{equation} \label{ncab}
\alpha_t = \frac{rs}{K}(K^2 + \alpha^2 - \beta^2) + 2 \beta \delta_t \; , \qquad \beta_t = \frac{2s r }{K} \alpha \beta - 2 \alpha \delta_t\; .
\end{equation}
Thus,  a complete set of ``stereographic coordinates'' on $\mathcal U_\kappa$ is
\begin{equation} \label{ncs}
x,\ t,\ r,\ \delta,\ 
\alpha,\beta,\alpha_x,\beta_x,\alpha_{xx},\beta_{xx},\ldots,
\delta_t,\delta_{tt},\delta_{ttt},\ldots,
r_t,r_{tt},r_{ttt},\ldots\;  .
\end{equation}

\paragraph{Step II. Conserved densities in stereographic coordinates.}

In the new coordinates (\ref{ncs}), the conserved density of $Im(\Theta^{(2m)})$ admits a simple expression. 
Let us write $\Gamma_n=A_n+iB_n$. Using  (\ref{rf}), it is easy to see that 
\begin{equation} \label{AB_full_rec}
A_{n+1}
= \frac{1}{2s\sqrt{\kappa(x)}} \Bigl( -B_{n,x} -2\delta_x A_n -(2r\delta_xQ_n+r_xP_n) \Bigr),
\end{equation}
\begin{equation}\label{AB_full_rec_im}
B_{n+1}
= \frac{1}{2s\sqrt{\kappa(x)}} \Bigl( A_{n,x} -2\delta_x B_n +(2r\delta_xP_n-r_xQ_n) \Bigr) \; ,
\end{equation}
in which
\[
P_n:=\sum_{j=1}^n\bigl(A_jA_{n+1-j}-B_jB_{n+1-j}\bigr),
\qquad
Q_n:=\sum_{j=1}^n\bigl(A_jB_{n+1-j}+B_jA_{n+1-j}\bigr)\; .
\]
Since  $\rho_n = -(4r\delta_x+2ir_x)\Gamma_{n+1}\,$, we obtain  $\varrho_n = {Im}(\rho_n) = -4r\delta_x\,B_{n+1}-2r_x\,A_{n+1}\,$, 
and therefore $\varrho_{2m} = -4r\delta_x\,B_{2m+1} -2r_x\,A_{2m+1}$.  Using (\ref{nc}) we obtain
\[
\varrho_{2m} = -\frac{4K^2}{K^2+|\zeta|^2} \left( \alpha A_{2m+1}+\beta B_{2m+1} \right)\; ,
\]
that is, 
\begin{equation} \label{cdc}
\varrho_{2m} = -\frac{4K^2}{K^2+|\zeta|^2} Re\bigl(\overline{\zeta}\,\Gamma_{2m+1}\bigr)\; .
\end{equation}

We claim that, for every $n\geq0$,
\begin{equation} \label{(46)}
\Gamma_n = \frac{i^n}{(2sK)^n sK}\,\zeta_{x^n} + \Gamma^{< n}\; ,
\end{equation}
where $\Gamma^{< n}$ involves only $\zeta,\zeta_x,\ldots,\zeta_{x^{n-1}}$ 
and their complex conjugates, in which here and henceforth, $\zeta_{x^j}$ for instance, means $\zeta_{xx\cdots x}$, the variable $x$ appearing $j$ times. We proceed by induction. For $n=0$, (\ref{(46)}) becomes, 
precisely $\Gamma_0=\zeta/sK$.  Assume that (\ref{(46)}) holds up to $n=k$. Equation (\ref{rf}) implies that the only term which can contain $\zeta_{x^{k+1}}$ is $-\Gamma_{k,x}$, since 
$\delta_x$ and $r_x$ are zero-order functions of $\alpha=(1/2)(\zeta+\overline{\zeta})$ and $\beta=(1/2i)(\zeta - \overline{\zeta})$, and the
remaining terms involve only $\Gamma_j$ with $j \leq k$. Hence,  using (\ref{rf}) we find 
\[
\Gamma_{k+1}
=
\frac{i^{k+1}}{(2sK)^{k+1}sK}\,\zeta_{x^{k+1}}
+
\Gamma^{< k+1}\; ,
\]
in which $\Gamma^{< k+1}$ involves only $x$-derivatives of $\zeta$ and $\overline{\zeta}$ of order at most $k$. Taking $n=2m+1$, we obtain
\[
\Gamma_{2m+1}
=
\frac{(-1)^mi}{2^{2m+1}K^{2m+2}}\,\zeta_{x^{2m+1}} + \Gamma^{< 2m+1}\; ,
\]
and substituting this expression into (\ref{cdc}), we find an expression for $\varrho_{2m}\,$: 
\begin{equation} \label{rho2m}
\varrho_{2m} = c_m\,(\alpha \beta_{x^{2m+1}} - \beta \alpha_{x^{2m+1}} ) + P_m\; ,
\end{equation}
with
\[
c_m = \frac{(-1)^m}{2^{2m-1}K^{2m}(K^2+|\zeta|^2)} \neq 0,
\]
and where $P_m$ contains no terms involving $\alpha_{x^{2m+1}}$ or $\beta_{x^{2m+1}}$. 

\smallskip

For later use, we set
\[
\Omega_{p,q} =\alpha_{x^p} \beta_{x^q} - \beta_{x^p} \alpha_{x^q}\; ,
\]
in which $\alpha_{x^0} = \alpha$ and analogously for $\beta$, so that $\varrho_{2m} = c_m \Omega_{0,2m+1} + P_m$. 

\paragraph{Step III. Analysis of the remainder term $P_m$.}

We need to study the term $P_m$ appearing in (\ref{rho2m}). Since (\ref{cdc}) and (\ref{(46)}) are written in terms of $\zeta$ and $\overline{\zeta}$, it is natural to use these
complex variables instead of $\alpha$ and $\beta$: a monomial depending on $x$-derivatives will be written in the form
\begin{equation} \label{monomial}
M=f(x,t,r,\delta,\zeta,\overline{\zeta})   \prod_{a=1}^{u}\zeta_{x^{p_a}}  \prod_{b=1}^{v}\overline{\zeta}_{x^{q_b}}\; , \qquad p_a,q_b\geq1\; ,
\end{equation}
in which an empty product ({\em i.e.}, $u$ or $v$ equal to 0) is understood to be equal to $1$. We define {\em the number of derivative factors} of $M$ and its {\em total order} by
\[
\nu(M)=u+v,
\qquad
o(M)=\sum_{a=1}^{u}p_a+\sum_{b=1}^{v}q_b
\]
respectively. We make three claims on the structure of $\Gamma_n$:
\begin{enumerate}
\item Every monomial occurring in $\Gamma_n$, $n \geq 0$, has total order at most $n$. Thus, (\ref{cdc}) implies that $\varrho_{2m}$ has total order at most $2m+1$.
\item The part of $\Gamma_n$, $n \geq 1$, containing exactly one $x$-derivative term and having total order $n$ is 
$$
\lambda_n(x) \zeta_{x^n}\; , \qquad  \mbox{ where } \qquad \lambda_n(x) =  \frac{i^n}{(2sK)^n sK} \; .
$$
Hence, $\varrho_{2m}$ has total order exactly equal to $2m+1$, and $P_m$ has total order at most $2m+1$.
\item The part of $\Gamma_n$, $n \geq 0$, containing exactly two $x$-derivative terms and having total order $n$ has the form
\begin{equation} \label{alfa}
\sum_{\substack{p+q=n\\p,q\geq1}} h_{n;p,q}(x,t,r,\delta,\zeta,\overline{\zeta}) \,\zeta_{x^p}\zeta_{x^q}.
\end{equation}
In particular, it contains no mixed monomial $\zeta_{x^p} \overline{\zeta}_{x^q}\,$.
\end{enumerate}

\noindent
If $n=0$ all three assertions are immediate, since $\Gamma_0= \zeta/sK$.
We prove the first claim by induction. The $n=0$ part is already done, let us assume that it holds up to $n$. We consider $\Gamma_{n+1}$ as given by (\ref{rf}), 
and we note that (\ref{nc}) implies that 
\[
\delta_x =\frac{K^2(\zeta - \overline{\zeta})}{2 i r(K^2+|\zeta|^2)}\; , \qquad 2r\delta_x+ir_x = \frac{2iK^2\overline{\zeta}}  {K^2+|\zeta|^2}
\]
are zero-order functions in stereographic coordinates. We have: $D_x$ increases total order of a jet monomial by at most one; 
multiplication by $\delta_x$ or $2r\delta_x+ir_x$ does not increase total order; and the total order of the terms $\Gamma_j \Gamma_{n+1-j}$ is at most $n+1$. It follows that every monomial in 
$\Gamma_{n+1}$ has total order at most $n+1$. 

The second claim follows from (\ref{(46)}). If $\Gamma^{<n}$ contains a term of total order $n$, it has to be a term with two or more factors, since it involves only terms
$\zeta_{x^j}$ and/or $\overline{\zeta}_{x^j}$ up to $j=n-1$. Hence, we are left precisely with $\lambda_n \zeta_{x^n}$. 

We prove the third assertion by induction: we assume it for $\Gamma_{n}$ and we consider all terms appearing in $\Gamma_{n+1}$.  In fact, looking at (\ref{rf}), we see that it is enough to consider the terms 
appearing in  $2 s i K\, \Gamma_{n+1}$, since the coefficient $2 s i K$ is of $x$-order zero.

\noindent We first analyze the term $D_x\Gamma_n$.  Let $M$ be a monomial as in (\ref{monomial}) appearing in $\Gamma_n$.
Its total $x$-derivative is, in full detail, 
\begin{equation} \label{der}
\begin{aligned}
D_xM
={}&
\left(
f_x+f_r r_x+f_\delta\delta_x
\right)
\prod_{a=1}^{u}\zeta_{x^{p_a}}
\prod_{b=1}^{v}\overline{\zeta}_{x^{q_b}} 
+
\left(
f_\zeta\zeta_x +
f_{\overline{\zeta}}\overline{\zeta}_x
\right)
\prod_{a=1}^{u}\zeta_{x^{p_a}}
\prod_{b=1}^{v}\overline{\zeta}_{x^{q_b}}
\\
&+
f\sum_{a=1}^{u}
\zeta_{x^{p_a+1}}
\prod_{\substack{c=1\\ c\neq a}}^{u}\zeta_{x^{p_c}}
\prod_{b=1}^{v}\overline{\zeta}_{x^{q_b}}
+
f\sum_{b=1}^{v}
\overline{\zeta}_{x^{q_b+1}}
\prod_{a=1}^{u}\zeta_{x^{p_a}}
\prod_{\substack{c=1\\ c\neq b}}^{v}
\overline{\zeta}_{x^{q_c}}\; .
\end{aligned}
\end{equation}
\begin{itemize}
\item[$\bullet$] If $u=v=0$, then only the second summand of (\ref{der}) has $x$-derivatives, and this summand contains only one-factor terms.
Thus, a zero-order term of $\Gamma_n$ cannot produce a two-factor term in $D_x\Gamma_n$.
\item[$\bullet$] If $M$ contains exactly one $x$-derivative factor, a two-factor term of 
total order $n+1$ can be produced only if $M$ has total order $n$. By our second claim, the unique one-factor term of total order
$n$ in $\Gamma_n$ is $\lambda_n \zeta_{x^n}$. Since $\lambda_n$ depends only on $x$, no two-factor terms can appear. 
\item[$\bullet$] If $M$ contains exactly two $x$-derivative factors, then a two-factor term of total order $n+1$ can arise only from the last two
summands of (\ref{der}), and only when $M$ itself has total order $n$. By the induction hypothesis, the two-factor, total order $n$ component of $\Gamma_n$ is of the form 
\[
\sum_{\substack{p+q=n\\p,q\geq1}} h_{n;p,q}\zeta_{x^p} \zeta_{x^q}\; ,
\]
and (\ref{der}) implies that in this case we obtain the term 
\[
\sum_{\substack{p+q=n\\p,q\geq1}}
h_{n;p,q}
\left(
\zeta_{x^{p+1}} \zeta_{x^q} + \zeta_{x^p} \zeta_{x^{q+1}}
\right),
\]
which is precisely of the form (\ref{alfa}) asserted in the third claim.
\item[$\bullet$] Finally, if $M$ contains at least three $x$-derivative factors,
then every term in $D_xM$ contains at least three such factors.
\end{itemize}
Thus, the only contribution of $D_x\Gamma_n$ to the component of $\Gamma_{n+1}$ having exactly two $x$-derivative factors and total order $n+1$, is of the form (\ref{alfa}) and it
comes from differentiating the two-factor total order $n$ component of $\Gamma_n$.

\noindent Now we analyse the other terms of $\Gamma_{n+1}$. The term $\delta_x\Gamma_n$ gives no contribution of the required type, since $\delta_x$ is a zero-order function and, by our first
claim, every monomial in $\Gamma_n$ has total order at most $n$. It remains to consider the quadratic term
\[
(2r\delta_x+ir_x) \sum_{j=1}^{n}\Gamma_j\Gamma_{n+1-j} = \frac{2iK^2\overline{\zeta}}  {K^2+|\zeta|^2}  \sum_{j=1}^{n}\Gamma_j\Gamma_{n+1-j} \; .
\]
Choose a monomial $M_1$ in $\Gamma_j$ and a monomial $M_2$ in $\Gamma_{n+1-j}$. For their product to have two $x$-derivative factors and total order
$n+1$, we must have
\[
\nu(M_1) + \nu(M_2)=2\; , \qquad \mbox{ and } \qquad o(M_1)+o(M_2)=n+1\; .
\]
Our first claim gives $o(M_1) \leq j$ and $o(M_2) \leq n+1-j$. Hence $o(M_1)+o(M_2)=n+1$ is possible only if $o(M_1)=j$ and $o(M_2)=n+1-j$. 
This observation implies that $\nu(M_1) = \nu(M_2) = 1$.  Indeed, if $\nu(M_1)=0$, then $o(M_1)=j=0$, but the sum in (\ref{rf}) starts from $j = 1$. Similarly, $\nu(M_2)=0$ is impossible because
it would imply that $o(M_2) = 0$, but $o(M_2)=n+1-j\geq1$.  It then follows from our second claim that $M_1$ and $M_2$ must be $\lambda_j\zeta_{x^j}$ and $\lambda_{n+1-j}\zeta_{x^{n+1-j}}$
respectively, and therefore the only two-$x$-derivative-factor term of total order $n+1$ appearing in the quadratic term $\Gamma_j \Gamma_{n+1-j}$ is precisely 
\[
-  \frac{2iK^2\overline{\zeta}}  {K^2+|\zeta|^2}  \sum_{j=1}^{n} \lambda_j\lambda_{n+1-j} \zeta_{x^j}\zeta_{x^{n+1-j}}\; .
\]
Combining the contributions to $\Gamma_{n+1}$ coming from $D_x\Gamma_n$ and the quadratic term, we see that the part of $\Gamma_{n+1}$ containing exactly
two $x$-derivative factors and having total order $n+1$ is indeed of the form
\[
\sum_{\substack{p+q=n+1\\p,q\geq1}} h_{n+1;p,q} \zeta_{x^p}\zeta_{x^q}\; .
\]
This observation ends our proof of the third claim.

\medskip

Now we go back to $P_m$ using Equations (\ref{cdc}) and (\ref{rho2m}) for $\varrho_{2m}$. We have:
\begin{itemize}
\item[$\bullet$]
Since $P_m$ contains no terms involving $x$-derivatives of order $2m+1$, every term of $P_m$ containing at most one  $x$-derivative has total order at most $2m$.
\item[$\bullet$]
Since (\ref{cdc}) holds, the terms of $P_m$ containing exactly two $x$-derivatives and having total order $2m+1$ can come only from $\Gamma_{2m+1}$. By the third claim we just proved, terms 
of this kind in $\Gamma_{2m+1}$ have the form
\[
\sum_{\substack{p+q=2m+1\\ p,q\geq1}} h_{p,q}\, \zeta_{x^p}\zeta_{x^q}\; ,
\]
with coefficients $h_{p,q}$ of order zero. Using $Re(z) =(1/2)(z + \overline{z})$ and recalling (\ref{cdc}), we see that their contribution to $P_m$ is 
\[
-\frac{2K^2}{K^2+|\zeta|^2} \sum_{\substack{p+q=2m+1\\ p,q\geq1}} \left( \overline{\zeta}\,h_{p,q}\, \zeta_{x^p}\zeta_{x^q}
+
\zeta\,\overline{h_{p,q}}\, \overline{\zeta}_{x^p}\overline{\zeta}_{x^q} \right)\; .
\]
\end{itemize}

\paragraph{Step IV. Construction of an appropriate representative of $P_m$.}
Now that we have a good control of the terms appearing in $P_m\,$, we modify this remainder by adding appropriate total derivatives, that is, by changing the representative of the conservation law
$Im(\Theta^{(2m)})$. 
The procedure below alters only the remainder, and not the distinguished term $\Omega_{0,2m+1}=\alpha\,\beta_{x^{2m+1}}-\beta\,\alpha_{x^{2m+1}}$ appearing in 
$\varrho_{2m}$. 

We write
\[
F\equiv G \pmod{\operatorname{Im}D_x}
\]
whenever $F-G=D_xH$ for some differential function $H$.  We note that for every differential function $V$ and every $\ell\geq1$,
\[
D_x\bigl(u_{x^{\ell-1}}V\bigr)
=
u_{x^\ell}V+u_{x^{\ell-1}}D_xV,
\qquad
u \in \{\alpha,\beta,\zeta,\overline{\zeta}\}\; ,
\]
and consequently,
\begin{equation}\label{leibniz1}
u_{x^\ell}V
\equiv
-u_{x^{\ell-1}}D_xV
\pmod{\operatorname{Im}D_x}\; .
\end{equation}
Iterating \eqref{leibniz1}, we obtain, for
$0\leq s\leq \ell$,
\begin{equation}\label{leibniz2}
u_{x^\ell}V
\equiv
(-1)^s u_{x^{\ell-s}}D_x^sV
\pmod{\operatorname{Im}D_x}.
\end{equation}


\noindent {\bf Claim.} 
Let
\begin{equation} \label{claim1}
T=a\prod_{i=1}^{\nu}u^i_{x^{k_i}}\; ,
\qquad
u^i\in\{\alpha,\beta,\zeta,\overline{\zeta}\}\; , 
\qquad
k_i\geq1\; ,
\end{equation}
where $a$ is a zero-order coefficient and either 
\[
(a) \qquad  \sum_{i=1}^{\nu}k_i \leq 2m  \; , \qquad \qquad \mbox{ or } \qquad \qquad (b) \qquad  \nu \geq 3 \mbox{ and } \sum_{i=1}^{\nu}k_i=2m+1\; .
\]
Then $T$ is congruent modulo $\operatorname{Im}D_x$ to a finite linear combination of monomials such that no monomial contains an
$x$-derivative factor of order greater than $m$. 

We note the following elementary fact, which follows from the Leibniz rule and the observation that $D_x r$ and
$D_x\delta$ have order zero: if $a=a(x,t,r,\delta,\alpha,\beta)$ is a zero-order coefficient, every $x$-derivative monomial in $D_x^{j}a$ has total order at most $j$. 
Now, if all numbers $k_i$ are at most equal to $m$, there is nothing to prove. Otherwise, after reordering the factors of $T$, we let
\[
k_1=\max_{1\leq i\leq\nu}k_i>m \; ,
\]
set $k_1 = m+p$, $p \geq 1$, and we define
\[
M=\prod_{i=2}^{\nu}u^i_{x^{k_i}}\; ,
\]
so that $T = a u^1_{x^{k_1}} M$. By (\ref{leibniz2}) with $\ell = k_1$ and $s=p$, we find
\[
T\equiv
(-1)^{p}u^1_{x^m}D_x^{p}(aM)
\pmod{\operatorname{Im}D_x}\; ,
\]
and the generalized Leibniz rule gives
\[
D_x^{p}(aM)
=
\sum_{\substack{j_0+j_2+\cdots+j_\nu=p\\
                 j_0,j_2,\ldots,j_\nu\geq0}}
\frac{p!}{j_0!j_2!\cdots j_\nu!}
(D_x^{j_0}a)
\prod_{i=2}^{\nu}u^i_{x^{k_i+j_i}}\; . 
\]
We assume $(a)$. We observe that the total order of each monomial appearing in this sum is at most 
$$
j_0 + k_2 + \cdots + k_\nu + j_2 + \cdots + j_\nu = p + k_2 + \cdots + k_\nu = k_1 + k_2 + \cdots + k_\nu - m \leq 2m - m  = m \; ,
$$
since monomials appearing in $D_x^{j_0}a$ have order at most $j_0$. 
Now we assume $(b)$.  Since $\nu\geq3$ and every $k_i\geq1$, we must have $m < k_1 \leq 2m-1$, and hence $1\leq p\leq m-1$. 
It follows that every $x$-derivative factor of a monomial appearing in $D_x^{j_0}a$ has order at most $j_0\leq p\leq m-1$.
Also, since $j_i\leq p$, we have $k_i+j_i \leq k_i+p = k_i+k_1-m$. 
For each fixed $i\geq2$, choose an index $i_0$ different from $1$ and $i$ (such an index exists, as $ \nu \geq 3$). Since $k_{i_0} \geq1$, we have
\[
k_1+k_i \leq \sum_{j=1}^{\nu}k_j - k_{i_0} \leq (2m+1)-1 = 2m\; ,
\]
and therefore $k_i+j_i \leq k_i+k_1-m \leq m$.  Thus every $x$-derivative factor occurring in monomials appearing in $D_x^p(aM)$ has order at most $m$. The proof of the claim is now complete.

\medskip

We apply this result to the monomials occurring in $P_m$. It implies that we can modify appropriately all monomials appearing in $P_m$, except those containing exactly two positive 
$x$-derivative factors and having total order $2m+1$. According to Step III, we can codify these terms into a sum of the form 
\begin{equation}\label{eq:exceptional-two-factor-terms}
\sum_{\substack{p+q=2m+1\\p,q\geq1}} \left( A_{p,q}\zeta_{x^p}\zeta_{x^q} + \overline{A_{p,q}}\, \overline{\zeta}_{x^p}\overline{\zeta}_{x^q} \right),
\end{equation}
in which every coefficient $A_{p,q}$ is a zero-order function. Let us consider one unbarred summand and assume  $1 \leq p\leq m$ and $q\geq m+1$. 
We have $q-m=m+1-p$ and  $1\leq q-m\leq m$.  Applying \eqref{leibniz2} with $\ell=q$, $s=q-m$, and  $V=A_{p,q}\zeta_{x^p}$, 
we obtain
\[
A_{p,q}\zeta_{x^p}\zeta_{x^q}
\equiv
(-1)^{q-m}\zeta_{x^m}
D_x^{q-m}\bigl(A_{p,q}\zeta_{x^p}\bigr)
\pmod{\operatorname{Im}D_x}\; ,
\]
and therefore the generalized Leibniz rule implies that 
\begin{align}
A_{p,q}\zeta_{x^p}\zeta_{x^q}
\equiv{}&
(-1)^{q-m}
A_{p,q}\zeta_{x^m}\zeta_{x^{m+1}} \nonumber 
\\
&+
(-1)^{q-m}
\sum_{j=1}^{q-m}
\binom{q-m}{j}
(D_x^jA_{p,q})
\zeta_{x^m}\zeta_{x^{m+1-j}}
\pmod{\operatorname{Im}D_x}.  \label{cong}
\end{align}
For every $j\geq1$, the two explicit $x$-derivative factors appearing in this expressions have orders $m$ and  $m+1-j\leq m$. 
Moreover, since $A_{p,q}$ is a zero-order coefficient, every positive $x$-derivative factor occurring in $D_x^jA_{p,q}$ has order at most
$j \leq q-m\leq m$.  Thus, we only need to reduce $A_{p,q}\zeta_{x^m}\zeta_{x^{m+1}}$. 
By the Leibniz rule,
\[
D_x\left(
\frac12A_{p,q}\zeta_{x^m}^{\,2}
\right)
=
\frac12(D_xA_{p,q})\zeta_{x^m}^{\,2}
+
A_{p,q}\zeta_{x^m}\zeta_{x^{m+1}}\; ,
\]
and therefore
\[
A_{p,q}\zeta_{x^m}\zeta_{x^{m+1}}
\equiv
-\frac12(D_xA_{p,q})\zeta_{x^m}^{\,2}
\pmod{\operatorname{Im}D_x}.
\]
Since every positive $x$-derivative factor occurring in $D_xA_{p,q}$ has order one, we see that, after replacing $A_{p,q}\zeta_{x^m}\zeta_{x^{m+1}}$ by its equivalent expression above,
the right-hand side of (\ref{cong}) contains no $x$-derivative factor of order greater than $m$. The conjugate terms are treated in exactly the same way.

Summarizing and going back to the real variables $\alpha$ and $\beta$, we see that the preceding constructions allow us to  obtain differential functions $G_m$ and $P_m^{\mathrm{red}}$ such that
$P_m=D_xG_m+P_m^{\mathrm{red}}$,  in which 
\[
P_m^{\mathrm{red}}
=
P_m^{\mathrm{red}}
\bigl(
x,t,r,\delta,\alpha,\beta,
\alpha_x,\beta_x,\ldots,
\alpha_{x^m},\beta_{x^m}
\bigr).
\]

\paragraph{Step V. Construction of an appropriate representative of $\varrho_{2m}$.}

We define a new representative of the conserved density $\varrho_{2m}$ as $\varrho_{2m}^{\mathrm{new}} = \varrho_{2m}-D_xG_m\,$, and we obtain the crucial expression
\[
\varrho_{2m}^{\mathrm{new}} = c_m\Omega_{0,2m+1}+P_m^{\mathrm{red}}\; .
\]
In this step we modify the first summand of ${\varrho}_{2m}^{\mathrm{new}}$. We use the following procedure. Since $$D_x\Omega_{p,q-1} = \Omega_{p+1,q-1}+\Omega_{p,q}\; ,$$ 
we have, for a zero-order coefficient $f$, $$f\,\Omega_{p,q} = D_x(f\,\Omega_{p,q-1}) - f\,\Omega_{p+1,q-1} - (D_xf)\Omega_{p,q-1}\; ,$$
that is,
\begin{equation}\label{normalization}
f\,\Omega_{p,q} \equiv -f\,\Omega_{p+1,q-1} - (D_xf)\,\Omega_{p,q-1} \qquad 
\pmod{\operatorname{Im}D_x}\; .
\end{equation} 
Repeated application of \eqref{normalization} gives 
\begin{equation} \label{norm}
c_m\Omega_{0,2m+1} \equiv (-1)^mc_m\Omega_{m,m+1} + \sum_{\ell=0}^{m-1}
(-1)^{\ell+1}(D_xc_m)\Omega_{\ell,2m-\ell}
\pmod{\operatorname{Im}D_x}\; .
\end{equation}
Our goal (in Step VI) is to assume that an appropriately modified conserved density (let us call it ${\varrho}_{2m}'$ for now) is equal to $D_xF$ for some $F$, and to 
arrive at a contradiction.  A difficulty that we quickly find is that the right hand side of (\ref{norm}) is of $x$-order $2m$ and that the contradiction is not immediate at that level. What we could do 
is to use  the equation ${\varrho}_{2m}' = D_xF$ and determine the dependence of $F$ on higher order $x$-jets, until at some level this becomes impossible. This procedure is feasible 
but rather cumbersome. Instead, we will use (\ref{norm})  and Step IV repeatedly, until we obtain a representative of ${\varrho}_{2m}$ of $x$-order $m+1$. At this 
stage a rather elegant contradiction will become apparent.

Recalling the explicit formula for $c_m$, see (\ref{rho2m}), and the fact that $r_x$ and $\delta_x$ are of $x$-order zero in stereographic coordinates, we write   
$D_xc_m = B_m + (c_m)_\alpha\alpha_x + (c_m)_\beta\beta_x$, in which $B_m$ is of $x$-order zero. Then, 
for  $1 \leq \ell \leq m-1$, each monomial in $B_m \Omega_{\ell,2m-\ell}$ contains exactly two $x$-derivative factors and has total order $2m$, while each monomial in 
$\bigl((c_m)_\alpha\alpha_x+(c_m)_\beta\beta_x\bigr) \Omega_{\ell,2m-\ell}$  contains precisely three $x$-derivative factors and has total order $2m+1$.  
The claim appearing in Step IV implies that all these terms admit representatives in which no individual $x$-derivative factor has order greater than $m$.

\smallskip

We consider the case $\ell=0$. Since the term $-B_m\Omega_{0,2m}$ has total order $2m$, it can be reduced to order at most $m$ using case $(a)$ of the claim proved in Step IV. We only need to
reduce the product
\begin{eqnarray}
 - \left(  (c_m)_\alpha\alpha_x+(c_m)_\beta\beta_x \right) \left( \alpha\beta_{x^{2m}}-\beta\alpha_{x^{2m}} \right)  =  
-\frac12 \left( \alpha(c_m)_\alpha+\beta(c_m)_\beta \right)\Omega_{1,2m} 
\nonumber\\
  +  \beta(c_m)_\alpha\alpha_x\alpha_{x^{2m}} -\alpha(c_m)_\beta\beta_x\beta_{x^{2m}}  +\frac12
\left( \beta(c_m)_\beta-\alpha(c_m)_\alpha \right) \left( \alpha_x\beta_{x^{2m}} +\beta_x\alpha_{x^{2m}} \right). 
\label{ell-zero}
\end{eqnarray}
Each monomial appearing in one of the last three terms in \eqref{ell-zero} is of the form $f\,u_xv_{x^{2m}}$, 
where $f$ is a zero-order coefficient and $u,v\in\{\alpha,\beta\}$. Applying \eqref{leibniz2} with $\ell=2m$ and $s=m$, we obtain
\begin{eqnarray}
f\,u_xv_{x^{2m}} & \equiv & (-1)^m v_{x^m}D_x^m(fu_x) \qquad \qquad \quad \pmod{\operatorname{Im}D_x} \nonumber \\
& \equiv{} &  
(-1)^m f\,v_{x^m}u_{x^{m+1}} + (-1)^m \sum_{j=1}^{m} \binom{m}{j} (D_x^jf)v_{x^m}u_{x^{m+1-j}} \quad \pmod{\operatorname{Im}D_x} \label{eq:ell-zero-reduction} \\
&\equiv{} &
\frac{(-1)^m}{2} f \left( v_{x^m}u_{x^{m+1}} - u_{x^m}v_{x^{m+1}} \right) - \frac{(-1)^m}{2} (D_x f) u_{x^m}v_{x^{m}} \nonumber \\
 &  & + \; (-1)^m \sum_{j=1}^{m} \binom{m}{j} (D_x^jf)v_{x^m}u_{x^{m+1-j}} \qquad \qquad \quad \pmod{\operatorname{Im}D_x} \label{eq:ell-zero-reduction1} 
\end{eqnarray}
in which we have separated the first term of (\ref{eq:ell-zero-reduction}) into symmetric and antisymmetric parts and used that the symmetric part is equivalent to $-((-1)^m/2)(D_xf)u_{x^m}v_{x^m}$. 
For every $j\geq1$, the two-$x$-derivative factors in
each summand of the third term of \eqref{eq:ell-zero-reduction1} have orders $m$ and $m+1-j\leq m$ respectively and, since $f$ is a zero-order coefficient, every
$x$-derivative factor occurring in $D_x^jf$ has order at most $j\leq m$. Thus every term under the summation sign contains no $x$-derivative factor of order greater than $m$. The same is true for the second
term in \eqref{eq:ell-zero-reduction1} since $D_x f$ has order at most $1$. 
Now for the explicit terms  
coming from the last three summands in (\ref{ell-zero}): the antisymmetric part of \eqref{eq:ell-zero-reduction1}  corresponding to  $f \alpha_x\alpha_{x^{2m}}$ and $f \beta_x\beta_{x^{2m}}$ vanishes, and so
these terms admit representatives of $x$-order at most $m$;  on the other hand, the antisymmetric contributions of $f  \beta_{x^m}\alpha_{x^{m+1}}$ and $f \alpha_{x^m}\beta_{x^{m+1}}$ appearing in
(\ref{eq:ell-zero-reduction1}) are
\[
\frac{(-1)^{m+1}}{4}
\left(
\beta(c_m)_\beta-\alpha(c_m)_\alpha
\right)\Omega_{m,m+1} 
\quad \mbox{ and } \quad 
\frac{(-1)^m}{4}
\left(
\beta(c_m)_\beta-\alpha(c_m)_\alpha
\right)\Omega_{m,m+1},
\]
respectively, and therefore they cancel. It remains for us to transform the term
\[
-\frac12
\left(
\alpha(c_m)_\alpha+\beta(c_m)_\beta
\right)\Omega_{1,2m}.
\]
We apply \eqref{normalization} $(m-1)$ times, with $p=1$ and $q=2m$, and we find that this term is equivalent to 
\[
\frac{(-1)^m}{2}
\left(
\alpha(c_m)_\alpha+\beta(c_m)_\beta
\right)\Omega_{m,m+1} + 
\frac12
\sum_{j=0}^{m-2}
(-1)^j
D_x\left(
\alpha(c_m)_\alpha+\beta(c_m)_\beta
\right)
\Omega_{1+j,\,2m-j-1} 
\]
mod $\operatorname{Im}D_x$, in which we assume that the sum is empty if $m=1$. The coefficient $D_x\left( \alpha(c_m)_\alpha+\beta(c_m)_\beta \right)$ 
is the sum of a zero-order term and terms linear in 
$\alpha_x,\beta_x$. The zero-order part, multiplied by
$\Omega_{1+j,\,2m-j-1}$, contains exactly two positive
$x$-derivative factors and has total order $2m$. The part linear
in $\alpha_x,\beta_x$ multiplied by $\Omega_{1+j,\,2m-j-1}$ contains three positive $x$-derivative
factors and has total order $2m+1$. Thus the claim in Step IV applies.
We conclude that the $\ell=0$ term appearing in \eqref{norm} can be reduced to
\[
\frac{(-1)^m}{2}
\left(
\alpha(c_m)_\alpha+\beta(c_m)_\beta
\right)\Omega_{m,m+1} 
\]
up to terms of $x$-order at most $m$ which we will not write explicitly. Thus, 
\begin{align}
c_m\Omega_{0,2m+1}
\equiv{}&
(-1)^m \left[ c_m + \frac12 \left( \alpha(c_m)_\alpha+\beta(c_m)_\beta \right) \right] \Omega_{m,m+1} + R_m \pmod{\operatorname{Im}D_x},
\label{eq:normalized-leading-term}
\end{align}
in which $R_m$ contains $x$-derivative terms of $x$-order at most $m$. 
We have found that
$$
\varrho_{2m}^{\mathrm{new}} \equiv{} \widetilde{d}_m \Omega_{m,m+1} + P_m^{\mathrm{red}} + R_m
\pmod{\operatorname{Im}D_x}\; ,
$$
and therefore setting $\widetilde{R}_m= R_m +P_m^{\mathrm{red}}$, we have the new representative
\[
\widetilde\varrho_{2m} = \widetilde d_m\Omega_{m,m+1} + \widetilde R_m
\]
for $\varrho_{2m}$. A straightforward computation yields 
\begin{align}
\widetilde d_m = (-1)^m
\left[
c_m
+
\frac12
\left(
\alpha(c_m)_\alpha+\beta(c_m)_\beta
\right)
\right] = \frac{1}
{2^{2m-1}K^{2m-2}
 (K^2+\alpha^2+\beta^2)^2}
\neq0\; ,
\label{eq:explicit-dm}
\end{align}
and therefore our final representative of the conserved density $\varrho_{2m}$ is
\begin{equation}
\widetilde\varrho_{2m}
=
\frac{1}{2^{2m-1}K^{2m-2}(K^2+\alpha^2+\beta^2)^2} \left( \alpha_{x^m}\beta_{x^{m+1}} - \beta_{x^m}\alpha_{x^{m+1}} \right) +
\widetilde R_m,
\label{eq:final-normalized-density}
\end{equation}
in which the $x$-order of all $x$-derivative terms appearing in $\widetilde{R}_m$ is at most $m$.

\paragraph{Step VI. Non-triviality of $\widetilde\varrho_{2m}$.}
We argue by contradiction. Suppose that
\[
\widetilde\varrho_{2m}=D_xF 
\]
for some differential function $F$. In principle, $F$ may depend on arbitrary finitely many coordinates (\ref{ncs}). However, dependence on $t$-jets of
$r,\delta$ is irrelevant for us, since $\widetilde\varrho_{2m}$  depends only on $x$-derivatives and for every $N\geq0$, $D_xD_t^N r$ and $D_xD_t^N\delta$ 
depend only on $x,t,r,\delta,\alpha,\beta, r_t,\ldots,r_{t^N},\delta_t,\ldots,\delta_{t^N}$, as we now show.

We proceed by induction. For $N=0$, (\ref{nc}) implies that $D_xr= f^r(x,\alpha,\beta)$ and $D_x\delta= f^\delta(x,r,\alpha,\beta)$.
Assume now that the statement holds up to order $N$.  We have,
\[
D_xD_t^{N+1}r = D_t\left(D_xD_t^N r\right), \qquad D_xD_t^{N+1}\delta = D_t\left(D_xD_t^N\delta\right).
\]
By the induction hypothesis, there exist functions $\Phi_N(x,t,r,\delta,\alpha,\beta, r_t,\ldots,r_{t^N},\delta_t,\ldots,\delta_{t^N})$ and 
$\Psi_N(x,t,r,\delta,\alpha,\beta, r_t,\ldots,r_{t^N},\delta_t,\ldots,\delta_{t^N})$  such that $D_xD_t^N r = \Phi_N$ and $D_xD_t^N\delta = \Psi_N$. 
Applying $D_t$, we obtain $D_xD_t^{N+1}r = D_t\Phi_N$ and $D_xD_t^{N+1}\delta = D_t\Psi_N$, and so,
using that $D_t\alpha= A(x,t,r,\alpha,\beta,\delta_t)$ and $D_t\beta= B(x,t,r,\alpha,\beta,\delta_t)$, see (\ref{ncab}),
we find that $D_xD_t^{N+1}r = D_t\Phi_N$ and  $D_xD_t^{N+1}\delta= D_t\Psi_N$ depend only on
$x,t,r,\delta,\alpha,\beta$ and pure $t$-jets of $r,\delta$ up to $t$-order $N+1$.  This proves the induction step.

Thus, without loss of generality since pure $t$-derivatives possibly appearing in $F$ play no role in our proof below, we assume that $F$ depends only on $x,t,r,\delta,\alpha,\beta$ and a finite number of 
$x$-derivatives of $\alpha$ and $\beta$.  Since $\widetilde\varrho_{2m}$ has $x$-order $m+1$, $F$ must have $x$-order at most $m$ with respect to
$\alpha,\beta$, as if $J>m$ were the largest $x$-order appearing in $F$, the coefficients $F_{\alpha_{x^J}}$ and  $F_{\beta_{x^J}}$ of $\alpha_{x^{J+1}}$ and $\beta_{x^{J+1}}$ in $D_xF$ 
would have to vanish.  Repeating this argument removes all dependence of $F$ on $x$-jets of order greater than $m$.

Now comes our final non-triviality argument. Comparison of the highest $x$-order coefficients of $\widetilde{\varrho}_{2m}$ and $D_x F$ gives
\[
F_{\beta_{x^m}}
=
\widetilde d_m\alpha_{x^m},
\qquad
F_{\alpha_{x^m}}
=
-\widetilde d_m\beta_{x^m}\; .
\]
Because $\widetilde d_m$ is independent of
$\alpha_{x^m},\beta_{x^m}$, differentiating once again yields
\[
F_{\alpha_{x^m}\beta_{x^m}}
=
\widetilde d_m,
\qquad
F_{\beta_{x^m}\alpha_{x^m}}
=
-\widetilde d_m\; .
\]
This contradicts equality of mixed partial derivatives, since
$\widetilde d_m\neq0$.  Therefore
$\widetilde\varrho_{2m}$, and hence also $\varrho_{2m}$, is not a
total $x$-derivative.  Thus
\[
\operatorname{Im}\bigl(\Theta^{(2m)}\bigr)
\]
defines a non-trivial horizontal cohomology class for every $m\geq1$.

\paragraph{Step VII. Linear independence of cohomology classes $Im\bigl(\Theta^{(2j)}\bigr)$.}

It remains to prove linear independence.  Suppose that
\[
\sum_{j=1}^{M}
a_j\,\operatorname{Im}\bigl(\Theta^{(2j)}\bigr)\; , \qquad a_j \in \mathbb{R}\; , \qquad a_M \neq 0\; ,
\]
is horizontally exact.  Replacing each density by the representative just constructed, we obtain
\begin{equation} \label{aste}
\sum_{j=1}^{M}a_j\widetilde\varrho_{2j}=D_xF
\end{equation}
for some $F$. Reasoning as in Step VI, we may assume that $F$ has $x$-order at most $M$. For $j<M$, the terms appearing in the representative $\widetilde\varrho_{2j}$ have $x$-order at
most $j+1\leq M$.  Therefore the variables $\alpha_{x^{M+1}}$ and $\beta_{x^{M+1}}$  occur only in the last summand, and (\ref{aste}) implies that 
\[
F_{\beta_{x^M}}
=
a_M\widetilde d_M\alpha_{x^M}\; ,
\qquad
F_{\alpha_{x^M}}
=
-a_M\widetilde d_M\beta_{x^M}\; .
\]
Reasoning as in the previous step we obtain a contradiction since $a_M\widetilde d_M\neq 0$. 
Hence no non-zero finite linear combination of the classes
\[
\left[
\operatorname{Im}\bigl(\Theta^{(2m)}\bigr)
\right],
\qquad m\geq1,
\]
is horizontally exact.  The conservation laws are therefore linearly
independent, and in particular pairwise distinct, in horizontal
cohomology.  

\smallskip

We have finished the proof of Theorem 5.
\end{proof}

We have proven integrability of the augmented system  \eqref{eq:1},  \eqref{eq:2}, \eqref{eq:3},  \eqref{co}. It remains to investigate the same issue in case of the full generalized Konno--Oono system. 
Indeed we have:

\begin{corollary}
The conservation laws $Im\bigl(\Theta^{(2m)}\bigr)$, $m\geq 0$, are non-trivial and linearly independent in the horizontal cohomology of (a generic open subset $\mathcal U$ of) $S^\infty$.
\end{corollary}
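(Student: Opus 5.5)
The plan is to deduce the corollary from Theorem 5 by a restriction argument: horizontal exactness on $S^\infty$ restricts to horizontal exactness on $S^\infty_\kappa$. Since $\kappa$ is an arbitrary positive function of $x$, this forces triviality on the smaller manifold, where Theorem 5 already gives a contradiction.

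\textbf{Step 1: choice of $\mathcal U$.} We take $\mathcal U\subset S^\infty$ to be the open set where $r\neq0$, $q\neq0$, $S=q^2+r_x^2+4r^2\delta_x^2>0$ and $q+s\sqrt{S}\neq0$. Then each $\mathcal U_\kappa$ of Theorem 5 is the intersection of $\mathcal U$ with $S^\infty_\kappa$. On $S^\infty_\kappa$ we have $\sqrt S=K(x)$, so the local expressions for $\Gamma_0$ and for $\Gamma_{n+1}$ in (\ref{gs}) pull back to (\ref{g00}) and (\ref{rf}). Hence $Im(\Theta^{(2m)})$ on $\mathcal U$ pulls back to exactly the forms studied in Theorem 5.

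\textbf{Step 2: restriction argument.} Suppose a finite combination $\sum a_j\,Im(\Theta^{(2j)})$ equals $d_HF$ on $\mathcal U$, with $F$ a differential function of the coordinates (\ref{cs}) and $a_M\neq0$. The inclusion $S^\infty_\kappa\hookrightarrow S^\infty$ is compatible with the Cartan distribution, because $D_x$ and $D_t$ are tangent to $S^\infty_\kappa$: the constraint (\ref{co}) is preserved by $D_t$ thanks to (\ref{im}), and it is compatible with $D_x$ by construction. So horizontal differentials commute with pull-back, and we obtain $\sum a_j\,Im(\Theta^{(2j)})|_{\mathcal U_\kappa}=d_H(F|_{\mathcal U_\kappa})$. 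This contradicts Step VII of Theorem 5 for $j\geq1$.

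\textbf{Step 3: the case $m=0$.} The corollary also includes $Im(\Theta^{(0)})$, which I will treat separately. Its density, computed from $\Gamma_1=(-\Gamma_{0,x}-2i\delta_x\Gamma_0)/(2si\sqrt S)$, contains a term of the form $c\,\Omega_{0,1}$ plus lower-order terms. The analysis of Steps V and VI with $m=0$, where $\widetilde d_0\neq0$, shows that this density is not a total $D_x$-derivative. Including it in the linear-independence argument then needs only the leading-order comparison of Step VII. If the restricted class turns out to be degenerate, I will instead argue directly on $S^\infty$ using the coordinates (\ref{cs}). There, $\sqrt S$ depends on $q$, $r_x$ and $\delta_x$, and the leading term in $q_x$, $r_{xx}$ and $\delta_{xx}$ gives an antisymmetric Hessian obstruction in the same way as Step VI.

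\textbf{Main obstacle.} The delicate point is Step 2: verifying that the restricted manifold is a genuine differential sub-equation, that is, invariant under $D_x$ and $D_t$. One must also check that pulling back $F$ produces a legitimate differential function on $\mathcal U_\kappa$ for every admissible $\kappa$. For the $D_t$-invariance I will use (\ref{im}). For $D_x$, a constraint of the form $G=\kappa(x)$ with $\kappa$ arbitrary is preserved, since differentiating it simply yields $D_xG=\kappa'$. This argument handles all equations in the family $\kappa$ simultaneously.
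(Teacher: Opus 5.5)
Your treatment of $m\geq1$ and of linear independence is the paper's argument. You pull back along $\iota_\kappa$, use that $\iota_\kappa^*$ commutes with $d_H$ because $D_x,D_t$ are tangent to $S^\infty_\kappa$, and invoke Step VII of Theorem 5. The pulled-back $\Theta^{(0)}$-density has $x$-order one, so it does not interfere with the top-order comparison.

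Where you differ is the non-triviality of $Im(\Theta^{(0)})$ itself. The paper works directly on $S^\infty$: it applies the Euler operator $\mathbf E_\delta$ to the explicit density (\ref{eq:rho-zero-original}) and exhibits the generically non-zero coefficient $2qr/(sS^{3/2})$ of $r_{xxx}$. You instead pull back to $\mathcal U_\kappa$ and use a mixed-partials argument there, which is the route the paper only sketches in the Remark after the corollary. Your version turns the whole corollary into a single restriction argument, avoids the lengthy Euler-operator computation, and yields the stronger fact that the class stays non-trivial on $S^\infty_\kappa$. The paper's version gives a certificate intrinsic to $S^\infty$ that does not depend on the constraint (\ref{co}).

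One point must be made explicit, because Step VI does not apply verbatim at $m=0$. There the coefficient $c_0=2/(K^2+\alpha^2+\beta^2)$ depends on $\alpha,\beta$, which are exactly the variables whose mixed partials are being compared. Write $\iota_\kappa^*\varrho_0=c_0(\alpha\beta_x-\beta\alpha_x)+g$ with $g$ of $x$-order zero. A primitive $F$ would then have $x$-order zero and satisfy $F_\alpha=-c_0\beta$ and $F_\beta=c_0\alpha$. The contradiction therefore requires $2c_0+\alpha(c_0)_\alpha+\beta(c_0)_\beta\neq0$, not merely $c_0\neq0$. This quantity equals $4K^2/(K^2+\alpha^2+\beta^2)^2$, which is twice the value of (\ref{eq:explicit-dm}) at $m=0$; presumably that is what your ``$\widetilde d_0\neq0$'' refers to.

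With this computation in place your fallback is unnecessary. As described it is also mis-aimed, since (\ref{eq:rho-zero-original}) contains no $q_x$.
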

\begin{proof}
First we consider the case $m \geq 1$.
Fix a generic function $\kappa(x) > 0$ and a non-empty open set $\mathcal U_\kappa \subset S^\infty_\kappa$ such that the hypotheses of Theorem 5 are satisfied, and also fix an open set $\mathcal U \subset 
S^\infty$ such that the inclusion map
\[
\iota_\kappa: \mathcal U_\kappa \hookrightarrow \mathcal U
\]
is well-defined. Since $S^\infty_\kappa$ is the submanifold of $S^\infty$ obtained by imposing
\[
q^2+r_x^2+4r^2\delta_x^2=\kappa(x)
\]
and all of its differential consequences, $\iota_\kappa$ preserves the Cartan distribution. Hence
\[
\iota_\kappa^*\circ d_H=d_H\circ \iota_\kappa^*.
\]
Suppose that for some $m\geq 1$, $Im\bigl(\Theta^{(2m)}\bigr)=d_H F$ on $\mathcal U$. Pulling back by $\iota_\kappa$, we obtain
\[
\iota_\kappa^* Im\bigl(\Theta^{(2m)}\bigr) = d_H(\iota_\kappa^*F)\; ,
\]
which contradicts Theorem 5. Therefore $Im(\Theta^{(2m)})$ is not horizontally exact on $\mathcal U$. 

\smallskip

Now we consider the case $m=0$. We recall that  $S=q^2+r_x^2+4r^2\delta_x^2$ and that $Q=q+s\sqrt{S}$. Shrinking $\mathcal U$ if necessary, we assume that 
$S>0$, $Q\neq0$, $q\neq0$, $r\neq0$.
Let us write $Im\bigl(\Theta^{(0)}\bigr) = \varrho_0\,dx+J_0\,dt$. A direct computation yields 
\begin{equation}
\varrho_0 = \frac{ 2\bigl( r_x^2\delta_x +rr_x\delta_{xx} -r\delta_xr_{xx} \bigr) }{s \sqrt{S} Q} - \frac{2q\delta_x}{s \sqrt{S}}\; .
\label{eq:rho-zero-original}
\end{equation}
We apply the Euler operator with respect to $\delta$,
\[
\mathbf E_\delta
=
\frac{\partial}{\partial\delta}
-
D_x\frac{\partial}{\partial\delta_x}
+
D_x^2\frac{\partial}{\partial\delta_{xx}}
-\cdots \; ,
\]
in which $D_x$ is given by \eqref{dx}, to $\varrho_0$ and we obtain 
\begin{align}
\mathbf E_\delta(\varrho_0)
={}&
\frac{2}{sS^{5/2}}
\Big[ S\left( qrr_{xxx} + 3qr_xr_{xx} - rr_xq_{xx} + 3q^2q_x \right) \nonumber\\
&\qquad
- 3\left( qrr_{xx} + qr_x^2 - rr_xq_x + q^3 \right) \left( qq_x + r_xr_{xx} + 4rr_x\delta_x^2 + 4r^2\delta_x\delta_{xx} \right) \Big] . \nonumber 
\end{align}
In particular, the coefficient of $r_{xxx}$ in $\mathbf E_\delta(\varrho_0)$ is $2qr/(sS^{3/2})$, which is generically non-zero.
Consequently, $\mathbf E_\delta(\varrho_0) \not \equiv 0$. Since Euler operators annihilate total $x$-derivatives,
$\varrho_0$ cannot belong to $\operatorname{Im}D_x$ and so  $Im \bigl(\Theta^{(0)}\bigr)$ defines a non-trivial horizontal cohomology class.

It remains to prove linear independence. Suppose that
\[
a_0 Im \bigl(\Theta^{(0)}\bigr)
+
\sum_{m=1}^{M}
a_m Im \bigl(\Theta^{(2m)}\bigr)
=
d_HF,
\qquad a_0, \dots, a_M \in \mathbb{R}, \qquad a_M\neq0\; .
\]
Pulling this identity back to $\mathcal U_\kappa$, we see that the conserved density of $\iota_\kappa^* Im (\Theta^{(0)})\) has $x$-order at most one in stereographic coordinates
(we recall that in these coordinates $r_x$ and $\delta_x$ are zero-order functions, and that $r_{xx}$ and $\delta_{xx}$ have $x$-order at most one)
while the highest $x$-order term in the modified ({\em i.e.} $\widetilde{\varrho}_{2m}$) density of $\iota_\kappa^* Im (\Theta^{(2M)})$ has $x$-order
$M+1$. Therefore we can repeat the highest-$x$-order-jet argument of Step VII and it gives $a_M=0$, a
contradiction. Applying the same argument repeatedly yields $a_m=0$ for $m\geq1$. The identity above then reduces to $a_0 Im \bigl(\Theta^{(0)}\bigr)=d_HF$,
and the non-triviality proved above implies $a_0=0$. Hence the complete family of cohomology classes 
\[
\left[ Im \bigl(\Theta^{(2m)}\bigr) \right], \qquad m\geq0\; ,
\]
is linearly independent.
\end{proof}

\begin{remark}
The proof of Theorem $5$ is formulated for $m\geq1$ since it relies on the existence of a specific representative for the conserved density of
$Im (\Theta^{(2m)})$ whose construction requires this restriction. Thus, the case $m=0$ must be treated separately. Now, if the conserved density
of $Im (\Theta^{(0)})$ calculated in Corollary $2$ is pulled-back to $\mathcal U_\kappa \subset S^\infty_\kappa$ we find (notation as in Step I)
\[
\iota_\kappa^*\varrho_0
=
\frac{2(\alpha\beta_x-\beta\alpha_x)}{K^2+\alpha^2+\beta^2}
+
\frac{2K^2\beta(\alpha^2+\beta^2-K^2)}{r (K^2+\alpha^2+\beta^2)^2}\; ,
\]
and the mixed-derivative argument applies.  Moreover, since this pulled-back density has $x$-order one, it does not affect the
highest-$x$-jet argument of Step VII. Consequently,  we obtain non-triviality and linear independence of the complete family
\[
\left[ \iota_\kappa^* 
Im \bigl(\Theta^{(2m)}\bigr)
\right],
\qquad m\geq 0 \; ,
\]
on $\mathcal U_\kappa \subset S^\infty_\kappa$. We omit the details.
\end{remark}

\section{Analysis in the domain of travelling waves} 

A travelling wave solution to the system \eqref{eq:1}-\eqref{eq:3} is a solution of the form:
\[
q(x, t) = Q(\xi), \quad r(x, t) = R(\xi), \quad \delta(x, t) = \Delta(\xi),
\]
where $ \xi = x - vt $ is the travelling wave coordinate and $ v > 0 $ is the constant wave velocity.

We assume that the solutions $ q(x, t) $, $ r(x, t) $, and $ \delta(x, t) $ are travelling waves moving with 
constant velocity $ v $.  Thus, substituting into the original system of equations \eqref{eq:1}-\eqref{eq:3} we obtain
\begin{align}
-v Q' + 2 R R' &= 0\; , \label{eq:1'} \\
-v R'' - 2 Q R + 4 v R (\Delta')^2 &= 0\; , \label{eq:2'} \\
- v R \Delta'' - 2 v R' \Delta' &= 0\; . \label{eq:3'}
\end{align}
Now, from equation \eqref{eq:1'}, we solve for $ Q(\xi) $:
\[
-v Q' + 2 R R' = 0 \implies (-v Q + R^2)' =0.
\]
Integrating with respect to $ \xi $ we find 
\[
Q(\xi) = \frac{R^2(\xi)-C_0}{v} \; ,
\]
where $ C_0 $ is an integration constant. {\em In this section we restrict our attention to the special subclass of travelling waves for which} $C_0=0$.  
This choice already captures a distinguished family of solutions and it allows us to present the phase plane and geometric 
analysis in a simpler form.  Substituting $Q(\xi) = R^2/v$ into equation \eqref{eq:2'}, we obtain
\[
-v R'' - \frac{2 R^3}{v} + 4 v R (\Delta')^2 = 0\; ,
\]
that is, 
\begin{equation} \label{ec-for-R}
R'' = -\frac{2 R^3}{v^2} + 4 R (\Delta')^2\; .
\end{equation}
Also, from equation \eqref{eq:3'}, we find 
\[
- v R \Delta'' - 2 v R' \Delta' = 0 \implies R \Delta'' + 2 R' \Delta' = 0\; ,
\]
and so we have
\[
\bigl(R^2\Delta'\bigr)' = 2RR'\Delta'+R^2\Delta'' = R\bigl(2R'\Delta'+R\Delta''\bigr) = 0\; .
\]
Thus, $\Delta'(\xi) = C/R^2(\xi)$ for a constant $C$. {\em Hereafter we will focus on the case $C \neq 0$}. Integrating once more we find
\[
\Delta(\xi) = C \int \frac{1}{R^2(\xi)} d\xi + \Delta_0\; ,
\]
where $ \Delta_0 $ is another constant of integration. Now substituting $ \Delta' = C/R^2 $ into (\ref{ec-for-R}) we obtain
\begin{equation} \label{ec-for-R-2}
R'' = -\frac{2 R^3}{v^2} + \frac{4 C^2}{R^3}\; .
\end{equation}
This ODE can be interpreted as the equation of motion for a particle in a potential $ U(R) $ where
\begin{equation} \label{ur}
U(R) = \frac{R^4}{2 v^2} + \frac{2 C^2}{R^2}.
\end{equation}
Thus, (\ref{ec-for-R-2}) becomes 
\[
R''(\xi) = -\frac{dU}{dR}\; .
\]
Multiplying both sides of this equation by $ R'(\xi) $ and integrating, we obtain the first integral of motion (energy conservation)
\[
\frac{1}{2} R'^{\,2} + U(R) = E\; ,
\]
where $ E $ is a constant representing the total energy and the term $(1/2) R'^{\,2}$ represents kinetic energy. 

\subsection{Analysis of the Potential $ U(R) $}

\begin{itemize}
\item
\textbf{Singularities and limits.}
As $R \to 0^+$, the second term $\tfrac{2C^2}{R^2}$ blows up to $+\infty$.  
As $R \to +\infty$, the first term $\tfrac{R^4}{2\,v^2}$ dominates and $\to +\infty$.  
Hence
\[
\lim_{R\to0^+} U(R) \;=\; +\infty,
\qquad
\lim_{R\to+\infty} U(R) \;=\; +\infty.
\]
\item
\textbf{Critical point(s).}
Set $\tfrac{dU}{dR} = 0$:
\[
\frac{2\,R^3}{v^2} \;-\; \frac{4\,C^2}{R^3} \;=\; 0
\quad\Longrightarrow\quad
R^6 \,=\, 2\,C^2\,v^2
\quad\Longrightarrow\quad
R_*  \,=\, \bigl(2\,C^2\,v^2\bigr)^{\!1/6}.
\]

Since $C\neq 0$ and $v>0$, this defines a unique positive real critical point $R=R_*$.
\item
\textbf{Nature of the critical point.}
The second derivative of $U$ is
\[
\frac{d^2U}{dR^2} \;=\; \frac{6 R^2}{v^2} \;+\; \frac{12\,C^2}{R^4},
\]
which is strictly positive for $R>0$.  Thus $R_*$ is a \emph{local (and indeed global) minimum} of $U$ on the interval $R>0$.  
Because $U(R)\to+\infty$ both as $R\to0^+$ and as $R\to+\infty$, the potential has the shape of a smooth single‐well ``bowl'' on $(0,\infty)$ with a unique global minimum at $R=R_*$. 
\end{itemize}

\subsection{Phase Plane Analysis}

We now establish that for each energy level $E > U(R^\ast)$, there is exactly one nontrivial closed {\em periodic} orbit in the $(R,R')$ phase plane. Equivalently, there is a smooth periodic solution $R(\xi)$, unique up to translation of the phase variable $\xi$ (and reversal of orientation along the same closed orbit), oscillating between two turning points.
Our argument is a standard but precise application of one-dimensional conservative mechanics, see for instance \cite[Section 14]{Arnold}.

We denote the \emph{turning points} $R_{\pm}(E)$ as those where $R'(\xi)=0$ and thus $U\bigl(R_{\pm}(E)\bigr)=E$.  We prove below that:
\begin{enumerate}
\item[\textbf{(1)}]
For each $E>U(R_*)$, there are exactly two positive turning points $0<R_{-}(E)<R_{+}(E)<\infty$.
\item[\textbf{(2)}]
The orbit $\mathcal{O}_E=\bigl\{\,(R,R')\in (0,\infty)\times \mathbb{R}:\,\tfrac12(R')^2 + U(R)=E\bigr\}$ is a simple \emph{closed} curve in the $(R,R')$ phase plane.
\item[\textbf{(3)}]
As the solution $R(\xi)$ traverses the orbit $\mathcal{O}_E$, it completes a \emph{periodic} oscillation in finite  $\xi$. 
\item[\textbf{(4)}]
The period is explicitly given by a convergent integral formula and is finite for every $E>U(R_*)$.  
\end{enumerate}

First, we discuss the positivity and finiteness of the turning points. Since 
\[
U(R)\;\xrightarrow{\,R\to 0^+\,}\;+\infty 
\quad\text{and}\quad
U(R)\;\xrightarrow{\,R\to +\infty\,}\;+\infty,
\]
the equation $U(R)=E$ for a fixed $E>U(R_*)$ admits two distinct solutions in $(0,\infty)$ (one smaller than $R_*$, one larger).  We label these points $R_{-}(E)$ and $R_{+}(E)$.  By energy conservation, when $R(\xi)$ decreases to $R_{-}(E)$ or increases to $R_{+}(E)$, we have $R'(\xi)=0$.  These points cannot merge if $E>U(R_*)$, since $U''(R)>0$ ensures $U(R)$ is strictly convex and thus crosses the horizontal line $y=E$ in \emph{exactly} the two points $R_{-}(E)$ and $R_{+}(E)$ that satisfy, therefore, $0 < R_{-}(E) < R_{+}(E) < \infty$ if $E>U(R_*)$.

Now, we show that the phase-plane trajectory is a simple closed loop. For this purpose, it suffices to write 
\[
\bigl(R'(\xi)\bigr)^2 
\;=\;
2\,\bigl(E-U(R(\xi))\bigr).
\]
Between $R_{-}(E)$ and $R_{+}(E)$, one has $U(R(\xi))\le E$ so the above is nonnegative.  As $R(\xi)$ moves from $R_{-}(E)$ to $R_{+}(E)$, $R'(\xi)$ is positive (strictly, except at turning points), so $R(\xi)$ increases.  Then from $R_{+}(E)$ back to $R_{-}(E)$, one has $R'(\xi)$ negative.  Continuity in $\xi$ closes the loop in the $(R,R')$ plane: once $R$ returns to $R_{-}(E)$ with $R'<0$, it must cross that point again with $R'>0$, completing a cycle.  Because $U(R)$ is single--valued and strictly convex, no self--intersections occur; the curve is simple and closed.

Thus, it is possible to assert the existence of a finite period. Indeed, a classical result in one--dimensional conservative mechanics (often called the \emph{Structure Theorem for One--Dimensional Hamiltonian Systems}, see Arnold, \emph{op.\ cit.}, Theorem~14.1) states that for each such closed energy curve, the corresponding solution $R(\xi)$ is \emph{periodic} in $\xi$.  Here is a direct ``integral" proof:

From the energy relation, we first solve for $R'(\xi)$:
\[
R'(\xi) 
\;  = \pm\;
\sqrt{2\,\bigl(E - U(R)\bigr)}
\]
depending on whether $R(\xi)$ is increasing or decreasing.

We consider an interval of motion from $R_{-}(E)$ (where $R'=0$) to $R_{+}(E)$ (where $R'=0$ again).  Without loss of generality, we take the branch $R'(\xi)\ge 0$.  Rearrange:
\[
\frac{dR}{\sqrt{2\,\bigl(E - U(R)\bigr)}}
\;=\;
d\xi.
\]
Hence, the time for $R(\xi)$ to move from $R_{-}(E)$ to $R_{+}(E)$ is
\begin{equation}
\label{eq:HalfPeriodIntegral}
T_1(E)
\;=\;
\int_{R_{-}(E)}^{R_{+}(E)}
\frac{dR}{\sqrt{2\,\bigl(E - U(R)\bigr)}}.
\end{equation}
Because $U(R)$ is smooth on the interval $[R_{-}(E),R_{+}(E)]$, and $E - U(R)>0$ in $(R_{-}(E),R_{+}(E))$, the integrand is finite except possibly at the turning points.  But near $R_{-}(E)$, we have $E - U(R)\sim (R-R_{-}(E))U'(R_{-}(E))$, and $U'(R_{-}(E))\neq 0$ (strict convexity), so the local singularity is integrable.  Thus $T_1(E)$ converges.

After reaching $R_{+}(E)$, the motion reverses direction ($R'(\xi)<0$) and returns to $R_{-}(E)$.  Another integral of the same form yields the return--time
\[
T_2(E)
\;=\;
\int_{R_{+}(E)}^{R_{-}(E)}
\frac{dR}{-\sqrt{2\,\bigl(E - U(R)\bigr)}}
\;=\;
\int_{R_{-}(E)}^{R_{+}(E)}
\frac{dR}{\sqrt{2\,\bigl(E - U(R)\bigr)}}
\;=\;
T_1(E).
\]
Hence the full \emph{period} of one complete oscillation is
\begin{equation}
\label{eq:FullPeriod}
T(E)
\;=\;
T_1(E) + T_2(E)
\;=\;
2 \int_{R_{-}(E)}^{R_{+}(E)}
\frac{dR}{\sqrt{2\,\bigl(E - U(R)\bigr)}},
\end{equation}
which is finite and positive for each $E>U(R_*)$.

\subsection{Limits of the Period and Behavior as $E\to U(R_*)$ or $E\to +\infty$}

If $E$ approaches $U(R_*)$ from above, then $R_{-}(E)\to R_* \to R_{+}(E)$, so the turning points both coalesce to $R_*$.  We may expand $U(R)$ in a Taylor series about $R_*$:
\[
U(R)
\;=\;
U(R_*)
\;+\;
\tfrac12\,U''(R_*)\,\bigl(R-R_*\bigr)^2 
\;+\;
\mathcal{O}\bigl((R-R_*)^3\bigr).
\]
By standard perturbation of the period integral \eqref{eq:FullPeriod}, we can show (following \cite[Section 14]{Arnold} once again) that 
\[
T(E)
\;\to\;
\frac{2\pi}{\sqrt{U''(R_*)}}
\quad\text{as }\,E\downarrow U(R_*).
\]
This is precisely the \emph{linearization frequency} for small oscillations around the equilibrium $R=R_*$.  

\smallskip 

Now we observe that the function $R(\xi)$ remains strictly between $0$ and $+\infty$ for each fixed $E$:  the turning points $R_{-}(E)$ and $R_{+}(E)$ remain finite (away from $0$ and $\infty$), the motion is confined to a 
compact interval $[\,R_{-}(E),\,R_{+}(E)\,] \subset (0,\infty)$, and the integral \eqref{eq:FullPeriod} remains finite.  In fact, $R_{-}(E)\to 0^+$ or $R_{+}(E)\to +\infty$ could never happen for a fixed, finite $E$, unless $U$ had a finite limit as $R_{-}(E)\to 0^+$ or $R_{+}(E)\to +\infty$; but $U(R)\to+\infty$ as $R\to 0^+$ or $R\to +\infty$, so those endpoints are effectively prohibited. 

However, this confinement is not uniform as $E\to +\infty$. Even more, we can check that for the potential $U(R)$ given by (\ref{ur}), the equation $U(R)=E$ implies that the lower turning 
point satisfies $R_-(E)\to 0^+$, while the upper turning point satisfies $R_+(E)\to +\infty$, as $E\to +\infty$:

\smallskip

\noindent  
If $K \subset(0,\infty)$ is any fixed compact set, then continuity of $U$ implies
that $U$ is bounded on $K$, that is, there exists $C_K>0$ such that $U(R)\le C_K$ for all $R\in K$. Therefore, if 
\[
[R_-(E),R_+(E)]\subset K,
\]
then in particular $E=U(R_-(E))\le C_K$ and $E=U(R_+(E))\le C_K$,  which is impossible for arbitrarily large $E$.

\noindent
Now we prove that $R_-(E)\to0^+$. Suppose that this is false. Then there exist
$\varepsilon_0>0$ and a sequence $E_n\to+\infty$ such that $R_-(E_n)\ge \varepsilon_0$ for all $n$.
Since $R_-(E_n)<R_*$ for all $n$, it follows that $R_-(E_n)\in[\varepsilon_0,R_*]$ for all $n$. Now, the interval $[\varepsilon_0,R_*]$ is compact and $U$ is continuous, 
and therefore there exists $M_0>0$ such that $U(R)\le M_0$ for all $R\in[\varepsilon_0,R_*]$. 
Therefore,
\[
E_n=U(R_-(E_n))\le M_0
\qquad\text{for all }n,
\]
which contradicts $E_n\to+\infty$. Thus, $R_-(E)\to0^+$.  The proof that $R_+(E)\to+\infty$ is analogous. 

\smallskip

This observation allows us to investigate the asymptotic behaviour of the period $T(E)$:

\begin{proposition}
We consider $U(R)$ given by $(\ref{ur})$, and $T(E)$ given by $(\ref{eq:FullPeriod})$. We assume that 
$R_-(E)<R_+(E)$ are the turning points determined by $U(R_\pm(E))=E$, and that $C \neq 0$, $v > 0$.
Then
\[
T(E)\longrightarrow 0
\qquad
\text{ as } \qquad E\to+\infty.
\]
\end{proposition}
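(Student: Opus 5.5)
We need to show $T(E)=2\int_{R_-}^{R_+} dR/\sqrt{2(E-U(R))}\to 0$ as $E\to\infty$. Since $U=\frac{R^4}{2v^2}+\frac{2C^2}{R^2}$ combines a confining quartic at large $R$ with a repulsive inverse-square term at small $R$, the plan is to split the integral at the minimum $R_*$ and bound each half by the period of a simpler comparison potential for which the period can be computed or estimated by scaling.

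On the outer branch $[R_*,R_+]$ we use $U(R)\ge \frac{R^4}{2v^2}$ only in combination with the exact relation $E=U(R_+)$. Instead of a crude comparison, we use convexity: since $U$ is convex and increasing on $[R_*,R_+]$ with $U(R_+)=E$, for $R\in[R_*,R_+]$ we have $E-U(R)\ge U'(R)(R_+-R)$? That inequality is in the wrong direction for convex functions, so the key step must instead use the mean value form $E-U(R)=U(R_+)-U(R)\ge \min_{[R,R_+]}U'\,(R_+-R)=U'(R)(R_+-R)$, which holds because $U'$ is increasing. Hence
\[
\int_{R_*}^{R_+}\frac{dR}{\sqrt{2(E-U)}}\le \int_{R_*}^{R_+}\frac{dR}{\sqrt{2U'(R)(R_+-R)}} .
\]
For this to be useful $U'(R)$ must stay bounded below, but it vanishes at $R_*$. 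We therefore split further at $R_+/2$. On $[R_+/2,R_+]$ we have $U'(R)\ge \frac{2(R_+/2)^3}{v^2}-\frac{4C^2}{(R_+/2)^3}\gtrsim R_+^3$, so that piece is $O\big(R_+^{-3/2}\sqrt{R_+}\big)=O(R_+^{-1})\to0$. On $[R_*,R_+/2]$ we instead use $E-U(R)\ge U(R_+)-U(R_+/2)\gtrsim R_+^4$, since $U(R_+/2)\approx U(R_+)/16$ for large $R_+$. This gives a contribution $\le R_+\cdot O(R_+^{-2})\to0$. Because $R_+(E)\to\infty$, as shown just before the proposition, the outer half tends to zero.

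The inner branch $[R_-,R_*]$ is treated symmetrically with the small scale $R_-\to0^+$. On $[R_-,2R_-]$, $|U'(R)|\ge \frac{4C^2}{(2R_-)^3}-\frac{2(2R_-)^3}{v^2}\gtrsim R_-^{-3}$, and the same mean value argument with $|U'|$ decreasing in $R$ there, so that the minimum of $|U'|$ on $[R_-,R]$ is at least $|U'(R)|$, gives a bound $O\big(R_-^{3/2}\sqrt{R_-}\big)=O(R_-^2)$. On $[2R_-,R_*]$, $E-U(R)\ge U(R_-)-U(2R_-)\gtrsim R_-^{-2}$, with $U(2R_-)\approx U(R_-)/4$, so this piece is $\le R_*\cdot O(R_-)\to0$. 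Adding the four pieces gives $T(E)\to0$.

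The main obstacle is making the asymptotic comparisons $U(R_+/2)\le cU(R_+)$ with $c<1$, and the analogue at $2R_-$, rigorous uniformly for large $E$. We would do this explicitly: for $R_+$ large the $C^2$ terms are $o(R_+^4)$, so $U(R_+)-U(R_+/2)\ge \frac{15}{32v^2}R_+^4-\frac{8C^2}{R_+^2}$, and similarly at the small end. This also shows that $R_+/2>R_*$ and $2R_-<R_*$ eventually.
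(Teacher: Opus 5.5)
Your argument is correct, but it follows a different route from the paper's.

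\textbf{The paper's route.} The paper rescales globally, $R=E^{1/4}s$, to get $T(E)=\sqrt2\,E^{-1/4}I_E$. It then bounds $I_E$ uniformly using the algebraic structure of $U$. With $x=s^2$, the turning-point equation becomes the cubic $x^3-2v^2x+2v^2\varepsilon_E=0$. Its factorization and Vi\`ete's relations give $x_-+\alpha_E>\sqrt{2/3}\,v$, which yields the explicit estimate $T(E)<\pi v\,(\sqrt{2/3}\,v)^{-1/2}E^{-1/4}$.

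\textbf{Your route.} You split $[R_-,R_+]$ into four pieces and use only elementary facts:
\begin{itemize}
\item the monotonicity of $U'$ (convexity), through the mean value theorem, near the turning points;
\item the monotonicity of $U$ on the two middle pieces;
\item the limits $R_-(E)\to0^+$ and $R_+(E)\to+\infty$ established just before the proposition.
\end{itemize}

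\textbf{What each buys.} Your approach needs no exact factorization. It therefore applies to other convex single-well potentials that blow up at the origin and grow faster than $R^2$ at infinity, not just this one. It also shows which piece dominates. The outer pieces are $O(R_+^{-1})=O(E^{-1/4})$, which recovers the paper's rate, while the inner pieces are only $O(R_-)=O(E^{-1/2})$. The paper's approach gives an explicit constant. Its bound also does not rely on first proving the limits of the turning points.

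\textbf{One correction.} You say the inequality $E-U(R)\ge U'(R)(R_+-R)$ ``is in the wrong direction for convex functions.'' That is mistaken: it is exactly the tangent-line inequality $U(R_+)\ge U(R)+U'(R)(R_+-R)$ for convex $U$. This does no harm, since your mean value argument proves the same inequality.
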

\begin{proof}
We set $\varepsilon_E=2C^2/E^{3/2}$ and we apply the scaling $R=E^{1/4}s$. Then, 
\[
U(R) = U(E^{1/4}s)
=
\frac{E s^4}{2v^2} + \frac{2C^2}{E^{1/2}s^2}\; ,
\]
and therefore 
\[
E-U(E^{1/4}s) = E\left(1-\frac{s^4}{2v^2}-\frac{\varepsilon_E}{s^2}\right)\; .
\]
Thus, 
\[
T(E) = 2\int_{R_-(E)}^{R_+(E)} \frac{dR}{\sqrt{2(E-U(R))}} = \sqrt{2}\,E^{-1/4} \int_{s_-(E)}^{s_+(E)} \frac{ds}{ \sqrt{1-\dfrac{s^4}{2v^2}-\dfrac{\varepsilon_E}{s^2}} } \; ,
\]
in which the new integration limits are
\[
s_\pm(E)=\frac{R_\pm(E)}{E^{1/4}}\; .
\]
Thus, it is enough to prove that the integral
\[
I_E= \int_{s_-(E)}^{s_+(E)} \frac{ds}{ \sqrt{1-\dfrac{s^4}{2v^2}-\dfrac{\varepsilon_E}{s^2}} }
\]
remains bounded as $E\to+\infty$. We let $x=s^2$ and we consider the equation
\begin{equation} \label{sx}
1-\frac{x^2}{2v^2}-\frac{\varepsilon_E}{x}=0\; ,
\end{equation}
that is, $x^3- 2v^2 x+ 2v^2 \varepsilon_E=0$. 
An elementary analysis implies that the function $f(x) =x^3- 2 v^2 x+ 2 v^2 \varepsilon_E$ has exactly two critical points, $x=\pm\sqrt{2/3} v $, and that 
 for all sufficiently large $E$, the equation $f(x)=0$ has exactly 
one negative root and two positive roots. We denote the positive roots by $0<x_- < x_+ $, and the negative one by $-\alpha_E$, $\alpha_E>0$. Hence, 
\[
2v^2 x-x^{3}- 2v^2 \varepsilon_E = (x-x_- )(x_+-x)(x+\alpha_E).
\]
Returning to the variable $s$, we obtain 
\[
1-\frac{s^4}{2v^2}-\frac{\varepsilon_E}{s^2} = \frac{(s^2-x_-)(x_+-s^2)(s^2+\alpha_E)}{2v^2 \,s^2} 
\]
and therefore
\[
I_E = \int_{s_-(E)}^{s_+(E)} \frac{\sqrt 2 v \,s\,ds} {\sqrt{(s^{2}-x_-)(x_+-s^{2})(s^{2}+\alpha_E)}}\; .
\]
Now, the endpoints $s_\pm(E)$ solve the equation 
\[
1-\frac{s^{4}}{2v^2}-\frac{\varepsilon_E}{s^{2}}=0\; ,
\]
and therefore their images $x=s_\pm(E)^2$ satisfy (\ref{sx}). Thus, they must be either $x_-$, or $x_+$. 
Since $s_-(E)<s_+(E)$ and the map $s\mapsto s^2$ is strictly increasing on $(0,\infty)$, we must have that
\[
s_-(E)^2=x_-(E),\qquad s_+(E)^2=x_+(E)\; .
\]
Thus, we can write the integral $I_E$ as  
\[
I_E = \frac{\sqrt 2 v}{2} \int_{x_-}^{x_+} \frac{dx} {\sqrt{(x-x_- )(x_+-x)(x+\alpha_E)}} \; ,
\]
and we find the bound 
\[
I_E
\le
\frac{\sqrt 2 v}{2\sqrt{x_-+\alpha_E}} \int_{x_-}^{x_+}\frac{dx}{\sqrt{(x-x_-)(x_+-x)}} \le \frac{\pi\sqrt 2\,v}{2\sqrt{x_-+\alpha_E}}\; .
\]
Now we use Viete's relations for the cubic (\ref{sx}). We obtain 
\begin{align}
x_-+x_+-\alpha_E&=0\; , \qquad  
x_-x_+-\alpha_E(x_-+x_+) \; = \; -2v^2\; , \qquad 
\alpha_E x_-x_+ \; =\; 2v^2\varepsilon_E\; . \label{eq:viete-two}
\end{align}
Equations \eqref{eq:viete-two} imply $\alpha_E=x_-+x_+$ and $x_-^2+x_-x_++x_+^2=2v^2$. Since $0<x_- <x_+$, it follows that
$2v^2 < 3x_+^2$  and hence
\[
x_+(E)>\sqrt{\frac{2}{3}}\,v.
\]
Consequently,
\[
x_-(E)+\alpha_E
=
2x_-(E)+x_+(E)
>
\sqrt{\frac{2}{3}}\,v\; ,
\]
and so we have 
\[
0<T(E)=\sqrt{2}\,E^{-1/4}I_E
\le
\sqrt{2}\,E^{-1/4}\frac{\pi\sqrt 2 v}{2\sqrt{x_-+\alpha_E}} = \frac{\pi v}{ \sqrt{x_-+\alpha_E}} E^{-1/4} < \frac{\pi v}{ \sqrt{\sqrt{2/3}\,v}} E^{-1/4} \; .
\]
It follows that $T(E)\to 0$ as $E \rightarrow +\infty$. 
\end{proof}

\subsection{Summary}
\noindent Summarizing the above steps, we have:
\begin{itemize}
\item
{\em Every} energy level $E>U(R_\ast)$ generates a unique closed orbit $\mathcal{O}_E$ in the $(R,R')$ phase plane, traced out by $R(\xi)$ in finite time. This orbit is strictly contained in $\{R>0\}$. 
\item
\emph{Every} non--equilibrium solution $R(\xi)$  with energy $E>U(R_*)$ is \emph{periodic}, oscillating between two finite turning points $R_{-}(E)<R_{+}(E)$ in a finite period $T(E)$ given by \eqref{eq:FullPeriod}.
\item
For each fixed energy $E>U(R_*)$, the orbit remains in the compact interval $[R_-(E),R_+(E)] \subset (0,\infty)$, but 
this confinement is not uniform: $R_-(E)\to 0^+$ and $R_+(E)\to +\infty$ as $E\to +\infty$.  
\item
As $E\downarrow U(R_*)$, these periodic orbits collapse onto the \emph{equilibrium} at $(R_*,0)$, and the period converges to $2\pi / \sqrt{U''(R_*)}$: this equilibrium is a \emph{center} with stable small--amplitude oscillations.
\item 
The phase portrait consists of: {\em (a)}  A \emph{unique} center equilibrium $(R_*,0)$ with $E=U(R_*)$. {\em (b)}  A nested family of \emph{closed} periodic orbits surrounding it for $E>U(R_*)$. 
{\em (c)} There are  no homoclinic/heteroclinic loops to other equilibria.
\end{itemize}

\section{Elliptic Integral Representation of Closed Orbits and Return to $(x,t)$ Variables}

Section 5 established the qualitative phase-plane picture for a subclass of travelling waves of the generalized Konno--Oono system, including the existence of periodic orbits for $E>U(R_*)$. 
We now describe the travelling--wave 
solutions corresponding to the periodic closed trajectories in the $(R,R')$ phase plane, and we express them through elliptic quadratures and Jacobi functions. We then revert to the original 
space--time variables $(x,t)$ and discuss their interpretation.

Let us fix an energy level $E>U(R_*)$ and, as in the previous section, we let $0 \;<\; R_{-}(E)  \;<\; R_{+}(E) \;<\; \infty$ 
be the two \emph{turning points} satisfying $U\bigl(R_{\pm}(E)\bigr) = E$, so that $R'(\xi)=0$ there. We recall that:
\begin{equation}
\label{eq:OrbitEquation}
R'(\xi)
\;=\;
\pm\sqrt{2\,\Bigl(E - U\bigl(R(\xi)\bigr)\Bigr)},
\end{equation}
the ``plus'' sign applying when $R(\xi)$ is increasing in $\xi$, and the ``minus'' sign when $R(\xi)$ is decreasing. 
Separate variables in \eqref{eq:OrbitEquation}:
\[
\frac{dR}{\,\sqrt{2\,\bigl(E - U(R)\bigr)}\,}
\;=\;
d\xi.
\]
In order to find an explicit formula, we integrate from one turning point up to a general $R(\xi)$:
\[
\xi - \xi_0
\;=\;
\int_{R_{-}}^{\,R(\xi)}
\!\!
\frac{dR}{\,\sqrt{\,2\bigl(E - \frac{R^4}{2\,v^2} \;-\; \frac{2\,C^2}{R^2}\bigr)}\,},
\]
where $\xi_0$ is an integration constant that we choose so that $R(\xi_0) = R_-$. We invert this integral as follows.
The energy equation can be written as 
\[
R'(\xi)^2 = 2E-\frac{R(\xi)^4}{v^2} -\frac{4C^2}{R(\xi)^2}
\]
and so we find, in terms of $y(\xi)=R(\xi)^2$, 
\begin{align}
y'(\xi)^2 = -\frac{4}{v^2}y(\xi)^3 +8E\,y(\xi) -16C^2\; .
\label{eq:y-Weierstrass}
\end{align}
If  $z(\xi)= - y(\xi)/v^2$,  equation \eqref{eq:y-Weierstrass} becomes
\begin{equation}
z'(\xi)^2 = 4z(\xi)^3-g_2z(\xi)-g_3 
\label{eq:Weierstrass-equation}
\end{equation}
with $g_2={8E}/v^2$ and $g_3={16C^2}/v^4$.  For $E>U(R_*)$, the discriminant of the cubic in (\ref{eq:Weierstrass-equation}) is
\begin{align*}
g_2^3-27g_3^2
&=
\frac{512E^3}{v^6}
-\frac{6912C^4}{v^8} =
\frac{512}{v^8}
\left(
E^3v^2-\frac{27}{2}C^4
\right)
>0,
\end{align*}
since $U(R_*)^3=27C^4/(2v^2)$.  Thus, it has three distinct real roots. We write
\[
4z^3-g_2z-g_3 = 4(z-e_1)(z-e_2)(z-e_3), \qquad e_1>e_2>e_3.
\]
Viete's relations imply that  $e_1>0>e_2>e_3$.  (Indeed, these relations yield
$e_1+e_2+e_3=0$, $e_1e_2+e_1e_3+e_2e_3=-g_2/4$, and $e_1e_2e_3=g_3/4$. Since $g_3>0$, the product $e_1e_2e_3$ is positive. But 
$e_1+e_2+e_3=0$, and so one root must be positive and the other two must be negative). 
We compute these roots: from subsection 5.2 we have that the turning points $R_-(E)$ and $R_+(E)$ are defined by $U(R_\pm)=E$,  $0<R_-<R_+$, 
and therefore $R_-^2$ and $R_+^2$ are the two positive roots of the cubic equation 
\[
y^3-2Ev^2y+4C^2v^2=0.
\]
It follows that  $e_2=- {R_-^2}/{v^2}$ and  $e_3=-{R_+^2}/{v^2}$.
If we set
\[
k^2=\frac{e_2-e_3}{e_1-e_3}\in(0,1),
\]
then the bounded real solution of Equation \eqref{eq:Weierstrass-equation} is
\[
z(\xi)
=
e_3+(e_2-e_3)
\operatorname{sn}^2
\left(
\sqrt{e_1-e_3}\,(\xi-\xi_0)+\mathbf K(k);k
\right),
\]
where \(\mathbf K(k)\) denotes the complete elliptic integral of the first kind. Thus, 
\begin{equation}   \label{eq:R-Jacobi}
R(\xi)^2
= - v^2 z(\xi) =
R_+^2 - \bigl(R_+^2-R_-^2\bigr) \operatorname{sn}^2 \left( \sqrt{e_1-e_3}\,(\xi-\xi_0)+\mathbf K(k);k \right)\; .
\end{equation}

\smallskip

Now we recall from the beginning of Section 5 that the third variable $\Delta(\xi)$ satisfies the equation
\begin{equation}
\label{eq:DeltaIntegration}
\Delta(\xi)
\;=\;
\Delta_0
\;+\;
C\,
\int_0^\xi
\frac{d\eta}{\,R(\eta)^2\,}.
\end{equation}
Because $R(\eta)$ is a bounded, strictly positive function, the integrand $\frac{1}{R(\eta)^2}$ is continuous and nonzero.  
We can convert \eqref{eq:DeltaIntegration} into a further elliptic integral by changing variables from 
$\eta$ to $R$, on each monotone branch, using \eqref{eq:OrbitEquation}.  Specifically, on each monotone branch, up to the sign, 
\[
\int 
\frac{d\eta}{R(\eta)^2}
\;=\;
\int 
\frac{dR}{R^2(\eta)}\,\frac{d\eta}{dR}
\;=\;
\int
\frac{dR}{R^2\,R'(\eta)}
\;=\;
\int
\frac{dR}{\,R^2\,
\sqrt{\,2\,\bigl(E - U(R)\bigr)}\,} \; .
\]
This is an elliptic integral, but it does not determine a periodic function: the function $\Delta(\xi)$ is not periodic when $C\neq 0$ because $\displaystyle \int_0^{T} \frac{d\eta}{R(\eta)^2} > 0$ 
over one oscillation period $T$. In fact, 
\[
\Delta(\xi+T)-\Delta(\xi)
=
C\int_\xi^{\xi+T}\frac{d\eta}{R(\eta)^2}
\neq 0
\qquad \text{if } C\neq 0.
\]
Hence $\Delta$ exhibits a nontrivial drift over each period of $R$.

\smallskip

Having determined $(R(\xi),Q(\xi),\Delta(\xi))$ in travelling--wave coordinates $\xi = x - v\,t$, we may write the full solutions of the PDE system explicitly:
\[
r(x,t)
\;=\;
R(x - v\,t),
\quad
q(x,t)
\;=\;
\frac{R\bigl(x - v\,t\bigr)^{2}}{v},
\quad
\delta(x,t)
\;=\;
\Delta\bigl(x - v\,t\bigr).
\]
Here $R(\cdot)$ is a smooth \emph{periodic} function of its single argument (with period $T(E)$ in $\xi$), while $\Delta(\cdot)$ is given by Equation (\ref{eq:DeltaIntegration}). 
Such  solutions fill out a continuous family parameterized by the energy $E>U(R_*)$ and the integration constants $(C \neq 0,\Delta_0,\xi_0)$. We have:
\begin{itemize}
\item 
\textit{Oscillatory travelling wave in $r(x,t)$.}  
Because $R(\xi+T)=R(\xi)$ for all $\xi$, the wave variable $r(x,t)= R(x- v t)$ is ``double-periodic'' in $(x,t)$ in the following sense: as $\xi$ varies by $T$, $r$ completes one oscillation.  
Equivalently, at any fixed time $t$, $r(x,t)$ is 
spatially periodic in $x$ of period $T$ (viewed in the travelling frame) and for each fixed $x$, it is periodic in $t$ with period $T/v$.
\item 
\textit{Corresponding $q(x,t)$.}  
We obtain $q(x,t)=R^2(x-vt)/v$, and so $q$ shares with $R$ the same period $T$ in $\xi$.  The function $q(\xi)$ is strictly positive since $R(\xi)>0$.
\item 
\textit{Phase drift of $ \delta(x,t)$.}  
The function $\Delta(\xi)$ produces a continuous drift in the ``phase'' variable $\delta(x,t)=\Delta(x-vt)$ as each cycle completes. This field is neither periodic nor bounded when $C\neq 0$.
This behavior is consistent with the conservative character of the system and the absence of any relaxation or damping mechanism.
\end{itemize}

We finish this section with the following geometric remarks:

\begin{itemize}
\item
\textbf{On intrinsic and mean curvature.}
For the subclass considered here, $C_0=0$ and hence \eqref{twgc} reduces to
\begin{equation}
K(\xi) = 1-\frac{2v^2C^2}{R(\xi)^6}\; . \label{Rstar0}
\end{equation}
On the other hand, the unique equilibrium point of the reduced potential satisfies $R_*^6=2C^2v^2$, and therefore,
\begin{equation}
K(\xi) = 1-\frac{R_*^6}{R(\xi)^6}\; .
\label{eq:curvature-Rstar}
\end{equation}

Let $E>U(R_*)$, and let $R_-(E)$ and $R_+(E)$ be the two turning points of the corresponding non-equilibrium periodic orbit.
Since $R_*$ is the unique minimum point of $U$, we have that $R_-(E)<R_*<R_+(E)$, see subsection 5.1.  It follows from \eqref{eq:curvature-Rstar} that
\[
K\bigl(R_-(E)\bigr)<0\; ,
\qquad
K(R_*)=0\; ,
\qquad
K\bigl(R_+(E)\bigr)>0\; .
\]
Thus, every non-equilibrium periodic orbit in the subclass $C_0=0$, $C\neq0$, determines an immersed surface whose Gaussian curvature changes sign periodically. More precisely, since
$R(\xi)$ crosses the value $R_*$ once on each monotone branch of the periodic oscillation, $K(\xi)$ vanishes exactly twice during each period of $R(\xi)$.

The mean curvature has an analogous periodic dependence on $\xi$. Indeed, for the subclass $C_0=0$, $C\neq0$, we have $Q(\xi)=R(\xi)^2/v$ and $\Delta'(\xi)=C/R(\xi)^2$, as
discussed at the beginning of Section 5, and therefore (\ref{curvatures}) yields
\begin{align*}
H(\xi)
&=
\frac{1}{R(\xi)Q(\xi)}
\left(
R(\xi)^2\Delta'(\xi)
+\frac{1}{4\lambda^2}\Delta'(\xi)
-\frac{v}{2}Q(\xi)\Delta'(\xi)
\right)
\\
&=
\frac{Cv}{R(\xi)^3}
\left(
\frac12+\frac{1}{4\lambda^2R(\xi)^2}
\right)\; . 
\end{align*}
Since $R(\xi)$ is periodic, so is $H(\xi)$. Moreover, for fixed $\lambda\neq0$ and $C\neq0$, the mean curvature does not vanish
and has the same sign as $Cv$. 
\item
\textbf{The small-amplitude limit.}
As \(E\downarrow U(R_*)\), the two turning points coalesce and the periodic travelling waves converge to the constant-amplitude solution
$R(\xi)\equiv R_*$, $R_*^6=2C^2v^2$. Equation (\ref{curvatures}) and the foregoing analysis imply that the Gaussian curvature at the equilibrium $R(\xi)\equiv R_*$ is
\[
K =
1-\frac{2C^2v^2}{R(\xi)^6} = 1-\frac{2C^2v^2}{R_*^6} = 0\; .
\]
On the other hand, the mean curvature becomes 
\[
H = \frac{Cv}{R_*^3} \left( \frac12+\frac{1}{4\lambda^2R_*^2} \right) \neq0\; .
\]
Thus, the limiting immersed surface is intrinsically flat and has nonzero constant mean curvature.

Interestingly, we note that if $k_1$ and $k_2$ denote the principal curvatures, then the Gauss equation in $S^3$ gives $K=1+k_1k_2$, and so $k_1k_2=-1$.
Moreover, since $H_*=(k_1+k_2)/2$, we also have $k_1+k_2=2H_*$. 
Hence, both principal curvatures are constant and distinct, and so the limiting surface is isoparametric in $S^3$. Thus, it must be locally congruent to
(an open subset of) a generalized Clifford torus $S^1(a)\times S^1(b)\subset S^3$, $a^2+b^2=1$.

\end{itemize}

\section{Numerical reconstruction of travelling waves}\label{sec:numerics}
We sketch a numerical scheme to compute the travelling--wave solutions of the generalized Konno--Oono system considered in Sections 5 and 6. 
We illustrate the oscillatory behavior of the function $R(\xi)$ and the monotonic behavior of $\Delta(\xi)$, as well as the corresponding reconstructed solutions 
$\bigl(r(x,t),\,q(x,t),\,\delta(x,t)\bigr)$ in physical space–time $(x,t)$.  

\subsection{Governing Equations and Setup}
The travelling--wave ansatz reduces the system \eqref{eq:1}-\eqref{eq:3} to equations  \eqref{eq:1'}-\eqref{eq:3'}.  We consider the subclass of travelling waves determined by
\[
R''(\xi)
\;=\;
-\;\frac{dU}{dR}\bigl(R(\xi)\bigr)\; ,
\qquad
U(R)
\;=\;
\frac{R^4}{2v^2}
\;+\;
\frac{2\,C^2}{R^2}\; ,
\qquad
\Delta'(\xi) \;=\;\frac{C}{\,R(\xi)^2\,} \; ,
\]
in agreement with Sections 5 and 6. The integration constants $v>0$ (wave speed) and $C\neq0$ determine the amplitude and phase‐drift behavior of the travelling wave.

\subsection{Implementation}
We employ \texttt{Python} and \texttt{scipy.integrate.solve\_ivp} to integrate the ODE system numerically.  An event detection mechanism captures the turning points of $R(\xi)$, 
enabling the solution to trace out multiple full periods:
\begin{enumerate}
  \item We bisect to find the turning points $R_{-}$ and $R_{+}$ satisfying $U(R_{\pm})=E$, with $E$ the total conserved energy.
  \item We integrate forward from $R_-$ to $R_+$, stopping precisely upon reaching $R_+$ (a zero of $R(\xi)-R_+$).
  \item We integrate backward from $R_+$ back to $R_-$, collecting a complete oscillation of $R(\xi)$.
  \item We repeat for a number of oscillations ($\texttt{num\_periods}$).
\end{enumerate}
Additionally, we map the travelling--wave solution back to $(x,t)$ by discrete sampling of $\xi = x - vt$, thus obtaining $\bigl(r(x,t),\,q(x,t),\,\delta(x,t)\bigr)$ on a chosen spatial and temporal domain.

\subsection{Numerical Profiles}

\begin{figure}[htbp]
\centering
\includegraphics[width=0.87\textwidth]{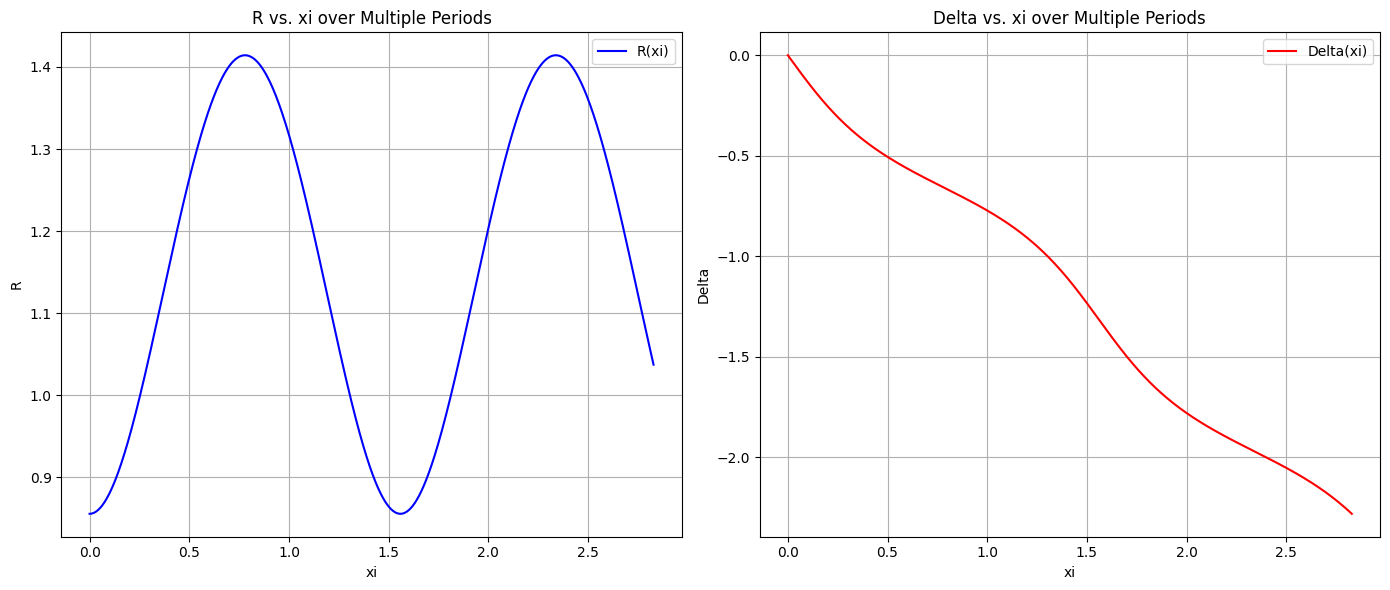}
\caption{(Left) The amplitude function $R(\xi)$ over one or more oscillation periods. (Right) The phase function $\Delta(\xi)$, 
which increases or decreases monotonically depending on the sign of $C$. In this example, $C<0$ leads to a strictly decreasing $\Delta(\xi)$.}
\label{fig:R_and_Delta}
\end{figure}

Figure~\ref{fig:R_and_Delta} shows the output of a single forward--backward integration (one full oscillation); we use $C=-1$ and $v=1$.  
The function $R(\xi)$ oscillates periodically between its turning points, while $\Delta(\xi)$ 
exhibits a drift. This figure also allows us to check the periodic sign change of the Gaussian curvature $K = 1 - 2/R^6$ given by (\ref{Rstar0}), from $K= \, \sim \!0.76$ to $K= \, \sim \!-4.3$. 


\begin{figure}[htbp]
\centering
\includegraphics[width=0.37\textwidth]{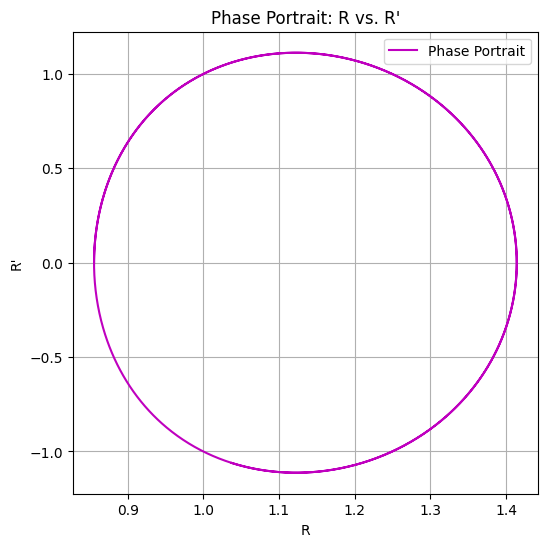}
\hfill
\includegraphics[width=0.5\textwidth]{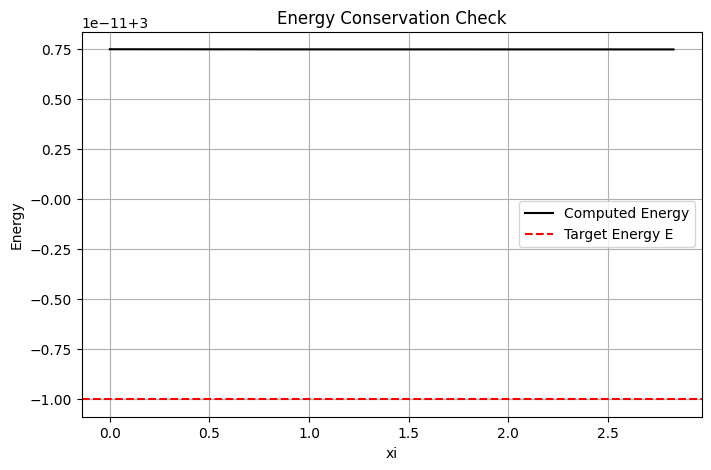}
\caption{(Left) Phase portrait $(R,R')$ illustrating a closed orbit characteristic of a conservative, single--well potential. (Right) Energy conservation check: the total energy $\tfrac12\,R'^2 + U(R)$ remains constant (horizontal dotted line) within numerical tolerances.}
\label{fig:PhasePortrait_Energy}
\end{figure}

In Figure~\ref{fig:PhasePortrait_Energy}, we illustrate two key characteristics of our system:  A closed loop in the $(R,R')$ plane confirming the periodic nature of $R(\xi)$ {\em (Phase Portrait)}; 
The numerically computed energy $\frac12\,R'^2 + U(R)$ remains close to the specified constant $E$, showing absence of numerical drift over multiple periods {\em (Energy Conservation)}. 

\begin{figure}[htbp]
\centering
\includegraphics[width=0.47\textwidth]{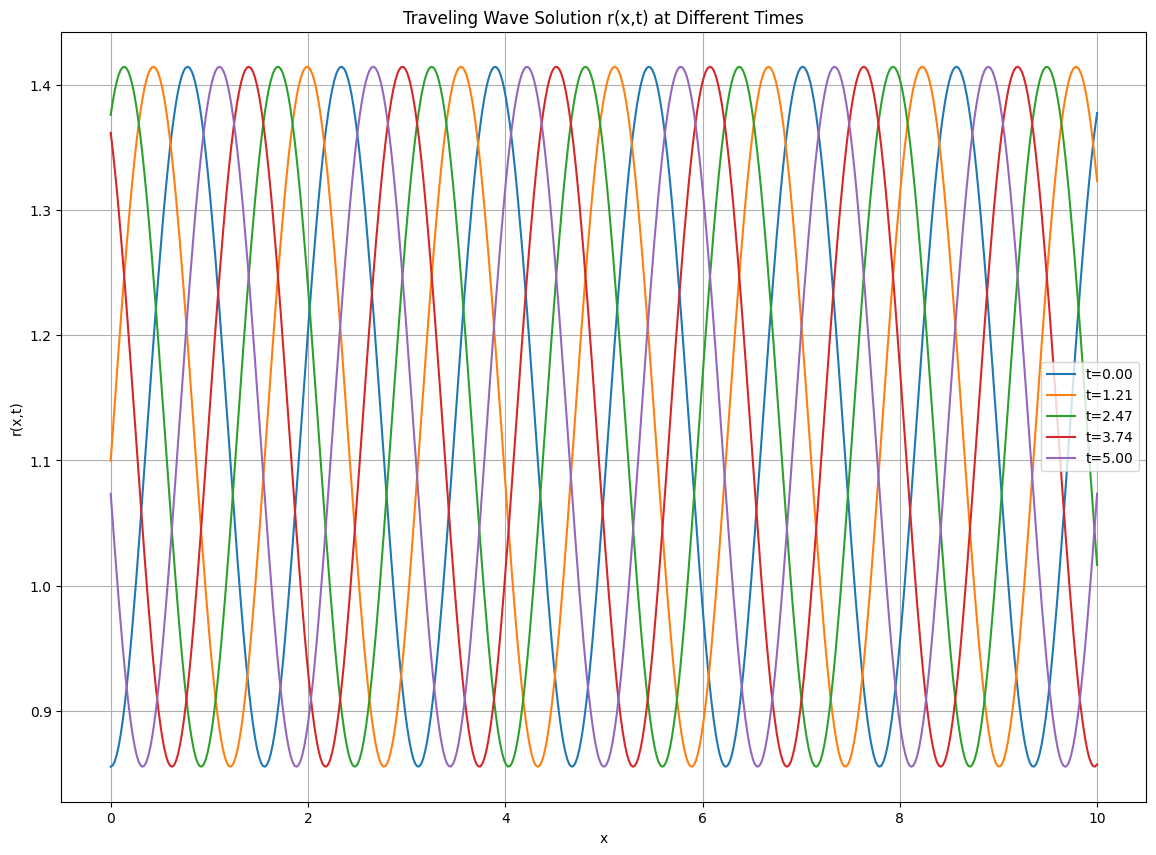}
\hfill
\includegraphics[width=0.47\textwidth]{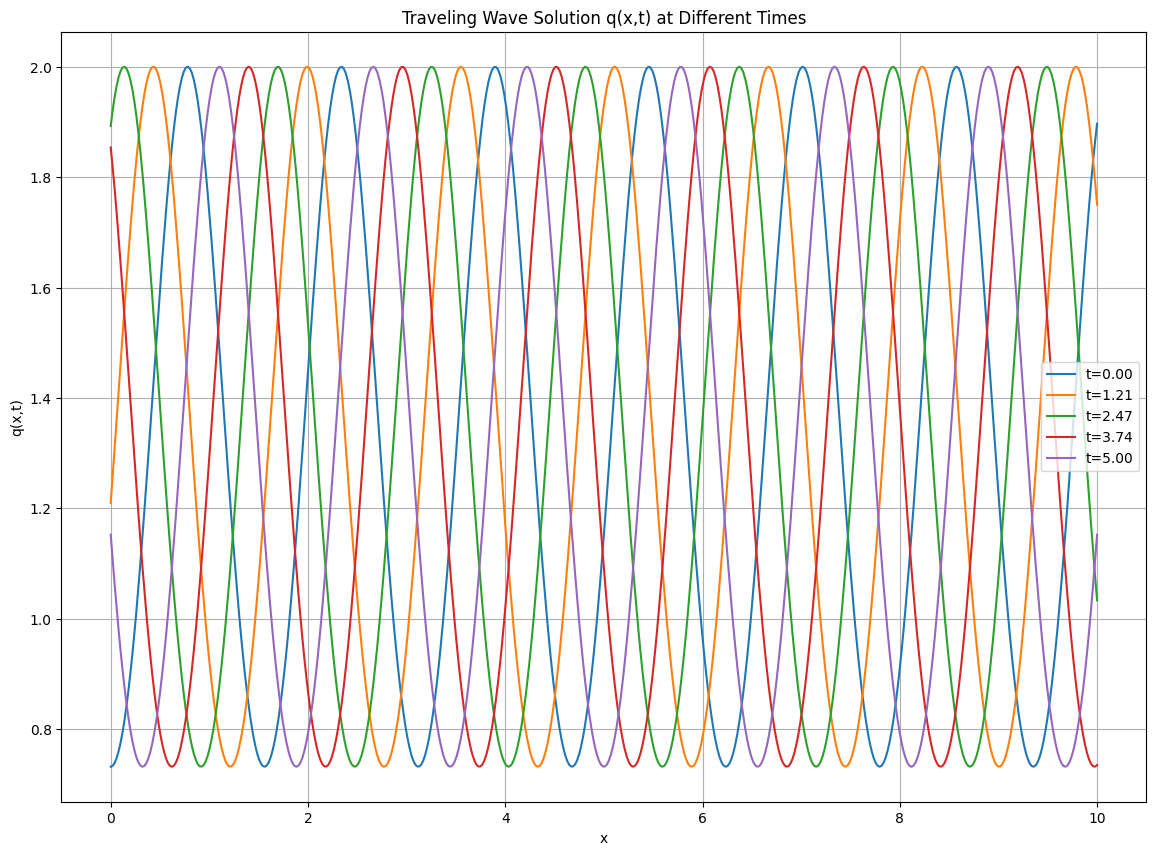}
\caption{Reconstructed travelling waves in original variables. (Left) The field $r(x,t)$. (Right) The field $q(x,t)=R^2(\xi)/v$. Both fields show spatially oscillatory patterns translated by speed $v$.}
\label{fig:r_q_xt}
\end{figure}

Finally, in Figure~\ref{fig:r_q_xt} we present a \emph{reconstruction} of the travelling--wave solution in $(x,t)$:
\[
r(x,t)=R(x - vt), 
\quad
q(x,t)=\frac{R(x - vt)^2}{v}.
\]
At each $(x,t)$, we evaluate $\xi\coloneqq x - vt$ and use $\bigl(R(\xi),\,\Delta(\xi)\bigr)$ as obtained from the ODE integration.  We observe a \emph{periodic wave} in space that advances rigidly with velocity $v$.  

\subsection{Phase Drift Observation}
As explained in Section 6, the quantity $\Delta(\xi)$ is strictly monotonic and accumulates a net increment $\Delta(\xi+T) - \Delta(\xi) \;\neq\;0$, 
producing a constant phase drift. We present two illustrations.

\begin{figure}[htbp]
\centering
\includegraphics[width=0.47\textwidth]{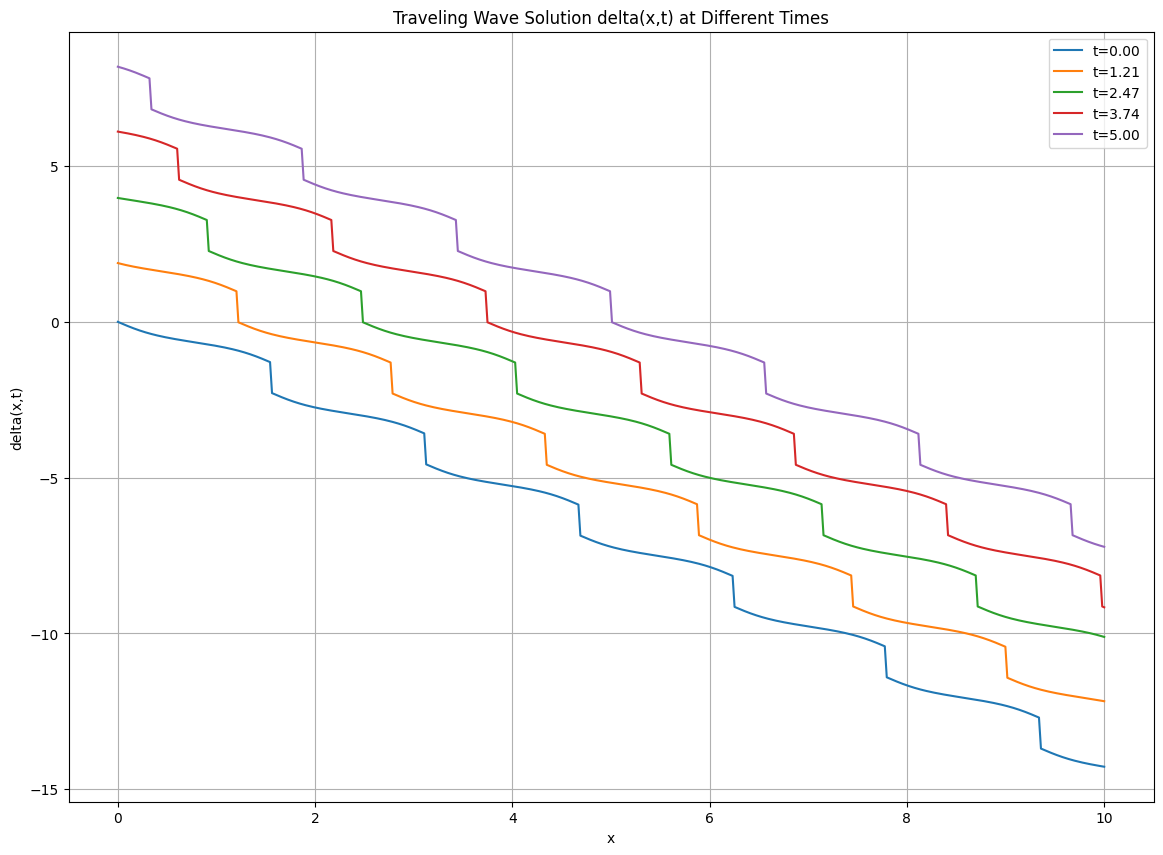}
\caption{Space--time reconstruction of $\delta(x,t)$ with $C<0$.  Since $\Delta'(\xi)=C/R^2(\xi)$, the field $\delta$ shows a monotonic decay.} 
\label{fig:delta_xt}
\end{figure}

\begin{figure}[htbp]
\centering
\includegraphics[width=0.47\textwidth]{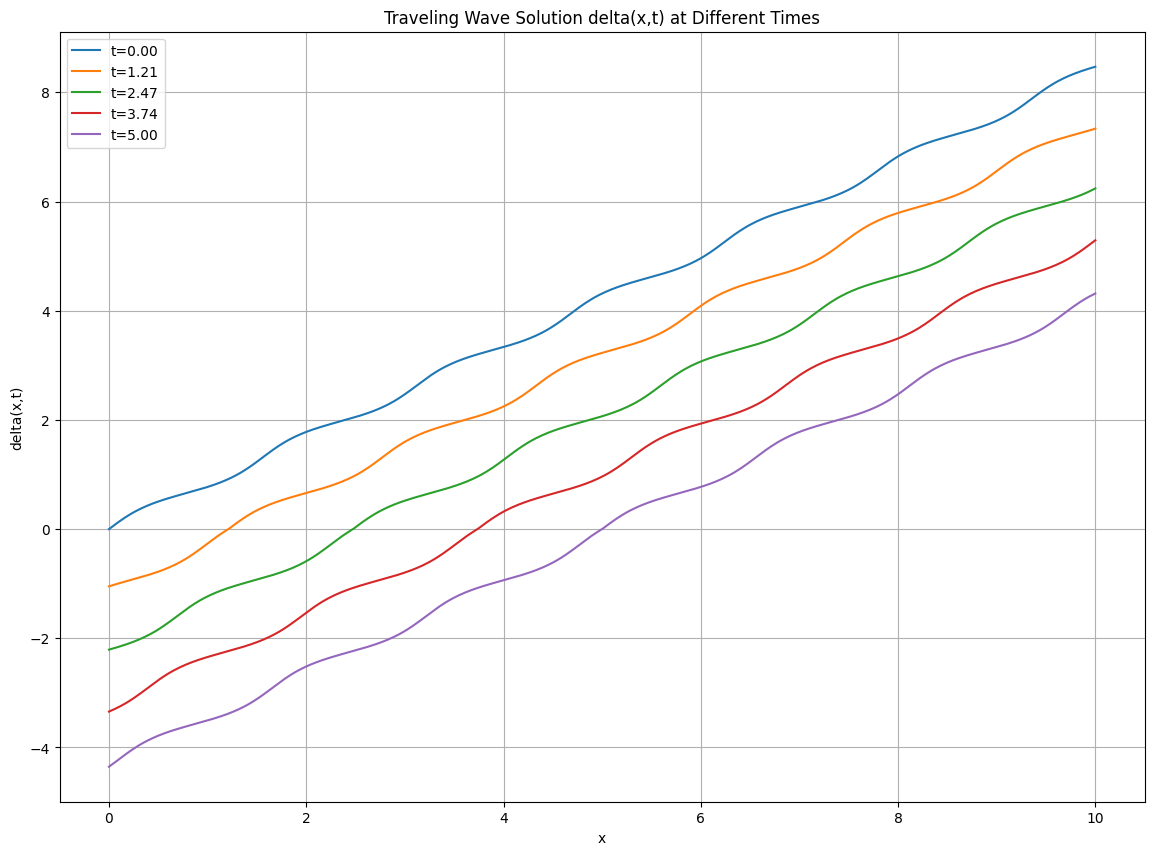}
\caption{Space--time reconstruction of $\delta(x,t)$ with $C>0$.  Since $\Delta'(\xi)=C/R^2(\xi)$, the field $\delta$ shows a monotonic growth.}
\label{fig:delta_xt1}
\end{figure}



\newpage

\paragraph{Acknowledgments:}
E.G.R.'s work is partially supported by the research project FONDECYT \# 1241719.

\end{document}